\documentclass{article}
\usepackage{amsthm}
\usepackage{amsfonts}
\usepackage{epsfig}
\usepackage[dvips]{color}
\usepackage[usenames,dvipsnames]{xcolor}
\def\GG {{\cal G}}

\def\SS {{\cal S}}
\def\NN {{\mathbb N}}
\def\RR {{\mathbb R}}

\newtheorem{theorem}{Theorem}
\newtheorem{lemma}[theorem]{Lemma}
\newtheorem{proposition}[theorem]{Proposition}
\newtheorem{corollary}[theorem]{Corollary}
\newcount\claimno
\claimno=0

\def\newclaim#1#2{
   \global\advance\claimno by 1\relax
   \bigskip\noindent\rlap{\rm(\the\claimno)}\ignorespaces
   \global\expandafter\edef\csname CLAIMLABEL#1\endcsname{\the\claimno}\relax
   \hangindent=33pt\hskip30pt{\sl#2}\bigskip\relax}
\def\refclaim#1{\csname CLAIMLABEL#1\endcsname}

\def\rt#1{{\color{red}#1}}
\def\rt#1{#1}
\def\rtt#1{{\color{green}#1}}
\def\rtt#1{#1}
\def\zz#1{{\color{red}#1}}
\def\zz#1{#1}
\def\junk#1{}

\begin{document}
\date{}
\newcounter{lth}
\setcounter{lth}{2}
\title{Testing first-order properties for subclasses 
of sparse graphs%
\thanks{A preliminary version of this paper appeared in FOCS 2010.}}
\author{Zden\v{e}k Dvo\v r\'ak\thanks{Computer Science Institute, Faculty of Mathematics and Physics, Charles University, Prague, Czech Republic. E-mail: {\tt rakdver@iuuk.mff.cuni.cz}. This author was supported by project 1M0545 of Ministry of~Education of~Czech Republic.}
\and
Daniel Kr\'al'\thanks{Mathematics Institute, DIMAP and Department of Computer Science, University of Warwick, Coventry CV4 7AL. E-mail: {\tt D.Kral@warwick.ac.uk}. Previous affiliation: Computer Science Institute, Faculty of Mathematics and Physics, Charles University, Prague, Czech Republic. Part of the work leading to this invention has received funding from the European Research Council under the European Union's Seventh Framework Programme (FP7/2007-2013)/ERC grant agreement no.~259385.}
\and
Robin Thomas\thanks{School of Mathematics, Georgia Institute of Technology, 
Atlanta, GA. Partially supported by NSF under No.~DMS-0739366.}
}

\maketitle

\begin{abstract}
We present a linear-time algorithm for deciding first-order (FO)
properties in classes of graphs with bounded expansion,
a notion recently introduced by Ne\v{s}et\v{r}il and Ossona de Mendez.
\rt{This generalizes several results from the literature, because}
many natural classes of graphs have bounded expansion:
graphs of bounded tree-width,
all proper minor-closed classes of graphs, graphs of bounded degree,
graphs with no subgraph isomorphic to a subdivision of a fixed graph,
and graphs that can be drawn in a fixed surface in such a way that
each edge crosses at most a constant number of other edges.
We deduce that there is an almost linear-time algorithm for deciding
FO properties in classes of graphs with locally bounded expansion.

More generally, we design a dynamic data structure for graphs belonging
to a fixed class of graphs of bounded expansion. 
After a linear-time initialization the data structure 
allows us to test an FO property in constant time, and
the data structure can be updated in constant time after
addition/deletion of an edge, provided the list of possible edges
to be added is known in advance and
their simultaneous addition results in a graph in the class.
All our results also hold for relational structures and
are based on the seminal result of Ne{\v s}et{\v r}il and
Ossona de Mendez on the existence of low tree-depth colorings.
\end{abstract}

\section{Introduction}

A celebrated theorem of Courcelle~\cite{courcelle} states that for every
integer $k\ge1$ and every property $\Pi$ definable in monadic
second-order logic (MSOL) there is a linear-time algorithm to
decide whether a graph of tree-width at most $k$ satisfies $\Pi$.
While the theorem itself is probably not useful in practice because
of the large constants involved, it does provide an easily verifiable
condition that a certain problem is (in theory) efficiently solvable.
Courcelle's result led to the development of a whole new area of algorithmics,
known as algorithmic meta-theorems; 
see the \rt{surveys~\cite{grokremeth,kreutzer-survey}}.
For specific problems there is very often a more efficient implementation,
for instance following the axiomatic approach of~\cite{GraphMinors13}.

While the class of graphs of tree-width at most $k$ is fairly large,
it does not include some important graph classes, such as planar graphs
or graphs of bounded degree.
Courcelle's theorem cannot be extended to these classes unless P=NP,
because testing $3$-colorability is NP-hard for planar
graphs of maximum degree at most four~\cite{gjs}
and yet $3$-colorability is expressible in monadic second order logic.

Thus in an attempt at enlarging the class of input graphs,
we have to restrict the set of properties  to be tested.
One of the first results in this direction was a linear-time algorithm of
Eppstein~\cite{bib-eppstein95,bib-eppstein99} for testing the existence of
a fixed subgraph in planar graphs. He then extended his algorithm
to minor-closed classes of graphs with locally bounded 
tree-width~\cite{bib-eppstein00}.
Since testing containment of a fixed subgraph can be expressed in first 
order logic by a $\Sigma_1$-sentence, this can be regarded
as a precursor to first order (FO) property testing.
Prior to our work, the following were the 
most general results:
\begin{itemize}
\item a linear-time algorithm of Seese~\cite{Seelinear} to test FO
      properties of graphs of bounded degree,
\item a linear-time algorithm of Frick and Grohe~\cite{fg} for deciding 
      FO properties of planar graphs,
\item an almost linear-time algorithm of Frick and Grohe~\cite{fg} 
      for deciding FO properties for classes of graphs with locally bounded tree-width,
\item a fixed parameter algorithm of Dawar, Grohe and Kreutzer~\cite{dawar} 
      for deciding FO properties for classes of graphs locally excluding a minor, and
\item a linear-time algorithm of Ne{\v s}et{\v r}il and 
Ossona de Mendez~\cite{bib-nes2}
      for deciding $\Sigma_1$-properties for classes of graphs with bounded expansion.
\end{itemize}
Our main theorem \rt{and its corollary generalize}  these five results.
In order to state \rt{them} we need a couple of definitions.
\rt{All graphs and digraphs in this paper are finite and have no loops 
or parallel edges. However, digraphs are permitted to have two edges joining
the same pair of vertices in opposite directions.}
For an integer $r\ge0$, a graph $H$ is an {\em $r$-shallow minor} of 
a graph $G$
if $H$  can be obtained from a subgraph of $G$ by 
contracting vertex-disjoint subgraphs of radii at most $r$
(and removing the resulting loops and parallel edges).
A class $\GG$ of graphs has {\em bounded expansion}
if there exists a function $f:\NN\to \RR^+$ such that for every integer
$r\ge0$ every $r$-shallow minor \rtt{$G$} of a member of $\GG$ 
\rtt{satisfies $|E(G)|/|V(G)|\le f(r)$.}
A preliminary version of our main theorem can be stated as follows.

\begin{theorem}
\label{thm-bounded}
Let $\GG$ be a class of graphs with bounded expansion,
and let $\Pi$ be a first-order property of graphs.
Then there exists a linear-time
algorithm that decides whether a graph from $\GG$
satisfies $\Pi$.
\end{theorem}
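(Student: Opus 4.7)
My plan is to proceed by induction on the quantifier rank $q$ of the first-order sentence $\varphi$ defining $\Pi$, using the low tree-depth coloring theorem of Ne\v{s}et\v{r}il and Ossona de Mendez as the main structural tool. In the base case $q=0$ the sentence is a Boolean combination of atomic formulas on constants and is evaluated in constant time. For the inductive step it suffices to handle $\varphi = \exists x\, \psi(x)$ for $\psi$ of quantifier rank $q-1$.

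The first ingredient is preprocessing. Given $q$, choose a parameter $p = p(q)$ sufficiently large and compute, in linear time, a coloring $c\colon V(G) \to [N(p)]$ such that any union of at most $p$ color classes induces a subgraph of tree-depth at most $p$. Both $N(p)$ and the tree-depth bound depend only on $p$ (and on the expansion function of $\GG$), not on $|V(G)|$. On any induced subgraph of bounded tree-depth one can decide FO in linear time, for instance via Courcelle's theorem, since bounded tree-depth bounds tree-width.

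The second ingredient is a reduction from FO evaluation on $G$ to FO evaluation on the bounded tree-depth subgraphs cut out by small color subsets. To do this, I would attach to every vertex $v$ a ``type'' recording which FO formulas in one free variable of quantifier rank at most $q-1$ hold at $v$; up to logical equivalence in our fixed signature there are only boundedly many such types. Once the types are materialized as new unary predicates, $\varphi$ becomes the quantifier-free statement ``some vertex has an appropriate type,'' which is checked by a single scan. The inductive hypothesis, suitably generalized to relational structures, would compute the types provided the enriched structure still lies in a bounded-expansion-like class, so that the coloring can be reapplied at the next level.

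The principal obstacle is carrying out this reduction in linear total time under quantifier alternation. For $\Sigma_1$-properties, Ne\v{s}et\v{r}il and Ossona de Mendez observed that any witness tuple $\overline{x}$ occupies at most $|\overline{x}|$ color classes, so the property reduces to a bounded tree-depth induced subgraph; a universal quantifier in $\psi(x)$, however, ranges over all of $V(G)$, so this shortcut is unavailable. Moreover, bounded expansion does not imply that balls in $G$ have bounded size, so the bounded-degree argument of Seese via Gaifman locality is not directly applicable. The heart of the proof must therefore be an augmentation procedure which adds bounded amounts of relational information to $G$ (encoding enough of the local FO-theory to let existential witnesses be found), preserves a suitable sparsity property of the resulting structure, and is executed in linear time per iteration. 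Showing that after a bounded number of such augmentation steps every FO sentence of rank $q$ can be decided by a single quantifier-free query, and verifying that sparsity is preserved under these augmentations, is the crux of the argument.
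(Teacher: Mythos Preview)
Your high-level architecture is the paper's: iterate a quantifier-elimination step that (i) computes a low tree-depth coloring via Ne\v{s}et\v{r}il--Ossona de Mendez augmentations, (ii) works on the bounded tree-depth pieces cut out by small color sets, and (iii) records the outcome as new unary predicates on an enriched structure that is still guarded by a graph of bounded expansion. You have also correctly located the two nontrivial points: that augmentation preserves bounded expansion, and that a single existential quantifier can be eliminated in linear time on a bounded tree-depth piece.

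What is missing is exactly what you flag as ``the crux''. Two concrete issues:

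\emph{First}, appealing to Courcelle on each bounded tree-depth piece $G[I]$ does not do what you need. Courcelle tells you whether $G[I]\models\chi$ for a sentence $\chi$; it does not give you, for every vertex $v$, whether $G\models\psi(v)$ when $\psi$ has inner universal quantifiers ranging over all of $V(G)$. The paper sidesteps this by eliminating the \emph{innermost} quantifier first: if $\xi=\exists x_0\,\psi(x_0,x_1,\ldots,x_N)$ with $\psi$ quantifier-free, then the finitely many terms in $\psi$ determine a bounded set $X$; after coloring, one takes a disjunction over all assignments $\alpha\colon X\to[K]$ of colors to terms, and for each $\alpha$ the formula $\xi$ can be evaluated on the substructure induced by the colors in $\alpha(X)$, which is guarded by the closure of a forest of depth at most $|X|$.

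\emph{Second}, on such a bounded-depth forest the paper does not compute full FO-types. It proves a dedicated quantifier-elimination lemma (its Lemmas~\ref{lm-phi-T} and~\ref{lm-elim}): extend the language by a unary function $p$ interpreted as the parent function of the depth-certifying forest, and by finitely many unary predicates $U_0,\ldots,U_K$. A case analysis over finitely many \emph{templates}---the possible shapes of the minimal subforest containing the evaluations of all terms in $\psi$---yields a quantifier-free formula $\overline{\xi}(x_1,\ldots,x_N)$ in the extended language equivalent to $\xi$, and the interpretations of the $U_i$ are computed in linear time. Because $p$ and the $U_i$ live in the closure of the forest, the enriched structure is guarded by the $k$-th augmentation of $G$, which is again of bounded expansion; this is what lets the induction continue. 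Your plan names this step but does not supply it; the template argument (and the introduction of the parent function) is precisely the missing mechanism.
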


In fact, we prove a more general theorem \rt{(Theorem~\ref{thm-bounded2})}:
there exists a linear-time
algorithm for $L$-structures ``guarded" by a member of $\GG$,
and we design  several data structures that allow the $L$-structure
to be modified and support FO property testing in constant time.

Using known techniques we derive the following corollary from 
Theorem~\ref{thm-bounded}.
A class $\GG$ of graphs has {\em locally bounded expansion}
if there exists a function $g:\NN\times\NN\to\RR^+$ such that 
for every two integers $d,r\ge0$, for every graph $G\in\GG$ and for every
$v\in V(G)$, every $r$-shallow minor \rtt{$H$} of the $d$-neighborhood of $v$ in $G$
\rtt{satisfies $|E(H)|/|V(H)|\le g(d,r)$.}
We say that there exists 
an {\em almost linear-time algorithm} to solve a problem $\Pi$
if for every $\varepsilon>0$ there exists an algorithm to solve $\Pi$
with running time $O(n^{1+\varepsilon})$,
where $n$ is the size of the input instance.

\begin{corollary}
\label{thm-nowhere}
Let $\GG$ be a class of graphs with locally bounded expansion,
and let $\Pi$ be a first-order property of graphs.
Then there exists an almost linear-time algorithm that correctly decides
whether  a graph from $\GG$ satisfies $\Pi$.
\end{corollary}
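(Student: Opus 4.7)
The plan is to derive the corollary from Theorem~\ref{thm-bounded} via a Gaifman-locality argument, in the style in which Frick and Grohe~\cite{fg} upgraded their linear-time algorithm for bounded tree-width into an almost-linear-time algorithm for locally bounded tree-width.

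First, I would apply Gaifman's Locality Theorem to rewrite the sentence defining $\Pi$ as a Boolean combination of \emph{basic local sentences} of the form
\[
\exists x_1\cdots\exists x_k\Bigl(\bigwedge_{1\le i<j\le k}\mathrm{dist}(x_i,x_j)>2r\;\wedge\;\bigwedge_{i=1}^k\psi^{(r)}(x_i)\Bigr),
\]
where the radius $r$, the integer $k$, and the formula $\psi$ depend only on $\Pi$, and $\psi^{(r)}(x)$ has all quantifiers relativized to the $r$-neighborhood $N_r(x)$. It then suffices to design an almost-linear-time algorithm deciding each such basic local sentence.

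Second, I would perform local evaluation. For each $v\in V(G)$, the subgraph induced on $N_r(v)$ belongs, by the very definition of locally bounded expansion, to the fixed class $\GG_r$ consisting of all $r$-neighborhoods of vertices of graphs in $\GG$; this class has bounded expansion with expansion bound $s\mapsto g(r,s)$. Applying Theorem~\ref{thm-bounded}, or more precisely its structure variant Theorem~\ref{thm-bounded2} (so that the center $v$ can be designated by a unary predicate), I would decide $\psi^{(r)}(v)$ in time linear in $|N_r(v)|$. After marking each $v$ for which $\psi^{(r)}(v)$ holds, a greedy scan selects a maximum scattered set of marked vertices (pairwise at distance $>2r$), which decides the basic local sentence.

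The main obstacle is the running-time analysis: the naive bound $\sum_{v}|N_r(v)|$ can be as large as $n^2$ even for graphs in $\GG$. To reach $O(n^{1+\varepsilon})$, I would follow the Frick--Grohe template and compute neighborhoods in an amortized fashion: fix a threshold $t=t(\varepsilon)$, handle vertices whose $r$-neighborhood has at most $t$ vertices explicitly, and group the remaining ``heavy'' vertices via a neighborhood-cover argument so that each heavy $r$-neighborhood is visited only once. The sparsity of $G$ — concretely, the bounded expansion of each $N_{2r}(v)$ and the low tree-depth colorings of Ne\v{s}et\v{r}il and Ossona de Mendez used throughout this paper — is what makes this amortization feasible, and tuning the parameters to $\varepsilon$ yields the claimed $O(n^{1+\varepsilon})$ bound with constants depending only on $\varepsilon$, $\Pi$, and $g$.
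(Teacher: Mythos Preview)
Your approach is the same as the paper's: both invoke Gaifman locality and then follow the Frick--Grohe neighborhood-cover framework, with Theorem~\ref{thm-bounded} standing in for Courcelle's theorem. The paper packages this reduction as Theorem~\ref{thm-local-general} and then discharges its hypothesis using Theorem~\ref{thm-bounded+}.

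There is, however, one genuine gap in your sketch. Inside each piece $U$ of the neighborhood cover you must determine $\psi^{(r)}(v)$ for \emph{every} vertex $v$ whose $r$-ball lies in $U$, and you must do so in total time linear in $|U|$; the cover argument controls $\sum_U |U|$, not the number of centres per piece. Your plan --- mark $v$ by a unary predicate and apply Theorem~\ref{thm-bounded2} to the resulting sentence --- handles one $v$ at a time and therefore re-introduces exactly the $\sum_v|N_r(v)|$ blow-up the cover was supposed to avoid. What is actually needed is a linear-time \emph{listing} algorithm: given a structure in a fixed bounded-expansion class and a formula with one free variable, output all satisfying elements in linear time. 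This is precisely why the paper proves the stronger quantifier-elimination form, Theorem~\ref{thm-bounded+}: it converts $\psi^{(r)}(x)$ into an equivalent quantifier-free formula over an expanded structure computable in linear time, after which each $v$ can be tested in constant time. With that ingredient supplied, your outline and the paper's proof coincide.
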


We announced our results in the survey paper~\cite{wgsurvey}.
Dawar and Kreutzer~\cite{dak} posted an independent proof of 
Theorem~\ref{thm-bounded}
and a proof of Corollary~\ref{thm-nowhere} for the more general classes of
nowhere-dense graphs (introduced below).
However, the proofs in~\cite{dak} are incorrect.
A correct proof of 
Theorem~\ref{thm-bounded}, different from ours,
appears in~\cite{grokremeth}.

Thus it remains an interesting open problem whether Corollary~\ref{thm-nowhere}
can be generalized to the more general classes of nowhere-dense graphs.
This is of substantial interest from the point of view of fixed parameter
tractability, because nowhere density  of classes of
graphs gives a natural limitation (subject to a widely believed
complexity-theory assumption).
Indeed, we prove the following in Theorem~\ref{thm-complexity} below.
Let $L$ be a language consisting of one binary relation symbol and
let $\GG$ be a class of graphs closed under taking subgraphs that is
not nowhere dense.
We prove that if testing whether an input graph from $\GG$ satisfies
a given $\Sigma_1$-sentence $\varphi$ is fixed parameter tractable
when parameterized by the size of $\varphi$, then  FPT=W[1].

In the rest of this section we introduce  terminology and
state all our results.

\subsection{Logic theory definitions}
\label{sub-logic}

Our logic terminology is standard, except for the following.
All function symbols have arity one, and hence all functions are
functions of one variable.
If $L$ is a language, then
an $L$-term is {\em simple} if it is a variable or it is of the form $f(x)$
where $f$ is a function symbol and $x$ is a variable.
An $L$-formula is {\em simple} if all terms appearing in it are simple.
The rest of our logic terminology is standard, and so readers familiar
with it may skip the rest of this subsection.

A {\em language $L$} consists of a disjoint union of a finite set $L^r$ of {\em relation symbols} and
a finite set $L^f$ of {\em function symbols}. Each relation symbol $R\in L^r$ is associated
with an integer $a(R)\ge 0$, called the arity of $R$. 
In this paper all function symbols have arity one.

If $L$ is a language, then an {\em $L$-structure $A$} is a triple $(V,(R^A)_{R\in L^r},(f^A)_{f\in L^f})$
consisting of a finite set $V$ and
for each $m$-ary relation symbol $R\in L^r$ a set $R^A\subseteq V^m$,
the {\em interpretation} of $R$ in $A$, and for each function symbol $f\in L^f$
a function $f^A:V\to V$ of one variable, the {\em interpretation} of $f$ in $A$.
We define $V(A):=V$.
For example, graphs may be regarded as $L$-structures,
where $L$ is the language consisting of a single binary relation.
We define the size $|A|$ of $A$ to be $|V(A)|+\sum_{R\in L^r} |R^A|+|L^f||V(A)|$.
If $L$ contains no function symbols, then
an {\em $L$-substructure} of an $L$-structure $A=(V,(R^A)_{R\in L^r})$
is an $L$-structure $A'=(V',(R^{A'})_{R\in L^r})$ where $V'\subseteq V$ and
$R^{A'}\subseteq R^A\cap V'^{a(R)}$.
A language $L'$ extends a language $L$ if every function symbol of $L$
is a function symbol of $L'$ and the same holds for relation symbols, which also retain the same arity.
If a language $L'$ extends a language $L$, $A$ is an $L$-structure
and $A'$ is an $L'$-structure such that $V(A)=V(A')$ and $A$ and $A'$
have the same interpretations of symbols of $L$, then we say that
$A'$ is an {\em expansion} of $A$.

Assume that we have an infinite set of variables.
An {\em $L$-term} is defined as follows:
\begin{enumerate}
\item each variable is an $L$-term, and
\item if $f\in L^f$ and $t$ is an $L$-term, then $f(t)$ is an $L$-term.
\end{enumerate}
Each $L$-term is obtained by a finite number of applications of these two rules.
We say that an $L$-term is {\em simple} if it is a variable or is of the from $f(x)$
where $f\in L^f$ and $x$ is a variable. A term $t$ {\em appears} in a term $t'$
if either $t=t'$ or $t'=f(t'')$ for some $f\in L^f$ and $t$ appears in $t''$.

An {\em atomic $L$-formula} $\varphi$ is either the symbol 
$\top$ (which represents a tautology); or its negation $\bot$;
or $R(t_1,\ldots,t_m)$, where $R$ is an $m$-ary relation symbol of 
$L$ and $t_1,\ldots,t_m$ are $L$-terms; or
$t_1=t_2$, where $t_1$ and $t_2$ are $L$-terms. 
A term $t$ appears in $\varphi$ if it appears
in one of the terms $t_1,\ldots,t_m$.
An {\em $L$-formula} is defined recursively as follows: every atomic $L$-formula is an $L$-formula, and
if $\varphi_1$ and $\varphi_2$ are $L$-formulas and $x$ is a variable, then $\neg\varphi_1$, $\varphi_1\lor\varphi_2$,
$\varphi_1\land\varphi_2$, $\exists x\;\varphi_1$ and $\forall x\;\varphi_1$ are $L$-formulas.
Every $L$-formula is obtained by a finite application of these rules.
We write $t_1\neq t_2$ as a shortcut for $\neg(t_1=t_2)$.

A term $t$ {\em appears} in an $L$-formula $\varphi_1\lor\varphi_2$
if it appears in $\varphi_1$ or $\varphi_2$, and we define appearance for the other cases
analogously. An $L$-formula is {\em simple} if all terms appearing in it are simple.
A variable $x$ {\em appears freely} in an $L$-formula $\varphi$
if either $\varphi$ is atomic and $x$ appears in $\varphi$; or
$\varphi=\varphi_1\lor\varphi_2$ or $\varphi=\varphi_1\land\varphi_2$ and
$x$ appears freely in at least one of the formulas $\varphi_1$ and $\varphi_2$;
or $\varphi=\exists y\;\varphi'$ or $\varphi=\forall y\;\varphi'$,
$x$ is distinct from $y$ and $x$ appears freely in $\varphi'$.
Occurrences of $x$ in the formula $\varphi$ not inside the scope of a quantifier bounding $x$,
i.e., those that witness that $x$ appears freely in $\varphi$, are called {\em free} and
the variables that appear freely in $\varphi$ are also referred to as {\em free variables}.
If $\varphi$ is a formula, then the notation $\varphi(x_1,\ldots,x_n)$ indicates that
all variables that appear freely in $\varphi$ are among $x_1,\ldots,x_n$.
An {\em $L$-sentence} is an $L$-formula such that no variable appears freely in it.
A {\em $\Sigma_1$-$L$-sentence} is an $L$-formula of the form $\exists x_1,\ldots,x_n \varphi(x_1,\ldots,x_n)$
where the $L$-formula $\varphi(x_1,\ldots,x_n)$ is quantifier-free.
If $\varphi(x_1,\ldots,x_n)$ is an $L$-formula and $A$ is an $L$-structure,
then for $v_1,\ldots,v_n\in V(A)$, we define $A\models\varphi(v_1,\ldots,v_n)$ in the usual way.
We denote the {\em length} of a formula $\varphi$ by $|\varphi|$.
\rt{Finally, a property $\Pi$ of $L$-structures is called 
a {\em first order property} if there exists an $L$-sentence $\varphi$ 
such that every $L$-structure $A$ 
has property $\Pi$ if and only if $A\models\varphi$.
The property $\Pi$ is a {\em $\Sigma_1$-property} if $\varphi$ 
can be chosen to be a $\Sigma_1$-$L$-sentence.}

\subsection{Classes of sparse graphs}

The notion of a class of graphs of bounded expansion was
introduced by Ne{\v s}et{\v r}il and Ossona de Mendez in~\cite{bib-nes0} and
in the series of journal papers~\cite{bib-nes1,bib-nes2,bib-nes3}.
Examples of classes of graphs
with bounded expansion include proper minor-closed classes of graphs,
classes of graphs with bounded maximum degree, classes of graphs excluding
a subdivision of a fixed graph, classes of graphs that can be embedded
on a fixed surface with bounded number of crossings per edge and
many others, see~\cite{osmenwood}. Many structural and algorithmic properties
generalize from proper minor-closed classes of graphs to classes
of graphs with bounded expansion, see~\cite{wgsurvey,nessurvey}.

\rtt{Ne\v{s}et\v{r}il and Ossona de Mendez~\cite{bib-nowhere} defined}
a class $\GG$ of graphs \rtt{to be} {\em nowhere-dense} if
\rtt{%
for every $\varepsilon>0$ and every integer $r$ there exists a real number $K$
such that if $G$ is an $r$-shallow minor of a member of $\GG$, then
$|E(G)|\le K|V(G)|^{1+\varepsilon}$.
(It follows from~\cite[Corollary~3.3]{bib-nowhere} that this definition is
indeed equivalent to the one given in~\cite{bib-nowhere}.)}
It can be shown that every class of graphs with (locally) bounded expansion is 
nowhere-dense~\cite{bib-nowhere},
but the converse is false: \rt{the} class $\GG$ of graphs $G$ with no cycles of length
less than $\Delta(G)$ is nowhere-dense but it fails to have bounded expansion;
the class of graphs obtained from graphs $G$ in $\GG$ by adding a vertex
adjacent to all vertices of $G$ is a class of nowhere-dense graphs that does
not have locally bounded expansion.
One can also define a ``locally nowhere-dense" class of graphs, but it
turns out that such classes are nowhere-dense~\cite{bib-nowhere}.

If $L$ is a language, then
the {\em Gaifman graph} of an $L$-structure $A$ is the undirected graph
$G_A$ with vertex set $V(G_A)=V(A)$ and an edge between two \rt{distinct} vertices 
$a, b\in V(A)$ 
if and only if there exist $R\in L^r$ and a tuple 
$(a_1, \ldots , a_r)\in R^A$ such that $a, b\in \{a_1, \ldots , a_r\}$ or
there exists a function $f\in L^f$ such that $b=f^\rtt{A}(a)$ or $a=f^\rtt{A}(b)$.
We say that the relational structure $A$ is {\em guarded} by a graph $G$
if $V(G)=V(A)$ and $G_A$ is a subgraph of $G$.
Observe that if $G$ belongs to a class of graphs with bounded expansion,
then every subgraph of $G$ has a vertex of bounded degree, and hence 
the number  of complete subgraphs of $G$ is linear in $|V(G)|$
by a result of~\cite{woodcl}. It follows that
the size $|A|$ of any $L$-structure $A$ guarded by a graph belonging
to a fixed class of graphs with bounded expansion is $O(|V(A)|)$.

\rtt{%
Our model of computation is the standard RAM model with addition
and subtraction as arithmetic operations.
An $L$-structure $A$ is represented in the straightforward way by listing
all elements of $V(A)$, the images of elements under functions of $A$,
and listing all tuples of all relations of $A$.
However, we will need to be able to decide in constant time whether
a given tuple satisfies the interpretation of a relation in an 
$L$-structure, and we now explain how to do that for $L$-structures
guarded by $d$-degenerate graphs.
Let $d$ be a fixed integer. A graph $G$ is called {\em $d$-degenerate}
if every subgraph of $G$ has a vertex of degree $d$ or less.
Thus if $\GG$ is a class of graphs of bounded expansion, then there
exists an integer $d$ such that every member of $\GG$ is $d$-degenerate.
Now let $A$ be an $L$-structure guarded by a $d$-degenerate graph $G$.
Since $G$ is $d$-degenerate, its vertices can be numbered
$v_1,v_2,\ldots,v_n$ in such a way that for each $i=1,2,\ldots,n$
the vertex $v_i$ has at most $d$ neighbors among $v_{1},\ldots,v_{i-1}$.
Now each $t$-tuple $(v_{i_1},v_{i_2},\ldots,v_{i_t})$, where 
$i_1<i_2<\cdots<i_t$, will be associated with the vertex $v_{i_t}$.
Then each vertex is associated with at most $d\choose t-1$ distinct
$t$-tuples. For each relation we compute the associations at the
beginning of the computation, and then we can answer in constant time
queries of the form whether a given $t$-tuple  belongs to a given relation.}

\subsection{Our results}

We first state versions of Theorem~\ref{thm-bounded}
and Corollary~\ref{thm-nowhere} for $L$-structures.
\rt{Theorem~\ref{thm-bounded} and Corollary~\ref{thm-nowhere}
are immediate consequences.}

\begin{theorem}
\label{thm-bounded2}
Let $\GG$ be a class of graphs with bounded expansion,
$L$ a language and
$\varphi$ an $L$-sentence. There exists a linear-time
algorithm that decides whether an $L$-structure guarded by a graph $G\in\GG$
satisfies $\varphi$.
\end{theorem}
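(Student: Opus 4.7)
The plan is quantifier elimination by induction on the quantifier rank of $\varphi$. At each step I reduce the pair $(A,\varphi)$ to a pair $(A',\varphi')$ such that $A\models\varphi$ iff $A'\models\varphi'$, where $\varphi'$ has one fewer quantifier, $A'$ is an expansion of $A$ to a language $L'$ obtained from $L$ by adjoining a bounded number of \emph{unary} function symbols, and $A'$ is guarded by a graph $G'$ lying in some class $\GG'$ of bounded expansion that depends only on $\GG$ and $\varphi$. Once all quantifiers have been eliminated, the resulting quantifier-free sentence can be checked on $A'$ in constant time after a linear-time pass that precomputes, for every vertex, the values of the simple terms appearing in $\varphi'$.

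The cornerstone is the Ne\v{s}et\v{r}il--Ossona de Mendez low tree-depth coloring theorem advertised in the abstract: for every class $\GG$ of bounded expansion and every integer $p$ there exist a constant $N=N(p,\GG)$ and a linear-time algorithm producing, for each $G\in\GG$, a vertex coloring with $N$ colors such that the union of any $i\le p$ color classes induces a subgraph of tree-depth at most $p$. I first bring $\varphi$ into a simple prenex normal form, so every term is a variable or has the form $f(x)$, and handle the innermost existential $\exists y\,\psi(y,x_1,\dots,x_k)$ with $\psi$ quantifier-free and simple. A locality argument shows that the truth of $\psi$ depends only on a bounded-radius ball around $\{y,x_1,\dots,x_k\}$ in $G_A$; choosing $p$ large enough as a function of $\psi$, each such ball is contained in a subgraph induced by at most $p$ color classes, hence of tree-depth at most $p$. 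On bounded tree-depth structures, the witness problem for $\exists y$ is solvable by a bottom-up dynamic program in linear time.

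With witnesses in hand, I expand $A$ to $A'$ by introducing unary function symbols whose values on each vertex (possibly composed with previously added functions to encode tuples) record a chosen witness $y$ inside the relevant bounded tree-depth piece. The restriction to unary function symbols, reflected in the paper's notion of ``simple'', is crucial: each unary function contributes at most one outgoing edge per vertex to the Gaifman graph, and since the witness can be reached by a bounded number of ``parent'' pointers in the tree-depth decomposition, a constant number of such symbols suffice. The new Gaifman graph $G'$ thus arises from $G$ by a bounded number of controlled transformations (adding bounded out-degree orientations together with edges that live inside low tree-depth subgraphs).

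The main obstacle is the verification that $\GG'$ is again of bounded expansion, uniformly through the recursion. This is a structural statement about $r$-shallow minors of $G'$: one must bound $|E(H)|/|V(H)|$ for every such $H$ in terms of $r$ and $\GG$, which in turn requires re-applying low tree-depth colorings of $G$ at larger depth parameters to analyze how the witness edges can aggregate. This closure of bounded expansion under the witness-encoding operation is the heart of the argument, and it is precisely what prevents the proof from going through in the weaker settings of, say, bounded degeneracy. The remaining bookkeeping is straightforward: the induction on quantifier rank terminates after $|\varphi|$ steps, each step is linear-time, and since $|A|=O(|V(G)|)$ for $L$-structures guarded by a bounded-expansion graph, the whole procedure runs in linear time.
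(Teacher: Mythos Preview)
Your overall architecture---make $\varphi$ simple, take low tree-depth colorings, eliminate the innermost existential over bounded tree-depth pieces, iterate---matches the paper. But two things are off, and one is a real gap.

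First, you have the difficulty exactly backwards. You flag ``$\GG'$ is again of bounded expansion'' as ``the heart of the argument'' and suggest it requires a fresh analysis of shallow minors under witness-encoding. In the paper this is a one-line citation: the only new edges in the Gaifman graph of $\overline A$ come from the $F$-parent function $p$, and the closure of each depth-certifying forest is already a subgraph of a $k$-th augmentation $\overline G$ of $G$; that $k$-th augmentations of a bounded-expansion class again form a bounded-expansion class is the Ne\v set\v ril--Ossona de Mendez augmentation theorem (Theorem~\ref{thm:augmexp}), proved elsewhere and used here as a black box. No new structural work is needed.

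Second, and more seriously, your quantifier-elimination mechanism is not right. You propose to add unary \emph{function} symbols that ``record a chosen witness $y$'' and say the witness ``can be reached by a bounded number of parent pointers''. That is not always true: in a depth-certifying forest $F$, the witness $v_0$ for $x_0$ may sit in a subtree disjoint from every subtree containing the evaluations of the other terms, so no bounded composition of $p$ with terms in $x_1,\dots,x_n$ reaches it. The paper handles exactly this case in Lemma~\ref{lm-phi-T}: when $\alpha_T(x_0)$ is an ancestor of some $\alpha_T(t)$ with $x_0\notin t$, one indeed substitutes $p^k(t)$ for $x_0$; but in the remaining case one cannot name $v_0$ by a term at all, and instead introduces unary \emph{relation} symbols $U_0,\dots,U_K$ encoding, for each vertex $w$ at the branching depth, whether (and for how many children) the subtree below $w$ contains a valid $v_0$. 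The eliminated formula $\overline\varphi_T$ then tests these predicates together with a template-admissibility formula $\xi_{T'}$; there is no witness function. Your ``bottom-up dynamic program'' computes the right truth values but does not by itself yield a quantifier-free \emph{formula} in an expanded language, which is what the induction needs. This is the actual technical core (Lemmas~\ref{lm-embed}--\ref{lm-elim}), and your sketch skips it.

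A minor point: your ``locality argument'' is not about bounded-radius balls in $G_A$. Since $\psi$ is simple, the set $X$ of terms has bounded size, and its evaluations use at most $|X|$ colors of the low tree-depth coloring; one then enumerates all color-assignments $\alpha:X\to\{1,\dots,K\}$ and works in the substructure induced by $\alpha(X)$, which has tree-depth $\le|X|$. Balls of bounded radius do not enter.
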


\begin{corollary}
\label{thm-nowhere2}
Let $\GG$ be a class of graphs with locally bounded expansion,
$L$ a language and $\varphi$ an $L$-sentence.
There exists an almost linear-time algorithm that decides
whether an $L$-structure guarded by a graph $G\in\GG$ satisfies $\varphi$.
\end{corollary}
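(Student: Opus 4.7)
The plan is to combine Theorem~\ref{thm-bounded2} with Gaifman's locality theorem. By the latter, any first-order sentence $\varphi$ is logically equivalent to a Boolean combination of \emph{basic local sentences} of the form
\[
\exists x_1\cdots\exists x_k\Bigl(\bigwedge_{1\le i<j\le k} d(x_i,x_j)>2r \;\wedge\; \bigwedge_{i=1}^k \psi^{(r)}(x_i)\Bigr),
\]
where $r$ and $k$ depend only on $\varphi$, and $\psi^{(r)}(x)$ is an $L$-formula whose quantifiers range only over the $r$-neighborhood $N_r(x)$ of $x$ in the Gaifman graph. It thus suffices to evaluate each basic local sentence in almost linear time, since Boolean combinations are then assembled in a straightforward way.

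Given an input $L$-structure $A$ guarded by some $G\in\GG$, for each vertex $v\in V(G)$ I would test whether $A\models\psi^{(r)}(v)$. By $r$-locality this depends only on the substructure $A_v$ of $A$ induced on $N_r(v)$, whose Gaifman graph is a subgraph of $G[N_r(v)]$. By the definition of locally bounded expansion (applied with parameter $d=r$), the class of all graphs $G[N_r(v)]$, as $v$ and $G\in\GG$ vary, has bounded expansion depending only on $r$. Hence Theorem~\ref{thm-bounded2}, applied to $A_v$ extended by a unary relation marking $v$, decides $\psi^{(r)}(v)$ in time $O(|N_r(v)|)$. Once every vertex has been marked with its truth value for $\psi^{(r)}$, the outer scattered existential is evaluated by the scattered-set procedure of Frick and Grohe~\cite{fg}: repeatedly pick a marked vertex, add it to a set $S$, remove its $2r$-ball by BFS, and continue; if this yields $|S|\ge k$, the sentence holds, and otherwise all marked vertices lie in the union of fewer than $k$ balls of radius $2r$ and a bounded further search suffices.

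The main technical obstacle is bounding the total running time by $O(n^{1+\varepsilon})$ for every fixed $\varepsilon>0$. The per-vertex local test contributes $O(|N_r(v)|)$ and the BFS step contributes $O(|N_{2r}(v)|)$, so the bottleneck is $\sum_{v\in V(G)}|N_{r'}(v)|$ for $r'\in\{r,2r\}$; this sum need not be $O(n)$ for classes of locally bounded expansion, since a single ball can already contain all of $V(G)$ (consider the star $K_{1,n-1}$). I would control the sum via the Frick--Grohe trade-off argument: fix a threshold $t=t(\varepsilon,r,k)$, so that vertices with $|N_{r'}(v)|\le t$ contribute $O(nt)$ in total, which is $O(n^{1+\varepsilon})$ for suitable $t$; the remaining ``heavy'' vertices are shown to be sufficiently few by combining locally bounded expansion with a counting argument on disjoint ball cores, so that their combined contribution can again be absorbed into $O(n^{1+\varepsilon})$ (possibly by applying Theorem~\ref{thm-bounded2} to a coarser auxiliary formula on a quotient of~$G$). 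Verifying these bounds and fitting the scattered-set step into the same budget is the crux; everything else is a direct application of Theorem~\ref{thm-bounded2}.
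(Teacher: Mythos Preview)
Your overall plan---Gaifman locality, then the Frick--Grohe scattered-set machinery, with Theorem~\ref{thm-bounded2} supplying the local model-checking---is exactly the route the paper takes. But there is a genuine gap in the running-time analysis, and your proposed threshold repair does not close it.

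The issue is your per-vertex application of Theorem~\ref{thm-bounded2} to $A_v$. You correctly identify that this costs $\sum_{v}|N_r(v)|$ and that this sum need not be $O(n)$; however, your fix (``heavy vertices are sufficiently few by a counting argument on disjoint ball cores'') is wrong as stated. In the star $K_{1,n-1}$, which certainly lies in a class of locally bounded expansion, \emph{every} leaf has $|N_1(v)|=n$, so for any threshold $t<n$ there are $n-1$ heavy vertices and the total work is $\Theta(n^2)$. No bound on the number of heavy vertices follows from locally bounded expansion alone.

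The paper avoids this in two linked ways. First, it does not call the sentence-level Theorem~\ref{thm-bounded2} once per vertex; it invokes the stronger Theorem~\ref{thm-bounded+} (quantifier elimination with a free variable) so that on any bounded-expansion substructure one can list \emph{all} $v$ with $A\models\psi^{(r)}(v)$ in a single linear pass. Second, instead of evaluating on each $N_r(v)$ separately, it reuses the Frick--Grohe sparse neighborhood-cover machinery (their Corollary~6.3 and Corollary~8.2, which go through verbatim once each $\GG_d$ has linearly many edges): one produces sets $U_1,\ldots,U_m$ of bounded radius with $\sum_i|U_i|=O(n^{1+\varepsilon})$ such that every $N_r(v)$ is contained in some $U_i$, and then runs the free-variable algorithm once on each $U_i$. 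In the star example this cover is a single set of size $n$. Your ``quotient of $G$'' hint may have been groping toward this, but the neighborhood cover is the missing idea you need to name and use.
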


Our approach differs from the methods used to prove the results 
from~\cite{dawar,fg,Seelinear} mentioned above and is based on 
a seminal result of Ne{\v s}et{\v r}il and
Ossona de Mendez~\cite{bib-nes1} on the existence of low tree-depth colorings
for graphs with bounded expansion,
stated below \rt{in a form suitable for our purposes}
as Theorem~\rtt{\ref{lem:lowtdepth}}.

We also consider dynamic setting and
design the following data structures, where the last one can be viewed as
a dynamic version of Theorem~\ref{thm-bounded2}.
\begin{itemize}
\item For every class $\GG$ of graphs with bounded expansion, an integer $d_0$ and
      a language $L$, \rt{we design}
      a data structure such that \rt{given a graph $G\in\GG$ on $n$ vertices
      and an $L$-structure guarded by $A$} the data structure is initialized 
      in time $O(n)$ and supports:
      \begin{itemize}
      \item adding a tuple to a relation of $A$ in time $O(1)$ provided $A$ stays guarded by $G$,
      \item removing a tuple from a relation of $A$ in time $O(1)$,
      \item answering whether $A\models\varphi$ for \rtt{a} $\Sigma_1$-$L$-sentence $\varphi$
            with at most $d_0$ variables in time $O(|\varphi|)$ and
            outputting one of the satisfying assignments, and
      \end{itemize}
\rt{%
\item for every class $\GG$}
      of nowhere-dense graphs, 
      \rt{every integer $d_0$, and every language $L$}
      \rt{we design a}  data structure
      \rt{such that 
      for every $\varepsilon>0$,
      given a graph $G\in\GG$ on $n$ vertices
      and an $L$-structure guarded by $A$, the data structure}
      is initialized in time $O(n^{1+\epsilon})$ and supports:
      \begin{itemize}
      \item adding a tuple to a relation of $A$ in time $O(n^{\varepsilon})$ provided $A$ stays guarded by $G$,
      \item removing a tuple from a relation of $A$ in time $O(n^{\varepsilon})$, and
      \item answering whether $A\models\varphi$ for \rtt{a} $\Sigma_1$-$L$-sentence $\varphi$
            with at most $d_0$ variables in time 
            $O(|\varphi|)$ and
            if so, outputting one of the satisfying assignments, and
      \end{itemize}
\item for every class $\GG$ of graphs with bounded expansion,
      a language $L$ and an $L$-sentence $\varphi$,
      \rt{we design}
      a data structure such that 
      \rt{that given a graph $G\in\GG$ on $n$ vertices
      and an $L$-structure guarded by $A$ the data structure}
      is initialized 
      in time $O(n)$ and supports:
      \begin{itemize}
      \item adding a tuple to a relation of $A$ in time $O(1)$ provided $A$ stays guarded by $G$,
      \item removing a tuple from a relation of $A$ in time $O(1)$,
      \item answering whether $A\models\varphi$ in time $O(1)$.
      \end{itemize}
\end{itemize}
The first of these data structures is needed in our linear-time algorithm
for $3$-coloring triangle-free graphs on surfaces~\cite{ourcol}, 
also see~\cite{oursoda}.
\rt{%
The first two data structures are presented in Theorems~\ref{thm-databe}
and~\ref{thm-databe-nowhere} in Section~\ref{sec-subgraph},
and the third one is presented in Theorem~\ref{thm-dynamic+} in
Section~\ref{sec-dynamic}.
}

\subsection{A hardness result}

Theorem~\ref{thm-bounded} and Corollary~\ref{thm-nowhere} fall within
the realm of fixed parameter tractability (FPT). We say that a decision
problem $\Pi$ parameterized by a parameter $t$ is {\em fixed parameter tractable}
if there exists an algorithm for $\Pi$ with running time $O(f(t)n^c)$,
where $n$ is the size of the input, $f$ is an arbitrary function and
$c$ is a constant independent of $t$. Analogously to the polynomial
hierarchy starting with the classes P and NP, there exists
a hierarchy of classes $\mbox{FPT}\subseteq\mbox{W[1]}\subseteq\mbox{W[2]}\subseteq\cdots$ of
parameterized problems, 
where FPT is the class of problems that are fixed parameter tractable.
See e.g.~\cite{bib-downey+fellows,bib-flum+grohe,bib-niedermeier} for more details.

\begin{theorem}
\label{thm-complexity}
Let $\GG$ be a class of graphs closed under taking subgraphs.
If $\GG$ is not nowhere-dense
and the problem of deciding $\Sigma_1$-properties in $\GG$
is fixed parameter tractable \rtt{when parametrized by the length
of the formula that defines the property}, then {\rm FPT=W[1]}.
\end{theorem}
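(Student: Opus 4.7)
The plan is to establish $W[1]$-hardness of the parameterized $\Sigma_1$-model-checking problem on $\GG$ by an FPT-reduction from $k$-CLIQUE, which is $W[1]$-complete; this at once forces {\rm FPT}$=${\rm W[1]} under the tractability hypothesis of the theorem.

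The structural input to the reduction is a characterization due to Ne{\v s}et{\v r}il and Ossona de Mendez: a subgraph-closed class $\GG$ fails to be nowhere-dense if and only if there exists an integer $s\ge 0$ such that for every $n$ the exact $s$-subdivision $K_n^{(s)}$ of $K_n$ (obtained by replacing each edge of $K_n$ by a path of length $s+1$) belongs to $\GG$. Fix such an $s$ for $\GG$. Given an instance $(H,k)$ of $k$-CLIQUE with $k\ge 4$ (which still suffices for $W[1]$-hardness), I first preprocess $H$ by iteratively deleting vertices of degree less than $k-1$; this preserves the existence of $k$-cliques and leaves every surviving vertex with at least $k-1\ge 3$ neighbours. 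Writing $V(H)=\{1,\dots,n\}$ for the preprocessed graph, let $G_H$ be the subgraph of $K_n^{(s)}$ consisting of the principal vertices $v_1,\dots,v_n$ together with, for each edge $ij\in E(H)$, the $s$ internal subdivision vertices and the $s+1$ edges of the subdivision path between $v_i$ and $v_j$. By subgraph-closure of $\GG$, $G_H\in\GG$; moreover, in $G_H$ every principal vertex has degree $\deg_H(i)\ge 3$, while every subdivision vertex has degree exactly $2$.

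I then design a $\Sigma_1$-sentence $\varphi_k$ whose length depends only on $k$ and $s$. It existentially quantifies $k$ principal variables $x_1,\ldots,x_k$; for each pair $a<b$ a sequence $y_{a,b}^1,\ldots,y_{a,b}^s$ of path-witness variables; and for each $a$ three degree-certificate variables $u_a^1,u_a^2,u_a^3$. Its quantifier-free body asserts (i) that the $x_a$ are pairwise distinct, (ii) that for each $a<b$ the sequence $x_a, y_{a,b}^1, \ldots, y_{a,b}^s, x_b$ is a path in $G_H$, and (iii) that for each $a$ the vertices $u_a^1,u_a^2,u_a^3$ are pairwise distinct and each adjacent to $x_a$. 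A $k$-clique $\{i_1,\dots,i_k\}$ in $H$ yields a satisfying assignment by taking $x_a:=v_{i_a}$, letting the $y_{a,b}^\bullet$ be the actual internal vertices of the subdivision path from $v_{i_a}$ to $v_{i_b}$, and choosing the $u_a^\bullet$ among the (at least three) $H$-neighbours of $v_{i_a}$. Conversely, clause (iii) forces every $x_a$ to have at least three neighbours in $G_H$ and hence to be some principal vertex $v_{i_a}$; clause (ii) then forces $i_ai_b\in E(H)$ for every $a<b$; and clause (i) makes the indices $i_a$ distinct, producing a $k$-clique in $H$. Since $(G_H,\varphi_k)$ is computable in polynomial time and $|\varphi_k|=O(k^2(s+1))$ depends only on $k$ and $\GG$, this is an FPT-reduction, completing the proof modulo the structural characterization.

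The principal technical obstacle is the structural equivalence invoked in the second paragraph: passing from the density-based definition of ``not nowhere-dense'' (phrased in terms of $r$-shallow minors) to the concrete combinatorial statement that a single fixed $s$ witnesses all $s$-subdivisions of complete graphs as subgraphs in $\GG$. This is a substantial theorem of Ne{\v s}et{\v r}il and Ossona de Mendez interpolating between shallow minors, shallow topological minors and edge density; once it is in hand, the remainder of the proof is a routine clique-encoding reduction whose details are entirely contained in the handful of auxiliary variables described above.
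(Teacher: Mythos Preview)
Your proof is correct and follows the same strategy as the paper: reduce $k$-\textsc{Clique} to $\Sigma_1$-model-checking on $\GG$ via the Ne\v{s}et\v{r}il--Ossona de Mendez characterization that a subgraph-closed somewhere-dense class contains all $r$-subdivisions for some fixed $r$. The paper's version is slightly more streamlined---it maps $G\mapsto G^{(r)}$ and uses the single $\Sigma_1$-sentence ``contains $K_m^{(r)}$ as a subgraph'' together with the direct equivalence $K_m\subseteq G \Leftrightarrow K_m^{(r)}\subseteq G^{(r)}$, so your preprocessing step and the degree-certificate variables $u_a^1,u_a^2,u_a^3$ are not needed (though they are not incorrect either).
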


\begin{proof}
An $r$-subdivision of a graph $G$ is the graph obtained from $G$
by subdividing every edge exactly $r$ times.
Since $\GG$ is not nowhere-dense and $\GG$ is closed under taking subgraphs,
there exists an integer $r$ such that
$\GG$ contains an $r$-subdivision of every graph~\cite{bib-nowhere}.
Since the existence of a subgraph isomorphic to an $r$-subdivision of
the complete graph $K_m$ is a $\Sigma_1$-property for every $m$,
we derive from the hypothesis of the theorem that
there exists an FPT algorithm $\cal A$ to decide the existence 
of an $r$-subdivision of
the complete graph $K_m$ in an input graph from $\GG$,
where \rt{the problem is} parameterized by $m$. 
This implies that testing the existence of
a complete subgraph of order $m$ is fixed parameter tractable
for general graphs $G$, because it is equivalent to testing whether
the $r$-subdivision of $G$ has a subgraph isomorphic to the 
$r$-subdivision of $K_m$, and the latter can be tested using the algorithm $\cal A$.
But testing the existence of a $K_m$ subgraph
is a well-known W[1]-complete problem~\cite{bib-w1}, and hence
FPT=W[1], as desired.
\end{proof}

Dawar and Kreutzer~\cite{dak} proved the related result
that if $\GG$ fails to be nowhere dense in an ``effective" way
and deciding FO properties in $\GG$
is fixed parameter tractable,
then FPT=AW[$*$]. 

The paper is organized as follows.
In the next section we review results about classes 
of graphs with bounded expansion
and classes of nowhere-dense graphs that will be needed later.
In Section~\ref{sec-formulas} we prove Theorem~\ref{thm-bounded2}
and in Section~\ref{sec-locally} we use it to deduce 
Corollary~\ref{thm-nowhere2}.
In Section~\ref{sec-subgraph} we present the first two data structures
mentioned earlier in this section, and in the final
Section~\ref{sec-dynamic} we present the third data structure.

\zz{A conference version of this article appeared in~\cite{DvoKraThoFOCS}.}

\section{Classes of graphs with bounded expansion}

In this section, we survey results on classes of graphs with bounded expansion 
and classes of nowhere-dense graphs that we need in the paper.
Let $G$ be a graph, and let $r\ge0$ be an integer.
Let us recall that a graph $H$ is an $r$-shallow minor of $G$ if
$H$ can be obtained from a subgraph of $G$ by contracting vertex-disjoint
subgraphs of radii at most $r$ and deleting the resulting loops and
parallel edges.
Following Ne\v{s}et\v{r}il and Ossona de Mendez we denote by
$\nabla_r(G)$ the maximum of \rtt{$|E(G')|/|V(G')|$ over all}  
$r$-shallow minors \rtt{$G'$} of $G$.
Thus $\nabla_0(G)$ is the 
maximum of $|E(G')|/|V(G')|$ taken over all subgraphs $G'$ of $G$.
Since every subgraph of  $G$ 
has a vertex of degree at most $2\nabla_0(G)$, 
we see that $G$ is $(2\nabla_0(G)+1)$-colorable and that
it has an orientation with maximum in-degree at most $2\nabla_0(G)$.
Clearly, such an orientation can be found in linear time in a greedy way.
Thus we have the following.

\begin{lemma}
\label{lem:bound}
For every class $\GG$ of graphs of bounded expansion there exists
an integer $K$ such that every graph $G\in\GG$ is $K$-colorable and
has an orientation with maximum in-degree at most $K-1$.
Furthermore, a $K$-coloring of $G$ and an orientation with maximum
in-degree $K-1$ can be found in linear time.
\end{lemma}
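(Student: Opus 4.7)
The plan is to extract a uniform degeneracy bound from the bounded expansion hypothesis and then execute the standard greedy algorithms on a degeneracy ordering. Since $\GG$ has bounded expansion, the witnessing function $f$ provides $\nabla_0(G) \le f(0)$ for every $G \in \GG$. I would set $K := 2\lfloor f(0)\rfloor + 1$ (or any integer exceeding $2f(0)$) and show that this $K$ works.

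First I would verify that every $G \in \GG$ is $(K-1)$-degenerate: for every subgraph $H$ of $G$, one has $|E(H)|/|V(H)| \le \nabla_0(G) \le f(0)$, so the average degree of $H$ is at most $2f(0) < K$, and hence $H$ contains a vertex of degree at most $K-1$. This yields an ordering $v_1, v_2, \ldots, v_n$ of $V(G)$ such that each $v_i$ has at most $K-1$ neighbors in $\{v_1, \ldots, v_{i-1}\}$ — produced by repeatedly deleting a vertex of minimum degree and appending it to the front of the list. Using the standard bucket-queue implementation (buckets indexed by current degree, with each vertex moving to a lower bucket when one of its incident edges is deleted), this ordering is computed in time $O(|V(G)| + |E(G)|) = O(|V(G)|)$, where the last equality uses $|E(G)| \le f(0)|V(G)|$.

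Given the ordering, I would construct a $K$-coloring by greedily processing $v_1, v_2, \ldots, v_n$ in order and assigning to $v_i$ any color in $\{1, \ldots, K\}$ not used by its at most $K-1$ already-colored neighbors. For the orientation, I would orient every edge from its earlier endpoint to its later endpoint in the ordering; then the in-degree of $v_j$ equals the number of its neighbors among $\{v_1, \ldots, v_{j-1}\}$, which is at most $K-1$. Both operations visit each edge and each vertex a bounded number of times, so together with the ordering computation the total running time remains linear.

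There is no real obstacle here: the entire argument is a routine application of degeneracy, and the only ingredient that requires slight care is arguing that the degeneracy ordering itself can be produced in genuinely linear time, for which I would cite (or briefly describe) the bucket-queue implementation together with the edge-count bound $|E(G)| = O(|V(G)|)$ that follows from $\nabla_0(G) \le f(0)$.
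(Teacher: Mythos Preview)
Your proposal is correct and follows exactly the paper's approach: the paper simply observes that every subgraph of $G$ has a vertex of degree at most $2\nabla_0(G)\le 2f(0)$, deduces $(2\nabla_0(G)+1)$-colorability and an orientation of in-degree at most $2\nabla_0(G)$, and remarks that both are found greedily in linear time---your write-up just supplies the details (bucket-queue degeneracy ordering) that the paper leaves implicit. The only nit is that $2\lfloor f(0)\rfloor+1$ need not exceed $2f(0)$ when the fractional part of $f(0)$ is at least $1/2$, but your parenthetical already states the correct requirement on $K$.
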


Let $D$ be a directed graph, and let $D'$ be a directed graph obtained
from $D$ by adding, for every pair of vertices $x,y\in V(D)$,
\begin{itemize}
\item the edge $xy$ if
      $D$ has no edge from $x$ to $y$ and
      there exists a vertex $z\in V(D)$ such that $D$ has
      an edge oriented from $x$ to $z$ and an edge oriented from $z$ to $y$
      ({\em transitivity}), and
\item either the edge $xy$ or the edge $yx$ 
      if $x$ is not adjacent to $y$ and
      there exists a vertex $z$ such that $D$ has
      an edge oriented from $x$ to $z$ and an edge oriented from $y$ to $z$
      ({\em fraternality}).
\end{itemize}
We call $D'$ an {\em oriented augmentation} of $D$ and
the underlying undirected graph of $D'$ the {\em augmentation} of $D$.
The following is
a result of
Ne{\v s}et{\v r}il and Ossona de Mendez~\cite[Lemma~5.2]{bib-nes1}.
A self-contained proof may be found
in~\cite{wgsurvey}.

\begin{theorem}
\label{thm:augment}
There exist polynomials $f_0, f_1, f_2, \ldots$ with the following property.
Let $D$ be an orientation of an undirected  graph $G$,
let $D$ have maximum in-degree at most $\Delta$, 
and let $G'$ be the augmentation of $D$.
Then $\nabla_r(G')\le f_r(\nabla_{2r+1}(G),\Delta)$ for all $r\ge0$.
\end{theorem}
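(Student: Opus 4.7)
The plan is to show that every $r$-shallow minor $H$ of $G'$ essentially arises as a $(2r+1)$-shallow minor of $G$, with an edge multiplicity polynomial in $\Delta$. I would organize the argument in four stages: classifying augmentation edges, lifting the contracted subgraphs into $G$, correcting non-disjointness, and bookkeeping the edge count. An induction on $r$ is not needed; $r$ enters only through the lifting step.

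First I would classify each edge of $G'$. An edge is either already in $E(G)$, or it is a transitive augmentation edge $xy$ witnessed by some $z$ with $x\to z\to y$ in $D$, or it is a fraternal edge witnessed by $z$ with $x\to z$ and $y\to z$. The critical observation is that in both augmentation cases the witness $z$ is a common $G$-neighbor of $x$ and $y$, and at least one of $x,y$ lies in the in-neighborhood $N_D^-(z)$, whose size is at most $\Delta$. This cap on the ``in-side'' of any witness will be the only source of $\Delta$-dependence in the bound.

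Next I would lift $H$ to $G$. Write $H$ as obtained from a subgraph of $G'$ by contracting disjoint connected subgraphs $B_1,\ldots,B_n$, each of radius at most $r$ in $G'$, centered at vertices $c_i$. For each $B_i$ I would choose a BFS spanning tree rooted at $c_i$ and replace every tree-edge that is an augmentation edge by its $G$-path of length two through the witness, obtaining a connected $G$-subgraph $B_i^*$ of radius at most $2r$. For each edge of $H$ I would pick a realizing $G'$-edge between $B_i$ and $B_j$; if it is an augmentation edge, I would adjoin its witness to $B_i^*$, which increases the radius by at most one, giving final radius at most $2r+1$.

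Finally I would repair non-disjointness and count. A single witness $z$ may be claimed by several lifts, but since every augmentation edge witnessed by $z$ has an endpoint in $N_D^-(z)$ and the $B_i$ are disjoint, at most $\Delta$ lifts have a legitimate claim on $z$ through the in-side. Assigning each witness to one lift costs an $O(\Delta)$-factor multiplicity on edges of $H$, and the parallel count of fraternal contributions contributes a further $\binom{\Delta}{2}$ factor. After this repair, contracting the corrected lifts exhibits $H$, with multiplicities polynomial in $\Delta$, as a $(2r+1)$-shallow minor of $G$; setting $f_r$ to absorb both factors yields $\nabla_r(G')\le f_r(\nabla_{2r+1}(G),\Delta)$. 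The main obstacle will be this disjointness repair: overlaps at witnesses are unavoidable and cannot be discarded without losing edges of $H$, and the argument hinges on exploiting the asymmetry of $D$, where bounded in-degree caps contention at each witness even though out-degrees may be unbounded.
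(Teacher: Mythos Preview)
The paper does not give its own proof of this theorem; it is quoted from Ne\v{s}et\v{r}il and Ossona de Mendez \cite[Lemma~5.2]{bib-nes1}, with a pointer to a self-contained argument in the survey~\cite{wgsurvey}. So there is no proof in the present paper to compare against, and what follows is an assessment of your sketch on its own merits.

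Your overall strategy is the standard one and is correct in outline: each edge of $G'$ is either a $G$-edge or a length-$2$ $G$-path through a witness, so branch sets of radius at most $r$ in $G'$ lift to connected sets of radius at most $2r+1$ in $G$ once witnesses are adjoined. The observation that any witness $z$ is claimed by at most $\Delta$ lifts---because every augmentation edge through $z$ has an endpoint in $N_D^-(z)$ and the original branch sets are disjoint---is exactly the right lever, and you are right that this is where the in-degree bound does all the work.

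The gap is in the repair step. You write that ``assigning each witness to one lift costs an $O(\Delta)$-factor multiplicity on edges of $H$,'' but this conflates two different issues. If a witness $z$ is needed to connect a BFS-tree edge $uv$ inside $B_i^*$ and you assign $z$ to some other lift, then $B_i^*$ is no longer connected in $G$ and is simply not a valid branch set; this is a connectivity failure, not an edge-multiplicity cost, and no bookkeeping on $E(H)$ recovers from it. (Merging overlapping lifts instead does not help either: a chain of pairwise-overlapping lifts can have arbitrarily large radius.) One clean resolution is not to assign witnesses at all but to tolerate the overlap: since every vertex of $G$ lies in at most $\Delta+1$ of the lifted sets, pass to the lexicographic product $G\bullet K_{\Delta+1}$, place the copies of each shared vertex on distinct layers so that the lifts become genuinely disjoint (radii are preserved because adjacent $G$-vertices are adjacent across all layers), and then use that $\nabla_s(G\bullet K_k)$ is bounded polynomially in $k$ and $\nabla_s(G)$. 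That extra step---turning ``bounded overlap multiplicity'' into an honest shallow minor---is the missing idea; once it (or an equivalent device) is supplied, your outline goes through.
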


Let $G$ be a graph.
Consider the following sequence of directed graphs:
Let $D_0$ be an orientation of $G$ with maximum in-degree at most
$2\nabla_0(G)+2$ and assume that we have constructed $D_0,D_1,\ldots,D_{k-1}$.
(An alert reader may be wondering why we added the extra factor of $+2$ to
the bound on the in-degree. The reason for that will become clear
in the proof of Lemma~\ref{lm-simple}.)
Let $G_k$ be the augmentation of $D_{k-1}$, 
and let $D_k$ be an oriented augmentation of $D_{k-1}$ chosen in such
a way that the maximum in-degree of the subgraph formed by the edges
added according to the fraternality rule is at most $2\nabla_0(G_k)$.
This is possible, because $G_k$ itself has an orientation with maximum
in-degree at most $2\nabla_0(G_k)$.
We say that $G_k$ is a {\em $k$-th augmentation} of $G$.
If $D_{k-1}$ has in-degree at most $\Delta$, then $D_k$ has
in-degree at most $\Delta+\Delta^2+2\nabla_0(G_k)$,
and Theorem~\ref{thm:augment} implies that
$\nabla_r(G_k)\le f_r(\nabla_{2r+1}(G_{k-1}),\Delta)$ 
for all $r\ge0$.
Thus we arrive at the following result of Ne\v{s}et\v{r}il and
Ossona de Mendez.

\begin{theorem}
\label{thm:augmexp}
Let $\GG$ be a class of graphs, let $k\ge0$ be an integer, and let
$\GG_k$ be the class of all $k$-th augmentations of members of $\GG$.
\begin{itemize}
\item[(i)] If $\GG$ has bounded expansion, then $\GG_k$ has bounded
expansion.
\item[(ii)] If $\GG$ is nowhere dense, then $\GG_k$ is nowhere dense.
\end{itemize}
\end{theorem}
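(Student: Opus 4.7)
The plan is to prove both parts by induction on $k$, carrying out both simultaneously by tracking two coupled quantities: uniform bounds on $\nabla_r(G_k)$ for every $r\ge 0$, and a uniform bound $\Delta_k^*$ on the maximum in-degree of $D_k$. These must be handled together because Theorem~\ref{thm:augment} bounds $\nabla_r(G_k)$ in terms of $\nabla_{2r+1}(G_{k-1})$ and $\Delta_{k-1}$, while the recurrence $\Delta_k\le\Delta_{k-1}+\Delta_{k-1}^2+2\nabla_0(G_k)$ stated immediately before the theorem in turn relies on a bound on $\nabla_0(G_k)$, which is established one step later. The case $k=0$ reduces to the hypothesis on $\GG$, together with $\Delta_0\le 2\nabla_0(G)+2$ from the definition of $D_0$.

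For part~(i), assume inductively that for every $r\ge 0$ there is a constant $C_{k-1}(r)$ with $\nabla_r(G_{k-1})\le C_{k-1}(r)$ uniformly over $G\in\GG$, and that $\Delta_{k-1}^*$ is a uniform constant. Theorem~\ref{thm:augment} then yields $\nabla_r(G_k)\le f_r(C_{k-1}(2r+1),\Delta_{k-1}^*)$, a constant depending only on $r$, $k$, and $\GG$; call it $C_k(r)$. The in-degree recurrence then furnishes a finite new in-degree bound $\Delta_k^*\le\Delta_{k-1}^*+(\Delta_{k-1}^*)^2+2C_k(0)$, closing the induction.

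For part~(ii), I will use the equivalent formulation of nowhere-density referenced by the authors via~\cite[Corollary~3.3]{bib-nowhere}, namely that for each $r\ge 0$ and $\varepsilon>0$ there is a constant $K_{r,\varepsilon}$ with $\nabla_r(G)\le K_{r,\varepsilon}|V(G)|^\varepsilon$ for all $G\in\GG$. Since augmentation does not change the vertex set, write $n=|V(G)|=|V(G_k)|$. The inductive claim at level $k$ is that for every $r\ge 0$ and every $\varepsilon>0$ there exist constants $K,K'$ (depending on $k,r,\varepsilon$) with $\nabla_r(G_k)\le Kn^\varepsilon$ and $\Delta_k^*\le K'n^\varepsilon$. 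Let $d_r$ be an upper bound on the total degree of the polynomial $f_r$. If by induction $\nabla_{2r+1}(G_{k-1})\le\tilde Kn^{\varepsilon'}$ and $\Delta_{k-1}^*\le\tilde K'n^{\varepsilon'}$ for the choice $\varepsilon'=\varepsilon/\max(d_r,2)$, then Theorem~\ref{thm:augment} gives $\nabla_r(G_k)\le f_r(\tilde Kn^{\varepsilon'},\tilde K'n^{\varepsilon'})\le K''n^{d_r\varepsilon'}\le K''n^\varepsilon$, and the in-degree recurrence yields $\Delta_k^*\le K'''n^{2\varepsilon'}\le K'''n^\varepsilon$.

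The main thing to be careful about is the exponent bookkeeping: each level inflates the target $\varepsilon$ by a factor of at most $\max(d_r,2)$, so the $\varepsilon$ used at level $0$ must be chosen exponentially small in $k$. Since $k$ is fixed, this is harmless; likewise the value of $r$ invoked at level $0$, which is $2^k r+(2^k-1)$ by iterating the shift $r\mapsto 2r+1$, is a fixed integer, so the nowhere-dense hypothesis on $\GG$ can be invoked for it. Beyond this bookkeeping, the uniformity of constants across $G\in\GG$ is immediate from the definitions.
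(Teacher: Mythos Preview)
Your inductive scheme---tracking both $\nabla_r(G_k)$ and the in-degree bound $\Delta_k$ via Theorem~\ref{thm:augment} and the recurrence $\Delta_k\le\Delta_{k-1}+\Delta_{k-1}^2+2\nabla_0(G_k)$---is exactly the paper's argument: those two inequalities are derived in the paragraph preceding the theorem, and the remark that part~(ii) needs the $f_r$ to be polynomials appears immediately afterward. For part~(i) your proof is correct and complete.

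For part~(ii) there is a genuine gap. The characterization you invoke, that $\GG$ is nowhere dense if and only if $\nabla_r(G)\le K_{r,\varepsilon}|V(G)|^\varepsilon$ for all $G\in\GG$, is not correct: this condition is strictly weaker than the paper's definition, which asks that $|E(H)|\le K|V(H)|^{1+\varepsilon}$ for every $r$-shallow minor $H$, with the exponent applied to $|V(H)|$ rather than to $|V(G)|$. A concrete counterexample is the class $\{G_m:m\ge1\}$ where $G_m$ is $K_m$ together with $2^m-m$ isolated vertices: here $\nabla_r(G_m)=(m-1)/2\approx\log|V(G_m)|$, so your condition holds for every $\varepsilon>0$, yet $K_m$ is a $0$-shallow minor with $|E(K_m)|\sim|V(K_m)|^2/2$, so the class is not nowhere dense. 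Consequently your induction only yields $|E(H)|\le K\,|V(H)|\,n^\varepsilon$ for $r$-shallow minors $H$ of $G_k$, and this does not imply that $\GG_k$ is nowhere dense. The paper does not spell this step out either; it attributes the result to Ne\v{s}et\v{r}il and Ossona de Mendez and records only the recurrences. A complete argument must control each shallow minor in terms of its own order---for instance by showing that every $r$-shallow minor of $G_k$ arises from a shallow minor of $G$ of comparable order---rather than merely bounding $\nabla_r(G_k)$ as a function of $n$.
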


Please note that  statement (ii) above needs that the functions
$f_0,f_1,\ldots$ referenced in Theorem~\ref{thm:augment} are polynomials.
It also follows that if $G$ belongs to a class of graphs of bounded expansion,
then $D_1,D_2,\ldots,D_k$ and $G_k$ can be found in linear 
time. We state that as a lemma for future reference.

\begin{lemma}
\label{alg:augment}
For every class $\GG$ of graphs of bounded expansion and for every
fixed integer $k$ there exists a linear-time algorithm that
computes a $k$-th augmentation of $G\in\GG$ and the directed graphs
$D_1,D_2,\ldots,D_k$ as in the definition of $k$-th augmentation.
\end{lemma}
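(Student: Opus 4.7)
The plan is to proceed by induction on $k$. For the base case, Lemma~\ref{lem:bound} produces $D_0$ from $G$ in linear time (a greedy orientation based on a degeneracy ordering yields maximum in-degree at most $2\nabla_0(G)$; inflating this to $2\nabla_0(G)+2$ costs only a constant per vertex). The inductive invariant I maintain is that the maximum in-degree $\Delta_k$ of $D_k$ is bounded by a constant $\Delta_k=\Delta_k(\GG,k)$ depending only on $\GG$ and $k$. This is exactly the content of the inequality $\Delta_k\le \Delta_{k-1}+\Delta_{k-1}^2+2\nabla_0(G_k)$ recorded just before Theorem~\ref{thm:augmexp}, combined with Theorem~\ref{thm:augmexp}(i), which guarantees that $\nabla_0(G_k)$ is bounded by a constant because $\GG_k$ has bounded expansion.

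For the inductive step, suppose $D_{k-1}$ has been computed with maximum in-degree at most $\Delta:=\Delta_{k-1}$. First I enumerate the candidate transitivity edges by iterating over each edge $z\to y$ of $D_{k-1}$ and over each in-neighbor $x$ of $z$; this produces at most $\Delta\cdot|E(D_{k-1})|=O(n)$ directed pairs $(x,y)$. Next I enumerate the candidate fraternality pairs by iterating over each vertex $z$ and each unordered pair of distinct in-neighbors of $z$, which contributes at most $\binom{\Delta}{2}$ pairs per vertex and $O(n)$ pairs in total. Checking whether a candidate $(x,y)$ is already adjacent in $D_{k-1}$ takes $O(\Delta)=O(1)$ time by scanning the in-neighborhoods of $x$ and $y$, so candidates that duplicate an existing edge are discarded within the same asymptotic budget.

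To assemble $G_k$ and $D_k$ I bucket-sort the two candidate lists on each coordinate in turn (which is linear in the RAM model because vertex labels index into arrays) to remove multiplicities; fraternality pairs that already appear as transitivity edges are also dropped. The graph $D_k$ then consists of $D_{k-1}$, together with every surviving transitivity edge oriented naturally from $x$ to $y$, together with every surviving fraternality edge oriented via the greedy rule of Lemma~\ref{lem:bound} applied to the fraternality subgraph (repeatedly remove a vertex of minimum degree and direct its incident edges towards it). Since the fraternality subgraph is a subgraph of $G_k$ and hence has $\nabla_0$ at most $\nabla_0(G_k)$, this greedy rule produces an orientation of the fraternality edges with maximum in-degree at most $2\nabla_0(G_k)$, matching the definition; its running time is linear in the number of fraternality edges, which is $O(n)$. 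The main obstacle is really just the uniform boundedness of the constants $\Delta_k$ and $\nabla_0(G_k)$ as $k$ grows, and that is supplied by Theorem~\ref{thm:augmexp}(i); every remaining step is a direct $O(n)$-time adjacency-list manipulation, yielding the claimed linear-time algorithm.
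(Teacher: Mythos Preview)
Your proof is correct and follows exactly the line the paper intends: the paper does not give a separate proof of this lemma but merely states that it ``also follows'' from the in-degree recursion $\Delta_k\le\Delta_{k-1}+\Delta_{k-1}^2+2\nabla_0(G_k)$ and Theorem~\ref{thm:augmexp}(i). Your argument is a faithful expansion of that remark, supplying the obvious implementation details (enumerating transitivity and fraternality candidates using the bounded in-degree, deduplicating by bucket sort, and orienting the fraternality edges greedily).
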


\rtt{For nowhere dense graphs we have  the following analogue.
\begin{lemma}
\label{alg:ndaugment}
For every nowhere dense class $\GG$ of graphs and for every
fixed integer $k$ there exists an almost linear-time algorithm that
computes a $k$-th augmentation of $G\in\GG$ and the directed graphs
$D_1,D_2,\ldots,D_k$ as in the definition of $k$-th augmentation.
\end{lemma}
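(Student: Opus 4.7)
The plan is to mimic the linear-time proof of Lemma~\ref{alg:augment}, but replace the constant degree bounds in the bounded-expansion setting with bounds of the form $K_\varepsilon n^\varepsilon$ coming from the definition of nowhere denseness, and then show that by starting with a sufficiently small $\varepsilon$ the cumulative blow-up over the $k$ iterations can be absorbed into an arbitrary target exponent.

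Fix $\varepsilon_0>0$; the goal is to produce a running-time bound of $O(n^{1+\varepsilon_0})$. First I would observe that, by the definition of a nowhere dense class and by Theorem~\ref{thm:augmexp}(ii), for each $r\ge 0$ and each $i\in\{0,1,\ldots,k\}$ there is a constant $K=K(r,i,\varepsilon)$ and an $\varepsilon>0$ (to be chosen later as a small multiple of $\varepsilon_0$) with $\nabla_r(G_i)\le K n^{\varepsilon}$ for every $G\in\GG$ on $n$ vertices; in particular $G$ itself satisfies $\nabla_0(G)\le Kn^\varepsilon$, so a greedy orientation yields $D_0$ in linear time with maximum in-degree $\Delta_0=O(n^\varepsilon)$.

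Assuming $D_{i-1}$ has been constructed with in-degree $\Delta_{i-1}$, the next step is to compute $G_i$ and $D_i$. To find all transitivity/fraternality edges, for every vertex $v$ I enumerate every ordered pair of in-edges $(xv,yv)$ (respectively $(xv,vz)$) of $D_{i-1}$; this costs $O(n\Delta_{i-1}^2)$ per iteration. The resulting graph $G_i$ satisfies $\nabla_0(G_i)\le K_\varepsilon n^\varepsilon$, so the fraternality edges can be oriented greedily with in-degree $O(n^\varepsilon)$, in time proportional to $|E(G_i)|=O(n^{1+\varepsilon})$. By the recurrence $\Delta_i\le \Delta_{i-1}+\Delta_{i-1}^2+2\nabla_0(G_i)$ used already in the excerpt, induction on $i$ gives $\Delta_i=O(n^{c_i\varepsilon})$ for some constant $c_i$ depending only on $i$ and on the polynomials $f_0,f_1,\ldots$ of Theorem~\ref{thm:augment}. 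Hence iteration $i$ costs $O(n^{1+2c_{i-1}\varepsilon})$, and summing over the fixed number $k$ of iterations the total running time is $O(n^{1+c\varepsilon})$ with $c$ depending only on $k$ and $\GG$. Choosing $\varepsilon:=\varepsilon_0/c$ at the outset yields the desired $O(n^{1+\varepsilon_0})$ bound, which is exactly what ``almost linear time'' means.

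The main technical point (and the only real difference from the bounded-expansion case) is the careful bookkeeping of how the exponents compound across the $k$ augmentation steps: I must make sure that the constants $c_i$ and the ``$K$'' obtained from nowhere denseness are chosen uniformly before fixing $\varepsilon$, so that a single choice of $\varepsilon$ controls all $k$ stages simultaneously. Once that is set up, each individual operation (greedy orientation, enumeration of in-in paths, deduplication of new edges) is routine and was already used in the proof of Lemma~\ref{alg:augment}; the polynomial dependence of the functions $f_r$ in Theorem~\ref{thm:augment} is what keeps the exponent blow-up multiplicative rather than super-polynomial, and is essential to the argument.
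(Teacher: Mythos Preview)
Your argument is correct and is precisely the natural proof the paper has in mind: the paper actually states Lemma~\ref{alg:ndaugment} without proof, presenting it simply as the nowhere-dense analogue of Lemma~\ref{alg:augment}, and your write-up fills in exactly the expected details (greedy orientation, $O(n\Delta_{i-1}^2)$ enumeration of transitivity/fraternality pairs, the in-degree recurrence, and the choice of $\varepsilon$ small enough to absorb the $2^k$-fold exponent blow-up). The one point worth tightening is the remark that ``the constants $c_i$ \ldots\ are chosen uniformly before fixing $\varepsilon$'': in fact the $c_i$ come purely from the quadratic recurrence $\Delta_i\le\Delta_{i-1}^2+\Delta_{i-1}+2\nabla_0(G_i)$ and do not depend on $\varepsilon$ at all, so the order of quantifiers is automatically correct; only the constants $K(r,i,\varepsilon)$ from nowhere denseness of $\GG_i$ depend on $\varepsilon$, and those enter the hidden constant in the $O(\cdot)$, not the exponent.
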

}


An {\em out-branching} is a rooted \rt{directed} tree where every edge is directed
away from the root.
A {\em rooted forest} $F$ is a directed graph such that every
weak component is an out-branching (recall that a weak component
of a directed graph is any minimal subgraph with no incoming or
outgoing edge).
A {\em subforest} of $F$ is a subgraph $F'$ of $F$ such that
if $v$ is a vertex included \rtt{in} $F'$,
then the path between the root of the tree of $F$ containing $v$ and
the vertex $v$ is also contained in $F'$.
The {\em depth of a vertex} $v$ of a rooted forest $F$ is the number of vertices on
the path from the root of the tree containing $v$ \rt{to} the vertex $v$.
The {\em depth of a rooted forest} $F$ is the maximum depth of a vertex of $F$.
Finally, if $F$ is a rooted forest $F$,
then the {\em subtree} of a vertex $v$ is the subgraph of $F$ induced
by all vertices reachable from $v$.

The {\em closure} of a rooted forest $F$ is the undirected graph with vertex-set $V(F)$
and edge-set all pairs of \rt{distinct} vertices joined by a directed path in $F$.
The {\em tree-depth} of an (undirected) graph $G$
is the smallest integer $s$ such that $G$ is a subgraph of the closure of
a rooted forest of depth $s$.
For an integer $d\ge1$
a vertex coloring of a graph $G$ is a {\em low tree-depth
coloring of order $d$} if for every $s=1,2,\ldots,d$ the union of any $s$ color
classes induces a subgraph of $G$ of tree-depth at most $s$.
In particular, every low tree-depth coloring of order $d$ of $G$ is
a proper coloring of~$G$.
\rt{%
If $s\in\{1,2,\ldots,d\}$ and $H$ is the subgraph of $G$ induced by
$s$ color classes, then there exists a rooted forest $F$ of depth at most
$s$ such that $H$ is a subgraph of the closure of $F$.
If for all $s$ and all subgraphs $H$ as above the forest $F$ can be
chosen in such a way that its \rtt{closure} is a subgraph of
some fixed graph $G'$, then we say that the low tree-depth coloring is 
{\em $G'$-compliant},
and we refer to the corresponding forests $F$ as
{\em depth-certifying forests}.%
}


The following theorem follows from~\cite[Lemma~6.2]{bib-nes1}.
In the interest of clarity we give a proof.

\begin{theorem}
\label{lem:lowtdepth}
Let $G$ be a graph, let $d$ be an integer, let $k:=3(d+1)^2$,
let $G'$ be a $k$-th augmentation of $G$, and let $c$ be a proper coloring of $G'$.
Then $c$ is a $G'$-compliant low tree-depth coloring of $G$ of order $d$.
\end{theorem}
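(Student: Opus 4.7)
The natural approach is induction on $s$, establishing that for each $s \in \{1, 2, \ldots, d\}$ and each family $\mathcal{S}$ of $s$ color classes of $c$ the subgraph $G[\bigcup \mathcal{S}]$ admits a rooted forest of depth at most $s$ whose closure contains $G[\bigcup \mathcal{S}]$ and lies in $G'$. The base case $s = 1$ is immediate: since $c$ is a proper coloring of $G' \supseteq G$, a single color class is an independent set in $G$, and the trivial forest of isolated vertices works with empty closure.

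For the inductive step, I would single out one color class $C \in \mathcal{S}$, apply induction to $\mathcal{S} \setminus \{C\}$ to obtain a depth-$(s-1)$ forest $F'$ for $H' := G[\bigcup(\mathcal{S} \setminus \{C\})]$, and then attach each vertex $v \in C \cap V(H)$ (where $H := G[\bigcup \mathcal{S}]$) as a new leaf of $F'$, below the deepest ancestor in $F'$ of $N_H(v) \cap V(H')$. For this to produce a valid depth-$s$ forest whose closure contains $H$ and still lies inside $G'$, two conditions must be met: (a) $N_H(v) \cap V(H')$ forms a chain in $F'$, so that such a deepest common ancestor exists, and (b) every vertex on the root-to-parent path of the new leaf $v$ in $F'$ is $G'$-adjacent to $v$, so that the new closure-edges incident to $v$ lie in $G'$.

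Both conditions are where the augmentation structure becomes decisive. Each edge $uv$ of $H$ with $v \in C$ and $u \in V(H')$ gets an orientation in $D_0$. Since $D_0$ has in-degree bounded by $2\nabla_0(G) + 2$, only a bounded number of $u$'s satisfy $u \to v$, while fraternality in $D_1$ forces all $D_0$-out-neighbors of $v$ to become a clique in $G_1$. Iterating fraternality and transitivity through $D_0, D_1, \ldots, D_{k-1}$, one tracks how the edges among and between the $N_H(v)$'s accumulate so that, by stage $k$, there are enough edges in $G_k = G'$ both to force $N_H(v) \cap V(H')$ to be a chain in a well-chosen $F'$ produced by the induction, and to supply the additional ancestor-to-$v$ adjacencies needed in (b).

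The principal obstacle is the quantitative budget: verifying that exactly $k = 3(d+1)^2$ augmentations suffice. The identity $3(d+1)^2 = \sum_{i=1}^{d+1}(6i - 3)$ is consistent with the inductive step from $s-1$ to $s$ consuming $6s - 3$ augmentation rounds, but making this match precisely is the delicate part. I expect the extra ``$+2$'' slack in the in-degree bound on $D_0$, flagged immediately before the definition of $k$-th augmentation and promised to be relevant in the proof of Lemma~\ref{lm-simple}, is exactly what allows the inductive step to go through within the claimed budget.
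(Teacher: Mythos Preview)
Your inductive scheme is not how the paper proceeds, and it has a real gap at condition~(a). The paper's proof is a two-line reduction to an external lemma: for each $s\le d$ one notes that the subgraph $H$ of $G$ induced by $s$ colour classes is properly $s$-coloured by $c$, hence $K_{s+1}$-free, and then \cite[Lemma~2]{wgsurvey} (applied per component with both parameters set to $s+1$) directly yields a depth-$s$ forest whose closure contains $H$ and sits inside $G'$. All the work tracking transitivity and fraternality lives inside that cited lemma, whose hypothesis is the absence of $K_{s+1}$, not a colour count.

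Your induction on the number of colour classes breaks at the point you flagged: you need $N_H(v)\cap V(H')$ to be a chain in $F'$. Your argument for this is that fraternality eventually makes these neighbours a clique in some $G_i\subseteq G'$. But the inductive hypothesis only guarantees that the closure of $F'$ \emph{contains} $H'$ and is \emph{contained in} $G'$; it does not say the closure of $F'$ equals $G'[V(H')]$. A clique in $G'$ among vertices of $H'$ need not be a chain in $F'$ unless those clique edges already lie in the closure of $F'$, and nothing forces that. Strengthening the inductive hypothesis to ``the closure of $F'$ contains $G'[V(H')]$'' would fix~(a), but then the depth bound $s-1$ is no longer obvious and you are essentially proving the cited lemma for $G'$ rather than $G$, which changes the problem. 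The phrase ``a well-chosen $F'$ produced by the induction'' papers over this: induction hands you one $F'$, not a family to choose from.

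Finally, the ``$+2$'' has nothing to do with this theorem. As the paper says explicitly, it is there so that in Lemma~\ref{lm-simple} one can throw the function edges $v\to f^A(v)$ and $v\to g^A(v)$ into the orientation $D_0$ while staying within the in-degree bound used to define augmentations. It plays no role in the tree-depth argument.
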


\proof
Let $s\in\{1,2,\ldots,d\}$ and let $H'$ be a subgraph of $G$ induced
by the union of $s$ color classes of the coloring $c$.
If $H'$ has tree-depth at least $s+1$, then let $H$ be a subgraph of $H'$
of tree-depth exactly $s+1$; otherwise, let $H:=H'$.
Since $H$ is $s$-colorable, it does not have a complete subgraph
on $s+1$ vertices.
By~\cite[Lemma~2]{wgsurvey} applied to every component of
$H$ and taking both $d$ and
$p$ in that lemma to be $s+1$ we deduce that $H$ has tree-depth at most $s$
and that it has a depth-certifying forest whose \rtt{closure}
is a subgraph of $G'$. Thus $H=H'$, and the lemma follows.~\qed
\medskip

In fact, it follows from the proof that the depth-certifying forests
can be found in linear-time, formally as follows.

\begin{theorem}
\label{alg:lowtdepth}
Let $d\ge1$ be an integer, and let $k:=3(d+1)^2$.
There exists an algorithm with the following specifications:\\
{\bf Input:} \rtt{An integer $s\in\{1,2,\ldots,d\}$,}
a graph $G$ and directed graphs $D_0,D_1,\ldots,D_k$ as in
the definition of $k$-th augmentation, 
\rtt{%
a proper coloring $c$ of the underlying undirected graph $G'$ of $D_k$ 
(so that $G'$ is a $k$-th augmentation of $G$), and a subgraph $H$ of
$G'$ that is the union of $s$ color classes of $c$.}\\
{\bf Output:} A rooted forest $F$ of depth at most $s$ such that
$H$ is a subgraph of the closure of $F$ and the \rtt{closure}
of $F$ is a subgraph of $G'$.\\
{\bf Running time:} $O(|V(G')|+|E(G')|)$.
\end{theorem}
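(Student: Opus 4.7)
The plan is to turn the existential proof of Theorem~\ref{lem:lowtdepth} into a constructive recursive algorithm. That proof applies \cite[Lemma~2]{wgsurvey} to every connected component of $H$; the cited lemma proceeds by induction on $s$ and, at each recursive step, explicitly picks a root vertex $r$ whose removal partitions the current component into pieces, each supported on at most $s-1$ color classes of $c$, and exhibits each tree edge from $r$ as an edge of $G'$. I would implement this induction directly. Call the procedure \textsc{Build}; given a subgraph $H'$ on $s'$ colors, it first computes the connected components of $H'$ by BFS; when $s'=1$ each component is a single vertex (since $H'$ is then an independent set) and becomes a root with no children; when $s'\ge 2$, for each component $C$ it performs the root-selection step prescribed by the inductive proof, removes the chosen root $r$, and recurses on the components of $C\setminus\{r\}$ with parameter $s'-1$.

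The main obstacle is performing root-selection quickly and guaranteeing that each forest edge produced actually lies in $G'$. I would handle this by preprocessing the sequence $D_0,D_1,\ldots,D_k$ once at the start so that every edge of $G'$ carries pointers to the pair of earlier-generation edges (transitivity) or to the common in-neighbor (fraternality) that caused it to be introduced. Since each $D_i$ has in-degree bounded by a constant depending only on $d$ and $\GG$, every vertex has only $O(1)$ such witness pointers per generation, so this preprocessing takes time and space $O(|V(G')|+|E(G')|)$. The root $r$ required by the inductive step can then be identified by inspecting a constant number of these pointers per candidate, and the fact that each tree edge produced lies in $G'$ is immediate from the pointers themselves.

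For the running time, each level of the recursion does work linear in the current subgraph (BFS, root-selection, and splitting along $r$), and the recursion has depth at most $s\le d$, a constant. Hence every vertex and edge of $G'$ is touched $O(d)$ times in total, yielding the required $O(|V(G')|+|E(G')|)$ bound. Correctness---that $F$ has depth at most $s$, that $H$ is a subgraph of the closure of $F$, and that the closure of $F$ is a subgraph of $G'$---is inherited directly from the inductive argument proving Theorem~\ref{lem:lowtdepth}, since each recursive call produces exactly the kind of forest promised by that step.
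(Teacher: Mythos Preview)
Your proposal is correct and matches the paper's approach, which consists of the single remark that the construction in the proof of Theorem~\ref{lem:lowtdepth} (via \cite[Lemma~2]{wgsurvey}) is effective; you have simply fleshed out that remark.

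Two minor points worth tightening. First, the theorem as stated makes no reference to a class~$\GG$ of bounded expansion, so your appeal to in-degrees ``bounded by a constant depending only on $d$ and $\GG$'' is out of place; fortunately your running-time bound does not need it, since the work per recursion level is already $O(|V(H)|+|E(H)|)\le O(|V(G')|+|E(G')|)$ and there are at most $s\le d$ levels. Second, the witness-pointer preprocessing is more machinery than required: the root $r$ at each step is simply the vertex whose color is unique in the current component (found by a linear-time color count), and the fact that $r$ is $G'$-adjacent to \emph{every} vertex of the component---which is what you need for the closure of $F$, not merely the tree edges, to lie in $G'$---is a conclusion of the cited lemma rather than something the algorithm must verify.
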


\junk{
\rt{It is implicitly shown in~\cite[Section~6]{bib-nes1} that}
if $G$ is an orientation of a graph, then any proper coloring of its 
$(3d^2+1)$-st augmentation $G'$
is an effective low tree-depth coloring of order $d$ for $G$.
\rt{%
Since the argument is constructive,}
the next theorem follows.
\rt{An explicit proof is given in~\cite{wgsurvey}.}
}

\junk{
\begin{theorem}
\label{thm:col}
Let $\GG$ be a class of graphs with bounded expansion, let $d\ge1$ be an 
integer, and let $k:=3(d+1)^2$.
There exists an integer $K$ such that any proper coloring of a $k$-th augmentation of
a graph $G\in\GG$ is a low tree-depth coloring of order $d$ for $G$ with $K$ colors and 
any $s\le d$ color classes induce a subgraph containing a depth-certifying forest.
Moreover, such a coloring of $G\in\GG$ and
the corresponding depth-certifying forests can be found in linear time.
\end{theorem}
Similarly, for classes of nowhere dense graphs, we obtain~\cite{bib-nowhere}:
\begin{theorem}
\label{thm:col+}
Let $\GG$ be a class of nowhere-dense graphs and $d\ge1$ an integer.
There exists $k$ such that any proper coloring of a $k$-th augmentation of
a graph $G\in\GG$ is a low tree-depth coloring of order $d$ for $G$ with $O(n^{\varepsilon})$ colors and
any $s\le d$ color classes induce a subgraph containing a depth-certifying forest.
Moreover, such a coloring of $G\in\GG$ and
the corresponding depth-certifying forests can be found in almost linear time.
\end{theorem}
} 

\section{Deciding FO properties in linear time}\label{sec-formulas}

In this section, we prove Theorem~\ref{thm-bounded2}.
We start with a lemma which allows us to remove quantifiers
from an FO formula (Lemma~\ref{lm-elim}). However, we need more definitions.
Let $L$ be a language and let $X$ be a set of $L$-terms.
An {\em $X$-template $T$} is a rooted forest with vertex set $V(T)$
equipped with a mapping $\alpha_T:X\to V(T)$ such that
$\alpha_T^{-1}(w)\not=\emptyset$ for every vertex $w$ of $T$ with no descendants.
If $\varphi$ is a quantifier-free $L$-formula,
then a {\em $\varphi$-template} is an $X$-template
where $X$ is the set of all terms appearing in $\varphi$.
Two $X$-templates $T$ and $T'$ are isomorphic
if there exists a bijection $f:V(T)\to V(T')$ such that
\begin{itemize}
\item $f$ is an isomorphism of $T$ and $T'$ as rooted forests; in particular,
      $w$ is a root of $T$ if and only if $f(w)$ is a root of $T'$, and
\item $f(\alpha_T(t))=\alpha_{T'}(t)$ for every $L$-term $t\in X$.
\end{itemize}

The number of non-isomorphic $X$-templates of a given depth is finite, as
stated in the next proposition. The proof is straightforward and
is left to the reader.

\begin{proposition}
\label{prop-fin-template}
For every finite set of terms $X$ and every integer $d$,
there exists \rtt{an integer $K$ such that there are at most $K$}
non-isomorphic $X$-templates of depth at most $d$.
\end{proposition}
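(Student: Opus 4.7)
The plan is to show that both the underlying rooted forest of an $X$-template and the map $\alpha_T$ can take only finitely many forms once $X$ and $d$ are fixed. First I would bound the number of leaves of $T$: by definition, every vertex $w$ with no descendants satisfies $\alpha_T^{-1}(w)\neq\emptyset$, and since the fibers of $\alpha_T$ are disjoint subsets of the finite set $X$, the number of leaves of $T$ is at most $|X|$.

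Next I would use the depth bound to pass from a bound on leaves to a bound on the total number of vertices. Every vertex of $T$ lies on some directed path from a root to a leaf (follow children down from the vertex until a leaf is reached, and ancestors up to a root), and any such path contains at most $d$ vertices. Since each root-to-leaf path is determined by its leaf, we obtain $|V(T)| \le d\cdot|X|$.

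With $|V(T)|$ bounded by a function of $|X|$ and $d$, there are only finitely many isomorphism classes of rooted forests that can appear as the underlying forest of an $X$-template of depth at most $d$. For each such forest $T_0$, the number of mappings $X\to V(T_0)$ is at most $|V(T_0)|^{|X|}\le (d|X|)^{|X|}$, and each isomorphism class of $X$-templates with underlying forest isomorphic to $T_0$ is represented by at least one such mapping. Multiplying these finite quantities yields the desired constant $K$. The only nontrivial step is the inequality $|V(T)|\le d\cdot|X|$, which combines the leaf-count restriction coming from the definition of an $X$-template with the depth restriction; the remaining arguments are routine enumerations of rooted forests of bounded size and of functions on a finite set.
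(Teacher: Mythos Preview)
Your argument is correct: bounding the number of leaves by $|X|$ via the condition $\alpha_T^{-1}(w)\ne\emptyset$ for leaves, then bounding $|V(T)|\le d\,|X|$ via root-to-leaf paths, and finally counting forests and maps is exactly the straightforward enumeration the paper has in mind. The paper itself omits the proof as routine, so there is nothing further to compare.
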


Let $L$ be a language and let $X$ be a set of $L$-terms with 
variables $\{x_1,\ldots, x_n\}$.
An {\em embedding} of an $X$-template $T$ in a rooted forest $F$
is a mapping $\nu:V(T)\to V(F)$ such that $\nu(r)$ is a root of $F$
for every root $r$ of $T$ and $\nu$ is an isomorphism of $T$ and
the subforest of $F$ with vertex set $\nu(V(T))$.
Let $S$ be an $L$-structure guarded by the closure of $F$, and $v_1,\ldots,v_n\in V(S)$.
We say that the embedding $\nu$ is {\em $(v_1,\ldots,v_n)$-admissible}
for $S$ if for every term $t(x_1,\ldots,x_n)\in X$, we have
$\nu(\alpha_T(t))=t(v_1,\ldots,v_n)$,
where $t(v_1,\ldots,v_n)$ denotes the element of $V(S)$
obtained by substituting $v_i$ for $x_i$ in the term $t$ and
evaluating the \rt{interpretations in $S$ of the} 
function \rt{symbols in}  the term $t$ 
(in particular, if $x_i\in X$, then $\nu(\alpha_T(x_i))=v_i$).
\rtt{%
We say that the elements $v_1,v_2,\ldots,v_n$ (in the order listed)
are \emph{compatible with $T,F$ and $S$} if there exists a
$(v_1,\ldots,v_n)$-admissible embedding of $T$ in $F$ for $S$.}
We will need the following lemma.

\begin{lemma}
\label{lm-extemp}
Let $d\ge1$ be an integer, let $F$ be a rooted forest of depth at most $d$,
let $L$ be a language, let $\varphi(x_1,\ldots,x_n)$ be a quantifier-free $L$-formula,
let $S$ be an $L$-structure guarded by the closure of $F$, and
let $v_1,\ldots,v_n\in V(S)$.
Then there exists a $\varphi$-template $T$ \rt{of depth at most $d$}
\rtt{such that $v_1,\ldots,v_n$ are compatible with $T,F$ and $S$.}
\end{lemma}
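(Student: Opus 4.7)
The plan is to build $T$ directly from the concrete evaluations of the terms of $\varphi$ inside $F$. Let $X$ be the (finite) set of all terms appearing in $\varphi$. For each $t(x_1,\ldots,x_n)\in X$, the value $w_t:=t(v_1,\ldots,v_n)$ is a well-defined element of $V(S)=V(F)$, obtained by iteratively applying the interpretations in $S$ of the function symbols occurring in $t$. Let $W:=\{w_t : t\in X\}$ and let $W^{\ast}$ be the \emph{ancestor-closure} of $W$ in $F$, that is, the union of $W$ with the vertex sets of all paths from the roots of $F$ to vertices of $W$. Then $W^{\ast}$ is closed under taking ancestors in $F$, so the induced subgraph $F[W^{\ast}]$ is a subforest of $F$.

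Next I would take $T$ to be an abstract isomorphic copy of the rooted forest $F[W^{\ast}]$, with $\alpha_T(t)$ defined to be the copy of $w_t\in W^{\ast}$ for each $t\in X$. The key verification is that $T$ is actually an $X$-template, i.e.\ that $\alpha_T^{-1}(w)\neq\emptyset$ for every leaf $w$ of $T$. This is where one must be careful, but it is forced by the construction: if $u$ is a leaf of $F[W^{\ast}]$ and $u\notin W$, then $u$ would only lie in $W^{\ast}$ because it is a proper ancestor of some $w_t\in W$; but then the path from $u$ to $w_t$ lies in $W^{\ast}$, so $u$ has a descendant in $F[W^{\ast}]$, contradicting that $u$ is a leaf. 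Hence every leaf of $T$ lies in (the copy of) $W$ and so is hit by $\alpha_T$. Since $F$ has depth at most $d$, so does $F[W^{\ast}]$, and therefore $T$ has depth at most $d$.

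Finally, I would exhibit the embedding $\nu\colon V(T)\to V(F)$ as the natural identification of the copy $T$ with its model $F[W^{\ast}]\subseteq F$. By construction $\nu$ sends roots to roots and is an isomorphism of $T$ onto the subforest of $F$ on the vertex set $\nu(V(T))=W^{\ast}$, so $\nu$ is an embedding in the sense of the excerpt. Moreover, for every $t\in X$ we have $\nu(\alpha_T(t))=w_t=t(v_1,\ldots,v_n)$, which is exactly the $(v_1,\ldots,v_n)$-admissibility requirement. Thus $v_1,\ldots,v_n$ are compatible with $T$, $F$ and $S$, finishing the proof.

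The only genuinely non-routine step is the verification that leaves of the ancestor-closure $F[W^{\ast}]$ must themselves lie in $W$; the guarding hypothesis on $S$ is not actually used in this lemma, but it ensures the framework in which later applications of the template will make sense.
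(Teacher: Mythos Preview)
Your proof is correct and follows essentially the same approach as the paper: the paper also takes $T$ to be the smallest subforest of $F$ containing all evaluations of terms in $X$ (which is precisely your ancestor-closure $W^{\ast}$), defines $\alpha_T(t)=t(v_1,\ldots,v_n)$, and uses the identity map as $\nu$. Your version is slightly more explicit in verifying the leaf condition for $\alpha_T$ and in noting that the guarding hypothesis is unused here, but the construction is the same.
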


\begin{proof}
Let $X$ be the set of all $L$-terms that appear in $\varphi$, and let
$Y$ be the set of all evaluations $t(v_1,\ldots,v_n)$ of all terms
$t(x_1,\ldots,x_n)$ from $X$.
Let $T$ be the smallest subforest of $F$ that includes all vertices
from $Y$ and the root of every component of $F$ that includes an 
element of $Y$.
For a term $t(x_1,\ldots,x_n)$ in $X$ let $\alpha_T(t):=t(v_1,\ldots,v_n)$,
and let $\nu$ be the identity mapping \rt{$V(T)\to V(F)$}.
Then $T$ \rtt{is a $\varphi$-template} and $\nu$ 
\rtt{is a $(v_1,\ldots,v_n)$-admissible embedding of $T$ in $F$ for $S$,}
as desired.
\end{proof}

\rtt{We remark that in the previous lemma $T$ and $\nu$ are unique.}
If $F$ is a rooted forest, then \rtt{a} function $p:V(F)\to V(F)$
is the {\em $F$-parent function}
if $p(v)$ is the parent of $v$ unless $v$ is a root of $F$;
if $v$ is a root of $F$, $p(v)$ is set to be equal to $v$.

We now show that it can be tested by a quantifier-free 
formula whether \rtt{there exists an admissible embedding.}

\begin{lemma}[Testing admissibility]
\label{lm-embed}
Let $L$ be a language \rt{that includes} a function symbol $p$ and
let $X$ be a finite set of \rtt{$L$-}terms with variables $x_1, \ldots, x_n$.
If $T$ is an $X$-template, then
there exists a quantifier-free formula $\xi_T(x_1,\ldots,x_n)$ such that
for every rooted forest $F$ and every $L$-structure $S$ guarded by the closure of $F$
such that \rtt{the interpretation} $p^S$ \rtt{of $p$ in $S$}
is the $F$-parent function, and
for every $n$-tuple $v_1,\ldots,v_n\in V(S)$,
the $L$-structure $S$ satisfies $\xi_T(v_1,\ldots,v_n)$
if and only if \rtt{$v_1,\ldots,v_n$ are compatible with $T,F$ and $S$.}
\end{lemma}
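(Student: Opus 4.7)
The plan is to observe that, given the template $T$, the value $\nu(w)$ of any admissible embedding at each $w\in V(T)$ is forced by $v_1,\ldots,v_n$: if $w$ has no descendants, then $\nu(w)$ is the evaluation at $v_1,\ldots,v_n$ of any term $t\in\alpha_T^{-1}(w)$; otherwise $\nu(w)$ is the $S$-parent of the (already forced) value at any child of $w$. This suggests defining, for every $w\in V(T)$, a canonical $L$-term $\tau_w(x_1,\ldots,x_n)$ which evaluates to what $\nu(w)$ must be, and then letting $\xi_T$ be a quantifier-free conjunction of equations and inequations among these terms that codes up exactly the conditions for an admissible embedding.

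Concretely, I would define $\tau_w$ by induction on the distance of $w$ from the leaves of $T$. If $w$ has no descendants, then $\alpha_T^{-1}(w)\neq\emptyset$ by the definition of an $X$-template, so pick some $t_w\in\alpha_T^{-1}(w)$ and set $\tau_w:=t_w$; otherwise pick any child $w'$ of $w$ in $T$ and set $\tau_w:=p(\tau_{w'})$. Then $\xi_T(x_1,\ldots,x_n)$ is the conjunction of the following quantifier-free clauses: (i) $t=\tau_{\alpha_T(t)}$ for every $t\in X$ (enforcing $\nu(\alpha_T(t))=t^S(v_1,\ldots,v_n)$); (ii) $p(\tau_w)=\tau_{w^*}$ for every non-root $w\in V(T)$ with parent $w^*$ (parents in $T$ go to parents in $F$); (iii) $p(\tau_r)=\tau_r$ for every root $r$ of $T$ (roots of $T$ go to roots of $F$); and (iv) $\tau_w\neq\tau_{w'}$ for every two distinct $w,w'\in V(T)$ (injectivity of $\nu$).

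The verification of the equivalence is then routine bookkeeping. If $S\models\xi_T(v_1,\ldots,v_n)$, set $\nu(w):=\tau_w^S(v_1,\ldots,v_n)$; then (i) yields admissibility, (iv) gives injectivity, (iii) places roots of $T$ at roots of $F$, and (ii) makes $\nu$ send parents in $T$ to parents in $F$. Together these make $\nu$ an isomorphism of $T$ onto the subgraph of $F$ induced on $\nu(V(T))$, and (ii)--(iii) imply by induction on depth in $T$ that this induced subgraph is ancestor-closed in $F$, hence actually a subforest. Conversely, given an admissible embedding $\nu$, a straightforward induction on distance from the leaves of $T$ shows $\tau_w^S(v_1,\ldots,v_n)=\nu(w)$ for every $w\in V(T)$, after which clauses (i)--(iv) are immediate. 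The one point deserving care is clause (ii) at a child $w''$ of $w$ that was \emph{not} the child used to define $\tau_w$: there the equation $p(\tau_{w''})=\tau_w$ is a genuine constraint rather than a definitional tautology, but it is still exactly what admissibility demands, so its inclusion is both necessary and correct. Beyond this bookkeeping I expect no serious obstacle: the whole argument is structural and uses only that $p^S$ equals the $F$-parent function.
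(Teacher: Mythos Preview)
Your proposal is correct and is essentially the same approach as the paper's: both encode the template via a quantifier-free conjunction of equalities and inequalities among terms of the form $p^k(t)$ with $t\in X$, using that $p^S$ is the $F$-parent function. The paper simply takes the brute-force conjunction over \emph{all} pairs $p^k(t),p^{k'}(t')$ with $0\le k,k'\le d+1$ (where $d$ is the depth of $T$), asserting equality or inequality according to whether $q^k(\alpha_T(t))=q^{k'}(\alpha_T(t'))$ in $T$; your version is the more economical variant that fixes one canonical representative $\tau_w$ per vertex and writes only the needed clauses.
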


\begin{proof}
Let $q:V(T)\to V(T)$ be the $T$-parent function,
\rtt{and let $d$ be the depth of $T$}.
Set $\xi_T(x_1,\ldots,x_n)$ to be the conjunction of all formulas
\begin{itemize}
\item $p^k(t)=p^{k'}(t')$ if $q^k(\alpha_T(t))=q^{k'}(\alpha_T(t'))$, and
\item $p^k(t)\not=p^{k'}(t')$ if $q^k(\alpha_T(t))\not=q^{k'}(\alpha_T(t'))$,
\end{itemize}
for all pairs of not necessarily distinct terms $t,t'\in X$ and
all pairs of integers $k$ and $k'$, $0\le k,k'\le d+1$ (note that
including the formulas with $t=t'$ allows for testing the depth
of $t$ in $F$).
Here $p^k$ denotes the function $p$ iterated $k$ times.

It is straightforward to show that for $v_1,\ldots,v_n\in V(S)$,
a $(v_1,\ldots,v_n)$-admissible embedding for $S$ of $T$ in $F$
exists if and only if $S\models\xi_T(v_1,\ldots,v_n)$.
\end{proof}

The following lemma is the core of our algorithmic arguments as it allows
replacing an existentially quantified subformula with a quantifier-free formula.
Recall that an $L$-term is simple if it is a variable or 
a function image of a variable, and  
an $L$-formula is simple if all terms appearing in it are simple.

\begin{lemma}
\label{lm-phi-T}
Let $d\ge 0$ be an integer, $L$ a language,
$\varphi(x_0,\ldots,x_n)$ a simple quantifier-free $L$-formula that
is a conjunction of atomic formulas and their negations, and
$T$ a $\varphi$-template.
There exist a language $\overline{L}$ that extends $L$ and
a \rt{(not necessarily simple)}
quantifier-free $\overline{L}$-formula $\overline{\varphi}_T(x_1,\ldots,x_n)$ 
such that the following holds:
\begin{itemize}
\item $\overline{L}$ is obtained from $L$ by adding a function symbol $p$
      and finitely many relation symbols $U_0,\ldots,U_K$ of arity 
      \zz{at most} one, and
\item for every rooted forest $F$ of depth at most $d$ and every $L$-structure $S$
      guarded by the closure of $F$,
      there exists an $\overline{L}$-structure $\overline{S}$
      such that $\overline{S}$ is an expansion of $S$,
      \rtt{$p^{\overline{S}}$ is the $F$-parent function and
      the relations $U^{\overline{S}}_0,\ldots,U^{\overline{S}}_K$
      can be computed (by listing the singletons they contain) in 
      linear time given $F$ and $S$}, and
       for all $v_1,\ldots,v_n\in V(S)$
      \begin{center}
      $\overline{S}\models\overline{\varphi}_T(v_1,\ldots,v_n)$
      if and only if
      $S\models\varphi(v_0,v_1,\ldots,v_n)$ for some $v_0\in V(S)$ such that
      $v_0,v_1,\ldots,v_n$ are compatible with $T,F$ and $S$.
      \end{center}
\end{itemize}
\end{lemma}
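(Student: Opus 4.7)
My plan is to construct $\overline{L}$ and $\overline{\varphi}_T$ explicitly using the combinatorial structure of $T$. The extended language $\overline{L}$ adds the function symbol $p$ (interpreted in $\overline{S}$ as the $F$-parent function, so that the hypothesis of Lemma~\ref{lm-embed} is satisfied) together with a bounded collection of unary relations $U_0,\dots,U_K$ whose interpretations record ``local type'' information about vertices of $F$ and will be precomputable in linear time. The compatibility of $(v_0,v_1,\dots,v_n)$ with $T,F,S$ is already quantifier-free expressible by the formula $\xi_T$ of Lemma~\ref{lm-embed}, so the genuine work is in eliminating the existential quantifier on $v_0$.

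Set $w_0:=\alpha_T(x_0)$ and call a vertex $w\in V(T)$ \emph{anchored} if $w=\alpha_T(t)$ for some term $t$ of $\varphi$ that does not involve $x_0$; in that case every admissible embedding sends $w$ to the fixed element $t(v_1,\dots,v_n)$. Since admissible embeddings are depth-preserving isomorphisms onto subforests of $F$, the images of all ancestors in $T$ of anchored vertices are also determined. If $w_0$ itself is anchored or has an anchored descendant in $T$, then $v_0$ is forced to equal $p^k(t(v_1,\dots,v_n))$ for prescribed $t$ and $k$; I take $\overline{\varphi}_T$ to be the formula obtained by substituting this closed expression for $x_0$ in $\varphi$ and conjoining with $\xi_T$, and no unary predicates are introduced in this subcase. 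Otherwise $v_0$ genuinely varies, and I rewrite every $v_0$-involving atomic subformula of $\varphi$ as a condition on $v_0$ alone: by guarding, the arguments of any relation in $\varphi$ must lie pairwise in ancestor relation in $F$ for the relation to hold, so the template forces each such argument to be either $v_0$, some $f(v_0)$, or $p^{k_j}(v_0)$ (the last replacing any $v_i$ whose template image lies on the path from $w_0$ to the root in $T$); equalities and inequalities between terms reduce to Boolean constants because the embedding is injective. Collecting the resulting conditions yields a quantifier-free formula $\rho(x_0)$ in $L\cup\{p\}$ whose value at any element of $V(S)$ can be evaluated in $O(1)$ time, so its truth set is precomputable in linear time and recorded as a unary relation symbol $U_\rho$.

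A final bottom-up pass through $F$ computes auxiliary unary predicates $U_k(u)$ indicating that $u$ has a descendant at depth $k$ satisfying $U_\rho$ and lying in the branch of $u$ prescribed by the template; because $F$ has depth at most $d$ and the relevant templates are finite in number by Proposition~\ref{prop-fin-template}, this pass runs in linear time. The formula $\overline{\varphi}_T(x_1,\dots,x_n)$ is then the conjunction of $\xi_T$, the $v_0$-free atomic part of $\varphi$, and either $U_k(a)$ applied to an explicit term $a$ built from $v_1,\dots,v_n$ using $p$ and function symbols of $L$ (when the component $T_0$ of $T$ containing $w_0$ has any anchored vertex, so that the embedding pins down the ancestor below which $v_0$ must lie) or a $0$-ary precomputed predicate (when $T_0$ has no anchored vertex and $v_0$ is chosen ``globally'' over roots of $F$). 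The main obstacle I expect is the branch-restriction bookkeeping: when $T_0$ contains anchored vertices in branches sibling to that of $w_0$, the vertex $v_0$ must be placed in a specific subtree below a common ancestor while avoiding the branches occupied by those siblings. I plan to handle this by a small additional collection of unary predicates that record, at each vertex of $F$, through which of its children each relevant anchored element lies, so that the bottom-up pass defining $U_k$ can correctly restrict the search.
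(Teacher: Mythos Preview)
Your overall architecture matches the paper's: handle the case where $w_0=\alpha_T(x_0)$ is determined by anchored terms via substitution; in the remaining case precompute a unary predicate marking vertices below which a valid $v_0$ can live, and combine this with the $v_0$-free clauses and an admissibility test. Two points need attention.

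First, a minor one: in the genuinely-varying case you conjoin $\xi_T$ into $\overline{\varphi}_T$, but $\xi_T$ has $x_0$ free. The paper uses $\xi_{T'}$ instead, where $T'$ is the restriction of $T$ to the terms not involving $x_0$; admissibility of the $x_0$-part is absorbed into the definition of the unary predicate $U_0$.

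Second, and this is the real gap, your proposed fix for branch avoidance does not work as stated. You suggest precomputing, at each vertex of $F$, ``through which of its children each relevant anchored element lies'', but those anchored elements are the values $t_i(v_1,\dots,v_n)$ and hence depend on the free variables; their positions cannot be encoded by unary predicates on $V(F)$ computed once and for all. What must actually be expressed at the vertex $u$ (the image of the nearest anchored ancestor $v$ of $w_0$ in $T$) is that $u$ has a child $w$ with $U_0(w)$ distinct from each of the $m-1$ explicitly named children $p^{k_i-1}(t_i)$. That is still an existential quantifier ranging over the unboundedly many children of $u$, and no finite family of unary predicates recording child identities will eliminate it.

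The paper resolves this by a counting trick rather than by identifying branches. It defines $U_j(u)$ to mean ``$u$ has at least $j$ children satisfying $U_0$''. Since the occupied children are named by explicit terms in $x_1,\dots,x_n$, the formula can test, for every subset $Y\subseteq\{1,\dots,m-1\}$, whether all $p^{k_i-1}(t_i)$ with $i\in Y$ lie in $U_0$, and if so require $U_{|Y|+1}(u)$. The conjunction over all $Y$ says exactly that the number of $U_0$-children of $u$ strictly exceeds the number of \emph{occupied} $U_0$-children, so a fresh one is available for $v_0$. Only boundedly many predicates $U_1,\dots,U_K$ are needed (with $K$ the maximum branching of $T$), and each is computable from $U_0$ in linear time.
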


\begin{proof}
Let $T$ be a $\varphi$-template  of depth at most $d$,
let $q$ be the $T$-parent function, and 
let $X$ be the set of all terms appearing in $\varphi$.
Let $\xi_T$ be the formula from Lemma~\ref{lm-embed} applied to the language
obtained from $L$ by adding the function symbol $p$. 
\zz{%
We will have to distinguish two cases depending on whether the following
condition is satisfied:
}

{
\newclaim{stat0}{\zz{%
The tree of $T$ containing the vertex $\alpha_T(x_0)$
also contains an $\alpha_T$-image of a term in which another variable
appears.}
}

}

Let $K$ be an integer such that every vertex of $T$ has at most $K$
children and $T$ has at most $K$ weak components, and let
$\overline L$ be obtained from $L$ by adding a function symbol $p$
and relation symbols $U_0,\ldots,U_K$.
\zz{%
If (\refclaim{stat0}) holds, then $U_0,\ldots,U_K$ will have arity one;
otherwise $U_0$ will have arity one and $U_1,\ldots,U_K$ will have arity zero.}
The construction of $\overline{\varphi}_T(x_1,\ldots,x_n)$ will proceed
in several steps.

Let $t=f(x_i)$ be an \rt{$L$-}term appearing in $\varphi$, for some function symbol $f\in L^f$ and
a variable $x_i$ with $0\le i\le n$.
\rt{(Since $\varphi$ is simple, every $L$-term appearing in $\varphi$
is either a variable or of this form.)}
If $\alpha_T(t)$ is neither an ancestor nor a descendant of $\alpha_T(x_i)$, then
for every rooted forest $F$ of depth at most $d$, every $L$-structure $S$
guarded by the closure of $F$ and every
choice of $v_1,\ldots,v_n\in V(S)$,
there is no $(v_0,\ldots,v_n)$-admissible embedding for $S$ of $T$ into $F$,
because $v_i$ and $f^S(v_i)$ are adjacent in the Gaifman graph of $S$;
in particular, one is a descendant of the other in $F$.
Hence, \rt{if we} set $\overline{\varphi}_T$ to $\bot$,
\rt{then $\overline{\varphi}_T$ satisfies the conclusion of the lemma.
Since $\varphi$ is simple, we may}
assume the following:

{
\newclaim{stat1}{%
If the variable $x_i$ appears in a term $t\in X$, then $\alpha_T(t)$
is an ancestor or a descendant of $\alpha_T(x_i)$.
}

}


\rt{Assume now that} $\alpha_T(x_0)$ is an ancestor of a vertex $\alpha_T(t)$,
say $q^k(\alpha_T(t))=\alpha_T(x_0)$ for $k\ge 0$,
where $t\in X$ is an \rtt{$L$-}term such that $x_0$ does not appear in $t$.
\rt{In that case let} $\overline{\varphi}_T$
will be the formula obtained from $\varphi\land\xi_T$ by replacing
each $x_0$ with the term $p^k(t)$.
Clearly, \rt{for every $\overline L$-structure $\overline S$ that is
an expansion of $S$ we have}
$\overline{S}\models\overline{\varphi}_T(v_1,\ldots,v_n)$ if and only if 
there is a choice of $v_0$ in $V(F)$ such that
$S\models\varphi(v_0,\ldots,v_n)$ and  $v_0,\ldots,v_n$
\rtt{are compatible with $T,F$ and $S$.}
\rt{Since $\varphi$ is simple}, we \rt{may} assume the following:

{
\newclaim{stat2}{%
Every \rtt{$L$-}term $t\in X$ such that $\alpha_T(t)$ is contained in the subtree of 
$\alpha_T(x_0)$ is $x_0$ or a function image of $x_0$.
}

}


We now define an auxiliary formula $\varphi'$
to be the formula obtained from $\varphi$ by replacing all atomic formulas of the form:
\begin{itemize}
\item $t=t'$, where $t$ and $t'$ are terms such that $\alpha_T(t)\not=\alpha_T(t')$, and
\item $R(t_1,\ldots,t_m)$ such that $\alpha_T(t_1),\ldots,\alpha_T(t_m)$
      are not \rt{the} vertices of a clique in the closure of $T$,
\end{itemize}
by $\bot$. 

{
\newclaim{stat4}{%
Let $S$ be an $L$-structure guarded by the closure of a rooted forest
$F$, and let
there exist a $(v_0,\ldots,v_n)$-admissible embedding $\nu$ of $T$ in $F$ for $S$.
Then $S\models\varphi(v_0,\ldots,v_n)$ if and only if 
$S\models\varphi'(v_0,\ldots,v_n)$.
}

}

\rtt{%
\noindent
We notice that, by the existence of $\nu$,
the atomic formulas that got replaced by the definition of $\varphi'$ are
not satisfied by  $S$.
This proves (\refclaim{stat4}).
}
\bigskip

We will now complete the proof under the assumption 
\zz{that (\refclaim{stat0}) holds.}
Let $v$ be the nearest ancestor of $\alpha_T(x_0)$ in $T$
such that there exists a term $t_v\in X$ such that $x_0$ does not appear in $t_v$ and
$v$ is an ancestor of $\alpha_T(t_v)$. Note that $v\neq \alpha_T(x_0)$ by
(\refclaim{stat2}).
Let $d_v$ be the depth of $v$ in $T$, $d_{x_0}$ the depth of $\alpha_T(x_0)$ and
$m$ the number of children of $v$ in $T$.  Let $t_1,\ldots,t_{m-1}$ be terms such that
$\alpha_T(t_i)$, $1\le i\le m-1$,
are vertices of different subtrees rooted at a child of $v$ and not containing $\alpha_T(x_0)$.
Observe that the variable $x_0$ does not appear in $t_1$, \ldots, $t_{m-1}$ by (\refclaim{stat1}).

Let $X_0$ be the subset of $X$ consisting of the terms mapped by $\alpha_T$ 
to a vertex of \rt{the unique subtree of $T$ that is rooted at a child of $v$ 
and includes $\alpha_T(x_0)$.}
Note that all terms in $X_0$ contain $x_0$ \rt{by (\refclaim{stat2}) and
the choice of $v$}.
Let $T_0$ be the template obtained from $T$ by taking the minimal rooted subtree
containing $\alpha_T(X_0)$ and the root of the tree containing $\alpha_T(x_0)$, and
restricting the function $\alpha_T$ to the terms containing $x_0$.
Further, let $X'_0$ be the subset of $X$ consisting
the terms $t$ such that $\alpha_T(t)$ lies on the path \rt{of $T$ from a} 
root \rt{to} $v$.
Observe that the construction of $\varphi'$ implies that

{
\newclaim{stat3}{%
if a term from $X_0$ appears in a clause of $\varphi'$, then
every term that appears in that clause belongs to $X_0\cup X'_0$.
}

}

Let $\varphi''\rt{(x_1,x_2,\ldots,x_n)}$ be the formula obtained from 
$\varphi'$ by removing clauses containing at least one term from $X_0$ and
replacing each term $t\in X'_0$ containing $x_0$ with $p^k(t_v)$,
where $k$ is the integer such that $\alpha_T(t)=q^k(\alpha_T(t_v))$.
\rt{It follows from (\refclaim{stat1}) that the variable $x_0$ does not
appear in $\varphi''$.}
Let $T'$ be the template obtained from $T$ by taking the minimal \rt{subforest} containing
all the terms without $x_0$ and restricting the function $\alpha_T$ to such terms.
The formula $\overline{\varphi}_T$ will then be the conjunction of the following formulas:
\begin{itemize}
\item[(a)] the formula $\varphi''(x_1,\ldots,x_n)$,
\item[(b)] the formula $\xi_{T'}$ from Lemma~\ref{lm-embed} \rt{applied to} 
the template $T'$, \rt{the language $\overline L$ and the set of $\overline L$-terms $X\setminus X_0$}, and
\item[(c)] the formulas
      $$\neg\left(\bigwedge_{i\in Y} U_0(p^{k_i-1}(t_i))\right)\lor U_{|Y|+1}(p^k(t_v))$$
      for all subsets $Y$ of the set $\{1,\ldots,m-1\}$
      where $k$ is the integer such that $q^k(\alpha_T(t_v))=v$ and $k_i$, $i=1,\ldots,m-1$,
      are the integers such that $q^{k_i}(\alpha_T(t_i))=v$ (we note here that a conjunction
      over an empty set is true by convention).
\end{itemize}
\rt{This completes the definition of $\overline{\varphi}_T$.}
\bigskip

\rt{%
Let $F$ be a rooted forest of depth at most $d$, and let $S$ be an 
$L$-structure guarded by the closure of $F$.
We need to define an $\overline L$-structure $\overline S$
such that $\overline S$ is an expansion of $S$ and $p^{\overline S}$ is
the $F$-parent function.
To do so we need to define the interpretations 
$U_0^{\overline S}, U_1^{\overline S}, \ldots, U_K^{\overline S}$.
}

We define \rt{the} unary relation $U^{\rt{\overline S}}_0(w)$ to be the set of elements $w$ of $F$ at depth $d_v+1$ such that
the subtree of $w$ in $F$ contains an element $v_0$ at depth $d_{x_0}$ (in $F$) with the following properties:
\begin{itemize}
\item there is a $(v_0)$-admissible embedding of the template $T_0$ in $F$ for $S$, and
\item all clauses appearing in the conjunction $\varphi'$ with at least one term from $X_0$
      are true with $x_0=v_0$ and the terms $t\in X'_0$, say $\alpha_T(t)=q^k(\alpha_T(x_0))$,
      replaced with $(p^{\overline{S}})^k(v_0)$.
\end{itemize}
The relation $U^{\rt{\overline S}}_0(w)$ can be computed as follows: for every element $v_0\in V(S)$ at depth $d_{x_0}$ of $F$,
evaluate all terms in $X_0$ \rt{by substituting $v_0$ for $x_0$} and
testing whether the tree $T_0$ and the rooted subtree of $F$
containing the values of the terms are isomorphic as rooted trees (this can be done in time
linear in the size of $T_0$ which is constant). If they are isomorphic,
evaluate the clauses in the conjunction $\varphi'$ with at least one term from $X_0$
with the terms in $X'_0$ replaced with $(p^{\overline{S}})^k(v_0)$.
If all of them are true, add the ancestor $w$ of $v_0$ at depth $d_v+1$ in $F$ to $U_0$ (note that
$w$ and $v_0$ coincide if their depths are the same).
\rt{This produces a valid result by (\refclaim{stat3}).}
Since the time spent by checking every vertex $v_0$ at depth $d_{x_0}$ of $F$ is constant,
the time needed to compute $U^{\rt{\overline S}}_0$ is linear.

\rt{For $i=1,2,\ldots,K$ we} define \rt{the} unary relation 
$U^{\rt{\overline S}}_i(w)$ to be the set of elements $w$ of $F$
at depth $d_v$ such that $U^{\rt{\overline S}}_0(w')$ is true
for at least $i$ children $w'$ of $w$. Clearly, the relations
$U^{\rt{\overline S}}_i(w)$, $1\le i\le \rt{K}$, can be computed in linear time when the relation $U^{\rt{\overline S}}_0$ has been determined.

We now verify that the formula $\overline{\varphi}_T$ has the desired properties.
\rt{Let $v_1,\allowbreak v_2,\ldots,v_n\in V(S)$.}
Suppose \rt{first} that $\overline{S}\models\overline{\varphi}_T(v_1,\ldots,v_n)$.
Thus $\overline S$ satisfies the formulas listed in (a)--(c) above.
\rt{Since $\overline S\models\xi_{T'}(v_1,v_2,\dots,v_n)$}
there exists a $(v_1,\ldots,v_n)$-admissible embedding \rt{$\nu'$} of 
$T'$ in $F$ \rt{for $\overline S$}.
\rt{%
Let $Y$ be the set of all integers $i\in\{1,2,\ldots,m-1\}$ such that
$U_0^{\overline S}(\nu'(\alpha_{T'}(p^{k_i-1}(t_i) )))$ holds, where
$k_i$ is as in (c) above.
Since $\overline S$ satisfies the formula in (c) corresponding to the
set $Y$ we deduce that the vertex $\nu'(v)$
}
has a son $w$ such that $U^{\rt{\overline S}}_0(w)$ is true and
the subtree of $F$ rooted in $w$ does not contain the value of any term 
in $X\setminus X_0$.
In particular, the subtree rooted in $w$ contains a vertex $v_0$ such that
\rt{$\nu'$}
can be extended
to a $(v_0,\ldots,v_n)$-admissible embedding of $T$ in $F$ for $S$ and
all clauses in the conjunction $\varphi'$ containing a term from $X_0$ are satisfied with $x_0=v_0$.
The clauses of $\varphi'$ not containing a term from $X_0$ appear in $\varphi''$ and
they are satisfied \rt{by $S$} since $\overline{S}\models\varphi''(v_1,\ldots,v_n)$.
\rt{Thus $S\models\varphi'(v_0,v_1,\ldots,v_n)$,  and hence
$S\models\varphi(v_0,v_1,\ldots,v_n)$ by (\refclaim{stat4}).}

On the other hand, \rt{assume that there exists $v_0\in V(S)$ such that}
$S\models\varphi(v_0,\ldots,v_n)$ and there exists
a $(v_0,\ldots,v_n)$-admissible embedding \rt{$\nu$} of $T$ 
\rt{into $F$ for $S$}.
\rt{From (\refclaim{stat4})}
it follows that $S\models\varphi'(v_0,\ldots,v_n)$,
\rt{%
and hence $\overline{S}\models\varphi''(v_1,\ldots,v_n)$.
The restriction of $\nu$ to $T'$ shows that 
$\overline{S}\models\xi_{T'}(v_1,\ldots,v_n)$.}
Let $w$ be the son of $\rt{\nu(v)}$ whose subtree contains $v_0$. 
It follows that $U_{\rt{0}}^{\rt{\overline S}}(w)$.
\rt{The existence of $w$ shows that the formulas listed in item (c) are
satisfied by $\overline S$. Thus $\overline S$ satisfies all formulas in 
(a)--(c), and hence}
it follows that $\overline{S}\models\overline{\varphi}_T(v_1,\ldots,v_n)$.
This completes the proof under the assumption that 
\zz{(\refclaim{stat0}) holds.}

The complementary case \zz{when (\refclaim{stat0}) does not hold}
is handled similarly.
In this case, the predicate $U^{\rt{\overline S}}_0$ is defined for the roots of the trees of $F$, and
the \zz{nullary} predicates $U^{\rt{\overline S}}_1$, \ldots, $U^{\rt{\overline S}}_{K}$ 
are such that such that $U^{\rt{\overline S}}_i$ is true 
if $U^{\rt{\overline S}}_0(r)$ is satisfied for at least $i$
roots \rt{$r$} of the trees in $F$.
\end{proof}

We now prove \rtt{a} lemma that forms the core of our first algorithm.

\begin{lemma}[Quantifier elimination lemma]
\label{lm-elim}
Let $d\ge 0$ be an integer, $L$ a language and $\varphi(x_1,\ldots,x_n)$ 
a simple $L$-formula of
the form $\exists x_0 \; \varphi'(x_0,\ldots,x_n)$ 
such that $\varphi'(x_0,\ldots,x_n)$ is a quantifier-free $L$-formula
with free variables $x_0,\ldots,x_n$.
There exist a language $\overline{L}$  and a quantifier-free
(not necessarily simple) 
$\overline{L}$-formula $\overline{\varphi}$ such that
the following holds:
\begin{itemize}
\item $\overline{L}$ is obtained from $L$ by adding a function symbol $p$
      and finitely many relation symbols of arity  \rt{one}, and
\item for every rooted forest $F$ of depth at most $d$ and every $L$-structure $S$
      guarded by the closure of $F$,
      there exists an $\overline{L}$-structure $\overline{S}$ 
      such that $\overline{S}$ is an expansion of $S$ and
      for every $v_1,\ldots,v_n\in V(S)$,
      \begin{center}
      $S\models\varphi(v_1,\ldots,v_n)$ if and only if $\overline{S}\models\overline{\varphi}(v_1,\ldots,v_n)$
      \end{center}
      where $p^{\overline{S}}$ is the $F$-parent function and
      the \rt{interpretations in $\overline S$ of} the new relation symbols
      can be computed (by listing the singletons they contain) in linear time given $F$ and $S$.
\end{itemize}
\end{lemma}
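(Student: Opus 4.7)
The plan is to reduce this to the single-template case already handled by Lemma~\ref{lm-phi-T}, via two preparatory reductions: putting $\varphi'$ in disjunctive normal form, and then case-splitting over all $\varphi'$-templates of depth at most $d$, of which there are only finitely many by Proposition~\ref{prop-fin-template}.

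First, rewrite $\varphi'(x_0,\ldots,x_n)$ in disjunctive normal form as $\bigvee_{j=1}^{N}\psi_j(x_0,\ldots,x_n)$, where each $\psi_j$ is a conjunction of atomic $L$-formulas and their negations. Since $\varphi'$ is simple and DNF conversion does not introduce new terms, each $\psi_j$ is simple, so $\exists x_0\,\varphi'$ is equivalent to $\bigvee_{j=1}^{N}\exists x_0\,\psi_j$. For each $j$, let $\mathcal T_j$ denote the finite family of $\psi_j$-templates of depth at most $d$. By Lemma~\ref{lm-extemp}, for every rooted forest $F$ of depth at most $d$, every $L$-structure $S$ guarded by the closure of $F$, and every tuple $v_0,\ldots,v_n\in V(S)$, the tuple is compatible with some $T\in\mathcal T_j$. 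Hence $\exists x_0\,\psi_j$ is equivalent to
\[
\bigvee_{T\in\mathcal T_j}\exists x_0\bigl(\psi_j(x_0,\ldots,x_n)\ \wedge\ (x_0,\ldots,x_n)\text{ compatible with }T,F,S\bigr).
\]

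Second, I would apply Lemma~\ref{lm-phi-T} to each pair $(\psi_j,T)$ with $T\in\mathcal T_j$, obtaining a language extension $\overline{L}_{j,T}$ (adding the parent function symbol $p$ together with finitely many unary relation symbols) and a quantifier-free $\overline{L}_{j,T}$-formula $\overline{\psi}_{j,T}(x_1,\ldots,x_n)$ equivalent to the inner expression displayed above. After renaming the new relation symbols so that the extensions for distinct $(j,T)$ share only $p$, let $\overline L$ be the union of all these extensions, and set
\[
\overline{\varphi}(x_1,\ldots,x_n)\ :=\ \bigvee_{j=1}^{N}\bigvee_{T\in\mathcal T_j}\overline{\psi}_{j,T}(x_1,\ldots,x_n).
\]
Given $F$ and $S$ as in the lemma statement, build $\overline S$ by interpreting $p$ as the $F$-parent function and interpreting each new unary relation symbol of $\overline{L}_{j,T}\setminus L$ as in Lemma~\ref{lm-phi-T}. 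The claimed equivalence $S\models\varphi(v_1,\ldots,v_n)\Leftrightarrow \overline{S}\models\overline{\varphi}(v_1,\ldots,v_n)$ then follows by chaining the DNF equivalence, the template case-split from Lemma~\ref{lm-extemp}, and the per-template equivalences from Lemma~\ref{lm-phi-T}. The number of pairs $(j,T)$ depends only on $\varphi'$ and $d$, and each invocation of Lemma~\ref{lm-phi-T} produces its unary relations in linear time, so the total preprocessing cost for $\overline S$ is linear in $|V(F)|+|S|$.

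The main obstacle is bookkeeping rather than mathematical: verifying that DNF conversion preserves simplicity (immediate), that the language extensions from distinct $(j,T)$ can be made disjoint outside of $p$, and that the side condition ``$(v_0,\ldots,v_n)$ is compatible with $T$'' built into Lemma~\ref{lm-phi-T} precisely matches the template case-split provided by Lemma~\ref{lm-extemp}. All of the real work has already been done inside Lemma~\ref{lm-phi-T}; the present lemma is essentially its closure under finite disjunctions across DNF clauses and across $\varphi'$-templates.
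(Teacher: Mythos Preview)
Your proposal is correct and follows essentially the same approach as the paper: convert $\varphi'$ to DNF, distribute the existential over the disjunction, and for each resulting conjunction take a finite disjunction over all templates of depth at most $d$ (via Proposition~\ref{prop-fin-template} and Lemma~\ref{lm-extemp}), invoking Lemma~\ref{lm-phi-T} for each template and merging the resulting language extensions over a common parent symbol $p$. The only cosmetic difference is that the paper handles the DNF disjuncts by an induction on the length of $\varphi'$ rather than processing all $\psi_j$ simultaneously.
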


\begin{proof}
Let $d$, $L$ and $\varphi'$ be fixed.
We assume without loss of generality that
the formula $\varphi'$ is in the disjunctive normal form and
all the variables $x_0,\ldots,x_n$ appear in $\varphi'$.
Let $F$ be a rooted forest of depth at most $d$, and let $S$ be
an $L$-structure.

The proof proceeds by induction on the length of $\varphi'$. If $\varphi'$ is
a disjunction of two or more conjunctions, i.e., $\varphi'=\varphi_1\lor\varphi_2$,
we apply induction to the formulas $\exists x_0\varphi_1$ and $\exists x_0\varphi_2$.
We obtain languages $L_1$ and $L_2$, and for $i=1,2$ 
an $L_i$-formula $\overline{\varphi}_i$ and 
an $L_i$-structure $\overline{S}_i$.
We assume that the new unary relation symbols of $L_1$ and $L_2$ are distinct and
set $\overline{L}^r=L^r_1\cup L^r_2$, $\overline{L}^f=L^f_1=L^f_2=L^f\cup\{p\}$ and
$\overline{\varphi}=\overline{\varphi}_1\lor\overline{\varphi}_2$.
We define the $\overline{L}$-structure $\overline{S}$ by 
$V(\overline{S})=V(S)$ and by taking the interpretations of symbols
from $\overline{S}_1$ and $\overline{S}_2$.

Thus in the remainder of the proof we may assume that 
$\varphi'$ is a conjunction.
Let $v_1,v_2,\ldots,v_n\in V(S)$.
By Lemma~\ref{lm-extemp} we have
$S\models\varphi(v_1,\ldots,v_n)$ if and only if
there exist $v_0\in V(S)$ and a $\varphi'$-template $T$ of depth at 
most $d$ such that $S\models\varphi'(v_0,\ldots,v_n)$ and there
exists an embedding
of $T$ into $F$ that is $(v_0,\ldots,v_n)$-admissible for $S$.
By Proposition~\ref{prop-fin-template} the number of $\varphi'$-templates
of depth at most $d$ is bounded by a function of $\varphi$ and $d$.
By Lemma~\ref{lm-phi-T},
for every $\varphi'$-template $T$ of depth at most $d$,
there exist a language $L_T$, a quantifier-free ${L_T}$-formula 
${\varphi}_T$ and an $L_T$-structure $S_T$ that is an expansion of $S$
such that for every $v_1,\ldots,v_n\in V(S)$,
$S_T\models{\varphi}_T(v_1,\ldots,v_n)$ if and only if
there exists $v_0$ such that
there is a $(v_0,v_1,\ldots,v_n)$-admissible embedding of $T$ in $F$ for $S$ and
$S\models\varphi'(v_0,v_1,\ldots,v_n)$.
We may assume that for distinct $\varphi'$-templates $T$ and $T'$,
if a function or a relation symbol belongs both $L_T$ and $L_{T'}$,
then it belongs to $L$.
Let $\overline{L}$ be the language consisting of all function and relation
symbols of all $L_T$,
let the formula $\overline{\varphi}$ be obtained as
the disjunction of the $\overline{L}$-formulas $\overline{\varphi}_T$,
where the disjunction runs over all choices of $\varphi'$-templates $T$,
and let the $\overline{L}$-structure $\overline{S}$ be obtained 
by taking the union of the interpretations of all $S_T$.
Then $\overline{L}$, $\overline{\varphi}$ and $\overline{S}$
are as desired.
\end{proof}

In order to apply Lemma~\ref{lm-elim}, the \rt{given} formula needs to be simple
but the lemma produces a formula that need not be simple.
The following lemma copes with this issue.

\begin{lemma}
\label{lm-simple}
\rt{Let $\GG$ be a class of graphs of bounded expansion,}
$L$  a language and $\varphi(x_1,\ldots,x_n)$ an $L$-formula with $q$ quantifiers.
There exist \rt{a class $\GG'$ of graphs of bounded expansion,}
a language $L'$ that extends $L$,
and a simple $L'$-formula $\varphi'(x_1,\ldots,x_n)$ with $q$ quantifiers 
with the following properties.
For every $L$-structure $A$ guarded by a graph $G\rt{\in\GG}$,
there exists an $L'$-structure $A'$ guarded
by a graph \rt{$G'\in\GG'$}
such that \rt{$V(G)=V(G')$, $A'$ is an expansion of $A$ and}
$A\models\varphi(v_1,\ldots,v_n)$ if and only if $A'\models\varphi'(v_1,\ldots,v_n)$
for any $v_1,\ldots,v_n\in V(A)=V(A')$.
Moreover, an $L'$-structure $A'$ and graph $G'$
satisfying the above specifications
can be computed in time \rt{$O(|V(G)|)$}.
\end{lemma}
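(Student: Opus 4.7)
The plan is to introduce a new function symbol for each non-simple (nested) term appearing in $\varphi$, thereby eliminating all nesting from terms. Let $d$ be the maximum nesting depth of terms in $\varphi$. For each non-simple term $t(x)=g_1(g_2(\cdots g_k(x)\cdots))$ with $2\le k\le d$ appearing in $\varphi$, I add a fresh unary function symbol $F_t$ to obtain the extended language $L'$. The simple $L'$-formula $\varphi'$ is obtained by replacing each such occurrence of $t(x)$ with $F_t(x)$; this preserves the quantifier structure, so $\varphi'$ has the same $q$ quantifiers. The expansion $A'$ interprets $F_t^{A'}(v):=t^A(v)$ for each vertex $v$, which can be evaluated in $O(1)$ time per vertex by iterated application of the functions of $A$, so $A'$ is computable in time $O(|V(G)|)$. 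The semantic equivalence $A\models\varphi(v_1,\ldots,v_n)\Leftrightarrow A'\models\varphi'(v_1,\ldots,v_n)$ is then immediate from the construction.

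The main difficulty is to exhibit a class $\GG'$ of bounded expansion and a graph $G'\in\GG'$ that guards $A'$. Compared to $G$, which guards $A$, the Gaifman graph of $A'$ contains additional edges of the form $\{v,F_t^{A'}(v)\}$; these edges can join pairs of vertices at graph-distance up to $d$ in $G$, and closing $G$ naively under all such pairs would destroy bounded expansion, since $d$-th powers of bounded-expansion graphs are generally not bounded-expansion. The key idea is to use the functional structure of the new edges together with the augmentation machinery of Theorem~\ref{thm:augmexp}.

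Concretely, I orient $G$ as $D_0$ via Lemma~\ref{lem:bound}, exploiting the ``$+2$'' slack in the definition of $D_0$ (more generally, slack of size $|L^f|$, absorbed by a standard greedy argument) so that every function edge $\{v,f^A(v)\}$ with $f\in L^f$ is oriented as $f^A(v)\to v$. Under this orientation, a composite term $t^A(v)=g_1^A(g_2^A(\cdots g_k^A(v)\cdots))$ is the source of a directed path $t^A(v)\to g_2^A(g_3^A(\cdots))\to\cdots\to g_k^A(v)\to v$ of length $k\le d$ in $D_0$. Forming the oriented augmentations $D_1,\ldots,D_m$ with $m:=\lceil\log_2 d\rceil$ as in the definition of $m$-th augmentation, each augmentation step halves the length of every such directed path via transitivity, so after $m$ steps the pair $(t^A(v),v)$ is joined by an edge in $D_m$. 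Hence the Gaifman graph of $A'$ is contained in $G_m$, the underlying undirected graph of $D_m$, and I take $G':=G_m$.

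By Theorem~\ref{thm:augmexp}(i), the class $\GG'$ of $m$-th augmentations of graphs in $\GG$ has bounded expansion; hence $G'\in\GG'$ as required. Linear-time construction of $D_0,\ldots,D_m$, and therefore of $G'$, is provided by Lemma~\ref{alg:augment}. The main obstacle is the orientation step, which requires simultaneously orienting all function edges in a prescribed direction while retaining a bounded in-degree; this is exactly the reason for the extra slack built into the definition of $D_0$.
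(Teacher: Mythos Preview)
Your proof is correct and rests on the same mechanism as the paper's---use augmentations so that the guarding graph acquires the edges $\{v,t^A(v)\}$ for composite terms $t$---but the execution differs. The paper proceeds one composition at a time: it picks a single nested pattern $g(f(\cdot))$, introduces one fresh symbol $h$ for $g\circ f$, takes one augmentation (the $+2$ slack is tailored precisely to this: one extra in-edge for $f$ and one for $g$ at each vertex), and then iterates on the resulting graph and formula, so the number of augmentation rounds is linear in the nesting depth. You instead introduce a symbol $F_t$ for every non-simple term simultaneously, orient all function edges at once, and observe that a depth-$k$ composition yields a directed path of length $k$ in $D_0$ which transitivity collapses to a single edge after $\lceil\log_2 d\rceil$ rounds. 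Your route is a bit slicker in the number of augmentation steps but costs more function symbols and requires in-degree slack $|L^f|$ rather than $2$. Since Theorem~\ref{thm:augmexp} and Lemma~\ref{alg:augment} are stated for the specific $+2$ definition of $k$-th augmentation, you should say explicitly that their proofs (via Theorem~\ref{thm:augment}) go through verbatim with any fixed constant in place of~$2$; once that is noted, your argument is complete.
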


\begin{proof}
We may assume that $\varphi$ is not simple, for otherwise there is
nothing to prove.
Let $f$ and $g$ be function symbols of $L$ such that the $L$-term $g(f(t))$
appears in $\varphi$ for some $L$-term $t$.
Let \rt{$\GG_1$ be the class of all first augmentations of members of $\GG$;
then $\GG_1$ has bounded expansion by Theorem~\ref{thm:augmexp}.}
Let $L_1$ be the extension of $L$ obtained by adding a new function
symbol $h$, and for an $L$-structure $A$ we define an $L_1$-structure
$A_1$ as the expansion of $A$, where the interpretation of $h$ is
defined by $h^{A_1}(v)=g^A(f^A(v))$ for all $v\in V(A)$.
Let $\varphi_1$ be obtained from $\varphi$ by replacing \rt{all} appearances
of $g(f(t))$ by $h(t)$.
Then clearly $A\models\varphi(v_1,\ldots,v_n)$ if and only if 
$A_1\models\varphi'(v_1,\ldots,v_n)$
for all $v_1,\ldots,v_n\in V(A)=V(A_1)$.
Let $D'$ be an orientation of $G$ of maximum in-degree $2\nabla_0(G)$,
and let $D$ be obtained from $D'$ by adding all directed edges
with head $v$ and tail $f^{\rt{A}}(v)$ and all directed edges with head $v$
and tail $g^{\rt{A}}(v)$. Since the orientation $D'$ can be obtained in a greedy way,
this step can be performed in time $O(|V(G)|+|E(G)|)\rt{=O(|V(G)|)}$.
Let $G_1$ be the augmentation of $D$.
Then $G_1$ is a first augmentation of $G$ (here we \rt{make} use of the term ``+2''
in the definition of an augmentation) and $A_1$ is guarded by $G_1$.
By repeating this construction at most $k$ times, where $k$ is the maximum
number of function compositions appearing in $\varphi$,
we arrive at a desired formula $\varphi'$. Since each step requires
linear time,
the total running time is linear,
as desired.
\end{proof}

We are now ready to prove Theorem~\ref{thm-bounded2};
we prove it in a stronger form needed in Section~\ref{sec-locally}.

\begin{theorem}
\label{thm-bounded+}
Let $\GG$ be a class of graphs with bounded expansion,
$L$ a language and $\varphi(x_1,\ldots,x_n)$ an $L$-formula.
There exist a language $\overline L$, 
\rt{class $\overline\GG$ of graphs with bounded expansion,
a quantifier-free $\overline L$-formula $\overline\varphi(x_1,\ldots,x_n)$}
 and an algorithm \rt{$\cal A$ such that the following holds.}
Given an $L$-structure $A$ guarded by \rt{a graph}
$G\in\GG$ \rt{the algorithm $\cal A$ finds a graph 
$\overline G\in\overline\GG$ with $V(G)=V(\overline G)$}
and an $\overline L$-structure $\overline A$ guarded by $\overline G$ such that
\rt{$V(\overline A)=V(A)$ and}
for all $v_1,\ldots,v_n\in V(A)=V(\overline A)$
\begin{center}
$A\models\varphi(v_1,\ldots,v_n)$ if and only if 
$\overline A\models\overline\varphi(v_1,\ldots,v_n)$.
\end{center}
The running time of the algorithm \rtt{$\cal A$} is $O(|V(G)|)$.
In particular, if $n=0$,
the algorithm decides whether $A\models\varphi$.
\end{theorem}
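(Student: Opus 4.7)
The proof is by induction on the number $q$ of quantifiers in $\varphi$. If $q=0$, the formula $\varphi$ is already quantifier-free, so take $\overline L:=L$, $\overline\GG:=\GG$, $\overline\varphi:=\varphi$, and let $\cal A$ return $(A,G)$ unchanged. If $q\ge 1$, rewrite each $\forall x\,\chi$ inside $\varphi$ as $\neg\exists x\,\neg\chi$ (which preserves the quantifier count) and pick an innermost existential subformula $\exists x_0\,\psi(x_0,y_1,\ldots,y_m)$, so that $\psi$ is quantifier-free. It will suffice to exhibit a quantifier-free formula equivalent to $\exists x_0\,\psi$ over a suitable expansion of the input structure whose guarding graph still lies in a class of bounded expansion, and then apply the inductive hypothesis to $\varphi$ after substituting this equivalent for $\exists x_0\,\psi$.

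To eliminate the quantifier $\exists x_0$, first invoke Lemma~\ref{lm-simple} to obtain a simple quantifier-free $L'$-formula $\psi'(x_0,y_1,\ldots,y_m)$ over a language $L'\supseteq L$ together with an $L'$-structure $A'$ guarded by a graph $G'$ in a bounded-expansion class $\GG'$, computed in linear time. Lemma~\ref{lm-elim} would now remove the quantifier provided $A'$ were guarded by the closure of a rooted forest of bounded depth; but in general $A'$ is only guarded by $G'\in\GG'$, and bridging this gap is the principal obstacle. It is overcome by invoking a low tree-depth coloring. Let $T$ be the finite set of simple $L'$-terms appearing in $\psi'$, put $d:=|T|$, $k:=3(d+1)^2$, and use Lemma~\ref{alg:augment} to compute in linear time the directed graphs $D_0,\ldots,D_k$ and a $k$-th augmentation $G''$ of $G'$, which belongs to a class $\GG''$ of bounded expansion by Theorem~\ref{thm:augmexp}. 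A proper $K$-coloring of $G''$ is then produced in linear time via Lemma~\ref{lem:bound}, and Theorem~\ref{lem:lowtdepth} guarantees that this coloring is a $G''$-compliant low tree-depth coloring of $G'$ of order $d$.

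Expand the language further by $K$ unary predicates $U_1,\ldots,U_K$ recording colors and expand $A'$ correspondingly; the resulting structure is still guarded by $G''$. For each of the finitely many color assignments $\gamma\colon T\to\{1,\ldots,K\}$ to the terms, form the quantifier-free conjunct $\psi'_\gamma:=\psi'\wedge\bigwedge_{t\in T}U_{\gamma(t)}(t)$. Whenever a tuple satisfies $\psi'_\gamma$, every term value lies in the union $V_\gamma$ of the color classes indexed by $\gamma(T)$, which induces a subgraph of tree-depth at most $d$; Theorem~\ref{alg:lowtdepth} then supplies, in linear time, a depth-certifying rooted forest $F_\gamma$ of depth at most $d$ whose closure contains this subgraph and is itself a subgraph of $G''$. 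Consequently the restriction of the expanded structure to $V_\gamma$ is guarded by the closure of $F_\gamma$, and Lemma~\ref{lm-elim} applied to $\exists x_0\,\psi'_\gamma$ yields a quantifier-free formula $\overline\psi_\gamma$ over a further extension of the language (adding a parent function $p$ and additional unary predicates, each computable in linear time). The disjunction $\bigvee_\gamma\overline\psi_\gamma$ is then a quantifier-free equivalent of $\exists x_0\,\psi$; substituting it for $\exists x_0\,\psi$ in $\varphi$ drops the quantifier count by one, and the inductive hypothesis finishes the argument. Bounded expansion of the successive graph classes is preserved at each step by Theorem~\ref{thm:augmexp}, the number of elimination stages is bounded by $|\varphi|$, and every stage runs in linear time, yielding overall running time $O(|V(G)|)$, as required.
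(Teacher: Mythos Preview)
Your argument follows the same strategy as the paper: induct on the number of quantifiers, simplify via Lemma~\ref{lm-simple}, isolate an innermost existential subformula, take a $k$-th augmentation and a proper coloring of it to obtain a low tree-depth coloring (Theorems~\ref{thm:augmexp} and~\ref{lem:lowtdepth}), split according to a colour assignment $\gamma$ to the terms, restrict to $V_\gamma$ so as to land in a structure guarded by the closure of a bounded-depth forest, invoke Lemma~\ref{lm-elim} there, and take the disjunction over~$\gamma$.

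There is one genuine technical slip. When you pass to the restriction on $V_\gamma$, the interpretation of each function symbol $f\in L'^f$ must be altered (for $v\in V_\gamma$ with $f^{A'}(v)\notin V_\gamma$ one sets, say, $f(v):=v$), and under any such redefinition your conjunct $U_{\gamma(f(x_i))}(f(x_i))$ no longer expresses the same thing in the restricted structure as in $A'$: in the restriction it asserts $c\bigl(f^{A'|_{V_\gamma}}(v_i)\bigr)=\gamma(f(x_i))$, which can hold even though $c\bigl(f^{A'}(v_i)\bigr)\ne\gamma(f(x_i))$. Hence the backward implication---from the restricted structure satisfying $\exists x_0\,\psi'_\gamma$ back to $A'\models\exists x_0\,\psi'$---does not follow as written, because the atomic subformulas of $\psi'$ involving $f(x_i)$ may be evaluated at the wrong element. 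The paper avoids exactly this pitfall by introducing, for every function symbol $f$, additional unary predicates $C_{f,1},\ldots,C_{f,K}$ with $C_{f,j}^{A'}:=\{v:c(f^{A'}(v))=j\}$, and writing the colour constraint for a term $f(x_i)$ as $C_{f,\gamma(f(x_i))}(x_i)$ rather than $U_{\gamma(f(x_i))}(f(x_i))$. Since these are unary predicates applied to \emph{variables}, their interpretations restrict cleanly to $V_\gamma$, and the conjunct then forces $f^{A'}(v_i)\in V_\gamma$ (hence $f^{A'|_{V_\gamma}}(v_i)=f^{A'}(v_i)$) in both structures at once. With this adjustment your proof is correct and coincides with the paper's.
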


\begin{proof}
It suffices to show \rt{the existence of 
$\overline L$, $\overline \GG$, $\overline\varphi$ and $\cal A$} 
satisfying the
specifications of the theorem, except that 
\rt{rather than being quantifier-free,} $\overline\varphi$ has one
fewer quantifier than $\varphi$.
A proof of the theorem is then obtained by iterating this \rt{argument}.

If $\varphi$ is quantifier-free, then there is nothing to prove. Hence, 
\rt{we may and will} assume that
$\varphi$ contains at least one quantifier.
By 
Lemma~\ref{lm-simple} we may  assume that $\varphi$ is simple.

Since $\forall x\;\psi$ is equivalent to $\neg\exists x\;\neg\psi$, we can assume that $\varphi$ contains
a subformula \rt{$\xi(x_1,x_2,\ldots,x_N)$} of the form 
$\exists x_0 \psi\rt{(x_0,x_1,\ldots,x_N)}$,
where $\psi$ is a formula with variables $x_0,x_1,\ldots,x_N$ 
and with no quantifiers.
\rt{We will define a desired formula by replacing the subformula $\xi$ of 
$\varphi$ by a different formula.}


Let $X$ be the set of all $L$-terms that apear in $\xi$,  let
$k:=3(|X|^2+1)^2$, and let $\overline\GG$ be the class of all $k$-th augmentations
of members of $\GG$. By Theorem~\ref{thm:augmexp} the class $\overline\GG$
has bounded expansion. By Lemma~\ref{lem:bound} there exists an
integer $K$ such that every member of $\overline\GG$ is $K$-colorable.
Let $L'$ be the langauge obtained from $L$ by adding $K$ unary relation
symbols $C_1,C_2,\ldots,C_K$, 
\rtt{and for each function symbol $f$ of $L$ another $K$ unary relation
symbols $C_{f,1},C_{f,2},\ldots,C_{f,K}$.}
(Their interpretations in a structure $A$ will be used to encode a given 
$K$-coloring of the graph guarding $A$.)
Let $\Lambda$ be the set of all mappings $X\to\{1,2,\ldots,K\}$,
\rtt{and let} $\alpha\in\Lambda$.
\rtt{For a term $t\in X$ of the form $x_i$ let $E_t:=C_{\alpha(t)}(t)$,
and for $t\in X$ of the form $f(x_i)$ let $E_t:=C_{f,\alpha(t)}(x_i)$.}
Let $\varphi_\alpha$ denote the $L'$-formula
$\bigwedge_{t\in X} E_t$, and let
$\xi_\alpha(x_1,x_2,\ldots,x_N)$ denote the formula
$\exists x_0 (\psi(x_0,x_1,\ldots,x_N)\wedge\varphi_\alpha(x_0,x_1,\ldots,x_N))$.
Let $\overline L_\alpha$ and $\overline \xi_\alpha$ be a language and
a formula obtained by applying Lemma~\ref{lm-elim} to the language $L'$ and the formula
$\xi_\alpha$.
Finally, let $\overline L$ be the language obtained by taking the
union of all function and relation symbols of $L$ and all the languages
$\overline L_\alpha$, and let $\overline\varphi$ be the $\overline L$-formula
obtained from $\varphi$ by replacing the subformula $\xi$ of $\varphi$
by the disjunction of $\overline\xi_\alpha$ over all $\alpha\in\Lambda$.
We will show that $\overline\varphi$ is as desired.

To prove this
let $G\in\GG$, let $A$ be an $L$-structure guarded by $G$,
let $\overline G\in\overline\GG$ be a $k$-th augmentation of $G$,
let $D_1,D_2,\ldots,D_k$ be as in the defintion of $k$-th augmentation,
and let $c$ be a $K$-coloring of $\overline G$.
The coloring $c$ exists by our choice of $K$,
and $\overline G, D_1,D_2,\ldots,D_k$ and $c$ can be computed in linear
time by Lemmas~\ref{lem:bound} and~\ref{alg:augment}.
Let $A'$ be the $L'$-structure defined by saying that it is an expansion
of $A$,  that $C^{A'}_i$ consists of all $v\in V(A')$ such that
$c(v)=i$,
\rtt{and that $C^{A'}_{f,i}$ consists of all $v\in V(A')$ such that
$c(f^{A'}(v))=i$}. 

\claimno=0

{
\newclaim{cl1}{ 
For all $v_1,v_2,\ldots v_N\in V(A)$ we have
$A\models\xi(v_1,v_2,\ldots v_N)$
if  and only if there exists $\alpha\in\Lambda$ such that
$A'\models \xi_\alpha(v_1,v_2,\ldots v_N)$.}

}

\noindent
To prove (\refclaim{cl1}) we note that the ``if" part is clear.
To prove the ``only if" part let $A\models\xi(v_1,v_2,\ldots v_N)$.
Thus there exists $v_0\in V(A)$ such that $A\models\psi(v_0,v_1,\ldots v_N)$.
Let $t\in X$. If $t$ is a variable $x_i$, then let $\alpha(t):=c(v_i)$,
and if $t$ is of the form $f(x_i)$ for a function symbol $f$, then
let $\alpha(t):=c(f^A(v_i))$.
Then $A'\models\xi_\alpha(v_1,v_2,\ldots v_N)$, as desired.
This proves (\refclaim{cl1}).
\medskip

For $\alpha\in\Lambda$ let $A_\alpha$ be an $L'$-structure defined
as follows. 
We let $V(A_\alpha)$ be the set of all $v\in V(A)$ such that 
$c(v)\in\alpha(X)$.
For a function symbol $f$ in the language $L'$ let
$f^{A_\alpha}(v):=f^A(v)$ if both $c(v)$ and $c(f^A(v))$ belong to
$\alpha(X)$, and let $f^{A_\alpha}(v):=v$ otherwise.
For a relation symbol $R$ in $L'$ of arity $l$ let $R^{A_\alpha}$
be the subset of $R^A$ consisting of all $l$-tuples whose every element
belongs to $\alpha(X)$.

{
\newclaim{cl2}{
For all $\alpha\in\Lambda$ and all $v_1,v_2,\ldots v_N\in V(A)$ we have
$A'\models\xi_\alpha(v_1,v_2,\ldots v_N)$
if  and only if 
$A_\alpha\models \xi_\alpha(v_1,v_2,\ldots v_N)$.}

}

\noindent
\rtt{%
To prove (\refclaim{cl2}) we first notice that 
$A'\models\varphi_\alpha(v_0,v_1,\ldots v_N)$ 
and 
$A_\alpha\models\varphi_\alpha(v_0,v_1,\ldots v_N)$ 
are both equivalent to
$c(v_i)=\alpha(x_i)$ for every $L$-term in $X$ of the form $x_i$
and $c(f^{A'}(v_i))=\alpha(f(x_i))$ for every $L$-term in $X$ of the form $f(x_i)$,
in which case $c(v_i)\in\alpha(X)$ for every $L$-term in $X$ of the form $x_i$
and $f^{A'}(v_i)=f^{A_{\alpha}}(v_i)$ and $c(f^{A'}(v_i))\in\alpha(X)$
for every $L$-term in $X$ of the form $f(x_i)$.
So, $A'\models\varphi_\alpha(v_0,v_1,\ldots v_N)$ if and only if
$A_{\alpha}\models\varphi_\alpha(v_0,v_1,\ldots v_N)$ (note that
if $f^{A'}(v_i)\not=f^{A_{\alpha}}(v_i)$ for some $i$,
then $E_{f(x_i)}$ fails for both $A'$ and $A_\alpha$).
We deduce that (\refclaim{cl2}) holds.
\medskip
}

For $\alpha\in\Lambda$ let $H_\alpha$ be the subgraph of $G$ induced
by vertices $v$ such that $c(v)\in\alpha(X)$.
By Theorem~\ref{lem:lowtdepth} there exists a rooted forest $F_\alpha$
of depth at most $|X|$ such that $H_\alpha$ is a subgraph of the closure
of $F_\alpha$ and the \rtt{closure} of $F_\alpha$ is
a subgraph of $\overline G$.
\rtt{Thus $A_{\alpha}$ is guarded by the closure of $F_\alpha$.}
By Theorem~\ref{alg:lowtdepth} the rooted forest $F_\alpha$ can be
found in linear time, because $|E(G')|=O(|V(G)|)$ by Theorem~\ref{thm:augment}.
Let $\overline A_\alpha$ be an $\overline L_\alpha$-structure
as in Lemma~\ref{lm-elim} applied to the $L'$-structure $A_\alpha$
and rooted forest $F_\alpha$. 
\rtt{Then $\overline A_\alpha$ is guarded by the closure of $F_\alpha$}
and

{
\newclaim{cl3}{ 
for all $\alpha\in\Lambda$ and all $v_1,v_2,\ldots v_n\in V(A)$ we have
$A_\alpha\models\xi_\alpha(v_1,v_2,\ldots v_n)$
if  and only if 
$\overline A_\alpha\models \overline\xi_\alpha(v_1,v_2,\ldots v_n)$.}

}

Let $\overline A$ be an $\overline L$-structure defined as follows.
Let $f$ be a function symbol from $\overline L$. If $f$ belongs to $L$,
then $f^{\overline A}(v):=f^A(v)$, and if $f$ belongs to $\overline L_\alpha$,
then $f^{\overline A}(v):=f^{\overline A_\alpha}(v)$.
We define the interpretations of relation symbols analogously.
\rtt{Since $\overline A_\alpha$ is guarded by the closure of $F_\alpha$
and the closure of $F_\alpha$ is a subgraph of
$\overline G$, we deduce that}
$\overline A$ is guarded by $\overline G$.

{
\newclaim{cl4}{
For all $v_1,v_2,\ldots v_n\in V(A)$ we have
$\overline A\models \overline\xi(v_1,v_2,\ldots v_n)$
if  and only if there exists $\alpha\in\Lambda$ such that
$\overline A_\alpha\models \overline\xi_\alpha(v_1,v_2,\ldots v_n)$.}

}

\noindent
The proof of (\refclaim{cl4}) is clear.
\medskip

It follows from claims (\refclaim{cl1})--(\refclaim{cl4}) that $\overline A$
is as desired, and the construction shows that it can be computed
from $G$ and $A$ in time $O(|V(G)|)$.
\end{proof}

\section{Deciding FO properties in graphs with locally bounded expansion}
\label{sec-locally}

The following theorem uses a result of Gaifman~\cite{gaif} that
FO properties are local in a certain sense. 
The theorem is implicit in~\cite{fg} (see also~\cite{grokremeth}).

\begin{theorem}
\label{thm-local-general}
Let $\GG$ be a class of graphs and for an integer $d\ge0$ let
$\GG_d$ be the class of graphs
consisting of all induced subgraphs of $d$-neighborhoods of graphs in $\GG$.
Let $\GG_d$ \rtt{have bounded expansion} for all integers $d\ge0$.
Furthermore, let $L$ be a language and $L'$ the language obtained from $L$
by adding a new binary relation symbol.
Suppose that
for every $d$ and every $L'$-formula $\varphi'(x)$,
there exists a linear-time algorithm that
lists all elements $v$ of an input $L'$-structure guarded by a graph from $\GG_d$
that satisfy $\varphi'(v)$.
Then, for every $L$-sentence $\varphi$
there exists an almost linear-time algorithm that
decides whether an input $L$-structure guarded by a graph from $\GG$
satisfies $\varphi$.
\end{theorem}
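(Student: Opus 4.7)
The plan is to apply Gaifman's locality theorem to reduce $\varphi$ to local formulas and then invoke the listing hypothesis on each neighborhood separately.

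First, I would apply Gaifman's theorem to express $\varphi$ as a Boolean combination of basic local sentences of the form
$$\theta_{r,k,\psi}=\exists x_1\ldots\exists x_k\Biggl(\bigwedge_{1\le i<j\le k}d(x_i,x_j)>2r\;\wedge\;\bigwedge_{i=1}^{k}\psi^{(r)}(x_i)\Biggr),$$
where $\psi^{(r)}(x)$ is $r$-local, i.e., the truth of $A\models\psi^{(r)}(v)$ depends only on the induced substructure $A_v$ of $A$ on the $r$-ball $B_r(v)$ in the Gaifman graph. Since the integers $r,k$ and the formulas $\psi^{(r)}$ appearing in this normal form are determined by $\varphi$ alone, it suffices to decide each such $\theta_{r,k,\psi}$ in almost linear time, because only constantly many of them appear and Boolean combinations preserve the time bound.

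Second, for a fixed basic local sentence I would compute the set $S=\{v\in V(A):A\models\psi^{(r)}(v)\}$. For each vertex $v$ I would perform a BFS in $G$ to obtain $B_r(v)$ and form the induced substructure $A_v$; its Gaifman graph is an induced subgraph of the $r$-neighborhood of $v$ in $G$, hence lies in $\GG_r$. Using the extra binary relation symbol afforded by $L'$ to mark the distinguished centre $v$ inside $A_v$, I would rewrite $\psi^{(r)}(x)$ as an $L'$-formula $\varphi'(x)$ whose quantifiers range over $V(A_v)$ and which is to be tested only at the marked centre, and feed $A_v$ into the listing algorithm guaranteed by the hypothesis to decide whether $v\in S$.

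Third, given $S$, I would decide $\theta_{r,k,\psi}$ by searching for $k$ elements of $S$ pairwise at distance more than $2r$ in $G$. A greedy sweep produces a maximal distance-$>2r$ scattered subset $S'\subseteq S$; exploiting the bounded expansion of $\GG_{2r}$ one can then kernelise to a set whose size is bounded in $k$ and $r$, on which the existence of $k$ pairwise scattered points is decided by exhaustive search.

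The crux---and the main obstacle---is the time analysis of the second step. The naive per-vertex cost $O(|B_r(v)|)$ summed over $v$ need not be almost linear, since the hypothesis controls only the sparsity of the balls $B_r(v)$ and not their size. To obtain the $O(n^{1+\varepsilon})$ bound one must organise the computation around a suitable cover of $V(G)$ by centres whose $r$-balls have controlled overlap, using the degeneracy estimates on $\GG_r$ afforded by Lemma~\ref{lem:bound}, and transfer the listing results from each ball to all of its vertices simultaneously rather than re-running the listing algorithm separately for every centre.
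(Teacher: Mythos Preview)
Your high-level plan---Gaifman locality, evaluate the $r$-local predicates, then solve the scattered-set problem---coincides with the Frick--Grohe scheme the paper invokes, and you have correctly located the time bottleneck in Step~2 together with the right remedy (a sparse neighbourhood cover; this is Corollary~6.3 of~\cite{fg}, whose proof goes through here because each $\GG_d$ has bounded expansion and hence linear edge density).

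One detail should be corrected: the role of the extra binary relation symbol. Using it to ``mark the distinguished centre $v$ inside $A_v$'' is unnecessary in your per-vertex formulation (on $A_v=A|_{B_r(v)}$ you can simply evaluate the $L$-formula $\psi^{(r)}(x)$ and check whether $v$ appears in the output list), and once you switch to the cover-based computation it no longer makes sense: there you run the listing algorithm \emph{once} on each cover set $N$ and must read off the answer for \emph{every} centre $v$ with $B_r(v)\subseteq N$ simultaneously, so no single centre is marked. What the new binary symbol is actually for is to carry the edge relation of the Gaifman graph of the \emph{full} structure $A$ into each restricted $L'$-structure on $N$. This is needed because the Gaifman graph of $A|_N$ may be a strict subgraph of $G_A[N]$ (a tuple witnessing an edge may use elements outside $N$), so without the extra symbol ``distance at most $r$ from $x$ in $G_A$'' is not expressible inside $N$; with it one can relativise $\psi$ to $B_r(x)$ correctly and also compute the $r$-neighbourhoods required for the cover construction and for Step~3. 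This is precisely how the paper deploys the symbol. With that adjustment your sketch matches the paper's proof, which is itself essentially a pointer to the argument of~\cite{fg}.
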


\begin{proof}
\rt{We show how to modify the proof of~\cite[Theorem~1.2]{fg} 
to yield a proof of this theorem.}
%
The proof of~\cite[Theorem~1.2]{fg} 
relies on Lemma~4.4, Corollary~6.3, Corollary~8.2 and Lemma~8.3 
from the same paper, and those assume that $\GG$ has ``bounded local
tree-width". \rt{In our context} Lemma 4.4 would be needed to 
justify that \rt{for every $L$-formula $\psi(x)$ and every integer $d\ge0$
there exists a linear-time algorithm that given an $L$-structure $A$
guarded by a member of $\GG_d$ computes the set of all $v\in V(A)$
such that $A\models\psi(v)$.
This follows from the hypothesis of the theorem instead.
(Here we do not need the extension $L'$.)
}


Corollary~6.3 \rtt{and Corollary~8.2}
apply in our setting without any alterations \rtt{with the same proofs,
using the fact that for every fixed integer $d$ every graph $G\in\GG_d$ 
has at most $O(|V(G)|)$ edges by Lemma~\ref{lem:bound}.}

Finally, in Lemma~8.3, 
\rt{%
we need to be able to compute, in linear time for every fixed $r$, given
an $L$-structure $A$ and $v\in V(A)$, the set of elements of $V(A)$
at distance at most $r$ in the Gaifman graph of $A$.
This can be derived by applying the hypothesis of the theorem to
the $L'$-structure $A'$, where $A'$ is the expansion of $A$ defined
by saying that the interpretation of the new binary relation is
adjacency in the Gaifman graph of $A$.
This relation can be computed in linear time.
To carry out the last step of the algorithm of Lemma~8.3 we apply
Theorem~\ref{thm-bounded2}.
}
\end{proof}

\begin{proof}[Proof of Corollary~\ref{thm-nowhere2}]
Let $\GG$, $L$ and $\varphi$ be as in Corollary~\ref{thm-nowhere2},
\rt{and let $\GG_d$ be as in Theorem~\ref{thm-local-general}}.
In particular, the class $\GG_d$ has bounded expansion for every $d$. By Theorem~\ref{thm-bounded+},
for every integer $d$ and every $L'$-formula $\varphi'(x)$, there exist a language $L''$ and
a quantifier-free $L''$-formula $\varphi''(x)$ such that every $L'$-structure $A$
guarded by a graph from $\GG_d$ can be transformed in linear time to an $L''$-structure $A'$
with $V(A)=V(A')$ such that $A\models\varphi'(v)$ if and only if $A'\models\varphi''(v)$ for every $v\in V(A)$.
In particular, it is possible to list in linear time all $v\in V(A)$ such that $A'\models\varphi''(v)$
since evaluating the latter formula requires constant time.
So, the assumptions of Theorem~\ref{thm-local-general} are satisfied.
\end{proof}

\section{Dynamic data structures for $\Sigma_1$-queries}
\label{sec-subgraph}

In this section, we provide two data structures for answering $\Sigma_1$-queries.
The update time is constant but the price we have to pay is that the graph that guards
the relational structure must be fixed before the computation starts.
Before we start our exposition, we need to introduce more definitions.

Let $L$ be a language with no function symbols.
For an integer $k\ge1$, a {\em $k$-labelled} $L$-structure
is a pair $(S,\sigma)$, where $S$ is an $L$-structure and $\sigma$
is an injective mapping dom$(\sigma)\to V(S)$, where dom$(\sigma)\subseteq\{1,2,\ldots,k-1\}$.

The {\em trunk} of a $k$-labelled $L$-structure $(S,\sigma)$
is the $k$-labelled $L$-structure $(S',\sigma)$, where $S'$ is obtained 
from $S$ by removing
all tuples $(v_1,\ldots,v_t)$ with $v_1,\ldots,v_t\in\hbox{\rm dom}(\sigma)$
from each relation of $S$.
A $k$-labelled $L$-structure $(S,\sigma)$ is {\em hollow}
if it is equal to its trunk.
Two $k$-labelled $L$-structures $(S_1,\sigma_1)$ and $(S_2,\sigma_2)$ 
are {\em $k$-isomorphic} if $\hbox{\rm dom}(\sigma_1)=\hbox{\rm dom}(\sigma_2)$ 
and their trunks are isomorphic by way of an isomorphism $f:V(S_1)\to V(S_2)$
such that $\sigma_2(i)=f(\sigma_1(i))$ for every $i\in\hbox{\rm dom}(\sigma_1)$.
In particular, every $k$-labelled $L$-structure is $k$-isomorphic to its trunk.

Suppose now that an $L$-structure $S$ is guarded by the closure of a 
rooted tree $T$.
For a vertex $v$ of $T$ at depth $d$, let $P_T(v)$ denote the vertex-set
of the path from the root of $T$ to $v$ and
$T\langle v\rangle$ the vertex-set of the subtree of $v$ (including $v$ itself).
Then, $S\langle v\rangle$ denotes the set of all $d$-labelled $L$-structures 
$(S',\sigma)$ such that
$S'$ is an induced substructure of $S$ with elements only in 
$P_T(v)\cup T\langle v\rangle$ and
dom$(\sigma)$ consists of all integers $i\in\{1,2,\ldots,d-1\}$ such that
$V(S')$ includes an element at depth $i$, in which case $\sigma(i)$ is
equal to that element.

We are now ready to prove a lemma that contains the core of our data structure.

\begin{lemma}
\label{lm-databe}
Let $L$ be a language with no function symbols, $d_0$ a fixed integer and $F$ a rooted forest of depth at most $d_0$.
There exists a data structure representing an $L$-structure $S$ guarded by the closure of $F$ such that
\begin{itemize}
\item the data structure is initialized in linear time,
\item the data structure representing an $L$-structure $S$ can be changed
      to the one representing an $L$-structure $S'$ by adding or removing a tuple
      from one of the relations in constant time provided that $S'$ is guarded by the closure of $F$, and
\item the data structure decides in time bounded by $O(|\varphi|)$
      whether a given $\Sigma_1$-$L$-sentence $\varphi$ with at most $d_0$ variables is satisfied by $S$,
      and if so, it outputs one of the satisfying assignments.
\end{itemize}
\end{lemma}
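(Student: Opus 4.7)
The approach is to maintain, for every vertex $v$ of $F$, a constant-size summary of the labelled substructures of $S$ ``rooted at $v$,'' so that a $\Sigma_1$-query can be answered by inspecting these summaries at the roots of the trees of $F$. I would begin by enumerating, once and for all during a preprocessing step depending only on $L$ and $d_0$, the finite set ${\cal T}$ of all $k$-isomorphism classes of hollow $k$-labelled $L$-structures of size at most $d_0$, for $k\le d_0$; its finiteness is a straightforward counting argument analogous to Proposition~\ref{prop-fin-template}. For every vertex $v\in V(F)$ at depth $d_v$ the data structure then stores, for every $\tau\in{\cal T}$ whose label domain is a subset of $\{1,\ldots,d_v-1\}$, a counter $N(v,\tau)\in\NN$ equal to the number of elements $(S',\sigma)\in S\langle v\rangle$ whose trunk is $d_v$-isomorphic to $\tau$, together with a pointer $w(v,\tau)$ to one witnessing structure whenever $N(v,\tau)>0$. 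Since $|{\cal T}|=O(1)$, each vertex carries only $O(1)$ data.

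The core of the argument is a bottom-up recurrence for these counters. Every $(S',\sigma)\in S\langle v\rangle$ either has $V(S')\subseteq P_T(v)$, in which case its trunk is determined by the restriction of $S$ to $P_T(v)$, or else $V(S')\cap T\langle u\rangle\neq\emptyset$ for one or more children $u$ of $v$; in the latter case $S'$ decomposes along $P_T(v)\cup\{v\}$ as a combination of labelled substructures coming from the sets $S\langle u\rangle$. Elements from distinct children's subtrees are pairwise incomparable in $F$ and therefore non-adjacent in the closure of $F$, so the type of $S'$ is determined by the multiset of types of its pieces in the children's subtrees together with the local data at $v$ on $P_T(v)\cup\{v\}$. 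This yields a recurrence in which $N(v,\tau)$ is obtained from the children's counters via an $O(1)$-size table, indexed by tuples of ``piece types'' and the local data, that can be tabulated once and for all from $L$ and $d_0$. To make the recurrence recomputable in $O(1)$ time when only some child summaries change, I would additionally store at $v$ an aggregate of its children's summaries; since the children aggregation is commutative and associative, this aggregate admits constant-time reversible updates.

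With this setup, initialization is a single post-order traversal of $F$ applying the recurrence at each vertex, for a total time $O(|V(F)|+|S|)$, which is linear in the input. For an update, a tuple $(u_1,\ldots,u_m)$ added to or removed from a relation of $S$ must lie on a single root-to-leaf path of $F$, because $S$ is guarded by the closure of $F$; letting $u^\ast$ be a deepest vertex on that path, only the at most $d_0$ vertices on the path from $u^\ast$ to the root of its tree have their $S\langle\cdot\rangle$ modified, and recomputing their summaries in turn, each in $O(1)$ time using the stored aggregates, costs $O(1)$ overall. For a query $\varphi=\exists x_1,\ldots,x_n\,\psi$ with $n\le d_0$, I would iterate over the $O(1)$ many ``tree-assignments''---partitions of $\{x_1,\ldots,x_n\}$ into at most $n$ groups, one per tree of $F$ used by the sought assignment---and, for each such partition, over the $O(1)$ tuples of types $(\tau_1,\ldots,\tau_p)\in{\cal T}^p$ with $N(r_i,\tau_i)>0$ at the corresponding roots $r_i$: check in $O(|\psi|)$ time whether the canonical assignment compatible with the chosen types satisfies $\psi$, and if so follow the pointers $w(\cdot,\tau_i)$ downwards to recover elements $v_1,\ldots,v_n\in V(S)$. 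The total query time is $O(|\varphi|)$.

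The delicate step is the combinatorial bookkeeping in the bottom-up recurrence: one must show that, given the children's summaries and the local data at $v$ on $P_T(v)\cup\{v\}$, the types occurring in $S\langle v\rangle$ are determined by a constant-time function of those data, and that the aggregation over children is reversible so that a single-child update propagates in $O(1)$ time. Making this rigorous requires a careful formalisation of ``gluing'' $k$-labelled hollow structures along shared labels, which is precisely what the notions of trunk and of hollow $k$-labelled structure introduced above are designed to enable.
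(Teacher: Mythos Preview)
Your outline follows the same architecture as the paper: store at each vertex $v$ the set of $k$-isomorphism types of hollow $k$-labelled substructures realisable in $S\langle v\rangle$, compute these bottom-up, update them along the single root-to-leaf path affected by a tuple change, and answer queries by inspecting root-level information. That is exactly the paper's plan. Where you diverge is in the mechanism for the step you yourself flag as ``delicate,'' and there the paper's solution is concrete while yours remains a sketch.

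Specifically, you store \emph{counts} $N(v,\tau)$ and appeal to a ``commutative, associative, reversible'' aggregate over children. But the recurrence for a type $\tau$ that decomposes into pieces $\sigma_1,\ldots,\sigma_m$ living in $m$ \emph{distinct} children is not a simple sum: it is a sum over injections of a product of the $N(u_i,\sigma_i)$. Maintaining such quantities under single-child changes is possible, but it requires multiplication, and the paper's computational model explicitly allows only addition and subtraction. The paper sidesteps this entirely: instead of counts it stores (i)~for each $v$, the \emph{set} of types occurring in $S\langle v\rangle$ (your $N>0$ booleans), and (ii)~for each type $\tau'$, the \emph{list} of children of $v$ whose set contains $\tau'$. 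To decide whether a given decomposition $(\sigma_1,\ldots,\sigma_m)$ is realisable by $m$ distinct children, the paper extracts at most $m$ children per $\sigma_i$ from list~(ii), forms a bipartite graph on $\le m+m^2$ vertices, and tests for a perfect matching. This is $O(1)$ time per type and uses no arithmetic beyond comparison; moreover, when a single child's set changes, list~(ii) is updated in $O(1)$ and the parent's set is recomputed from scratch via the same bounded matching argument. This matching trick \emph{is} the content of the step you left open.

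Two smaller gaps: your query procedure speaks of ``the corresponding roots $r_i$'' without saying how to pick $\le d_0$ trees out of a forest with possibly many components; the paper handles this by maintaining one more ``global list'' at a virtual super-root, updated by the identical matching mechanism. And your witness pointers need to be maintained through updates (a deletion may invalidate the stored witness), which is easy with the paper's list-based scheme but less obviously so with pure counters.
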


\begin{proof}
For every vertex $v$ of $F$ at depth $d$,
we will store the following two lists:
\begin{itemize}
\item for every relation symbol $R$ of $L$
      the list of all tuples $\tau\in R^S$ such that $\tau$ includes $v$ and
      all elements of $\tau$ belong to $P_T(v)$, where $T$ is the tree of $F$ containing $v$, and
\item the list of all (non-$d$-isomorphic) $d$-labelled hollow $L$-structures with at most $d_0$ elements
      that are $d$-isomorphic to a $d$-labelled $L$-structure contained in $S\langle v\rangle$.
\end{itemize}
Since there are only finitely many non-$d$-isomorphic $d$-labelled
$L$-structures with at most $d_0$ elements for every $d\le d_0$,
the length of each list of the second type is bounded by a constant depending
only on $d_0$ and $L$.
If $v$ is a non-leaf vertex of $F$, there will be a third list associated with $v$:
\begin{itemize}
\item the list of all (non-isomorphic) $(d+1)$-labelled hollow $L$-structures 
      $(S',\sigma)$
      with at most $d_0$ elements that are isomorphic to a member of the 
      second list of at least one child of $v$;
      for each such $(S',\sigma)$, there will be stored the list of all children of $v$
      whose second list contains a member isomorphic to $(S',\sigma)$.
\end{itemize}
In addition, there will be a global list of all (non-isomorphic) $L$-structures
with at most $d_0$ elements that appear as induced $L$-substructures in $S$.

Let us describe how all these lists are initialized.
The initialization of the first type of list is trivial:
just put each tuple contained in one of the  relations to the list of its element that is farthest from the root.
This can clearly be done in constant time per tuple.

Initialization of other types of lists is more difficult. Fix a tree $T$ of $F$.
We proceed from the leaves towards the root of $T$. Let $v$ be a vertex of $T$ at depth $d$.
If $v$ is a leaf of $T$ at depth $d$, then the second list of $v$ contains only those hollow
$d$-labelled $L$-structures $(S',\sigma)$ with $V(S')\subseteq P_T(v)$ such that
if $v\in V(S')$, then $S'$ contains relations with their tuples from $S$
containing $v$ and elements from $V(S')$, and
if $v\not\in V(S')$, then all relations of $S'$ are empty.
This can be done in linear time as for each tuple in every relation
one determines whether its element of the largest depth is a leaf and,
if so, it includes the tuple to the structures at that leaf.

Suppose now that $v$ is not a leaf of $T$. The third list associated with $v$
can be initialized by merging the second type of lists of children of $v$. (This needs
time linear in the number of the children, but the sum of the numbers of children of all
vertices is linear in $|T|$.
This will require linear time for the whole structure since the number of
non-$d$-isomorphic $d$-labelled hollow $L$-structures with at most $d_0$ elements
is bounded, and thus the size of each list of the second type is bounded.)
We next describe how it can be decided whether a $d$-labelled hollow 
$L$-structure $(S',\sigma)$
should be contained in the list of $v$ of the second type.
Assume that $S\langle v\rangle$ contains a $d$-labelled hollow 
$L$-structure $(S'',\sigma'')$
that is $d$-isomorphic to $(S',\sigma)$.

Then $V(S'')$ can be decomposed into disjoint subsets $V_0,V_1,\ldots,V_m$ such that
$V_0=V(S'')\cap P_T(v)$, each of the sets $V_i$, $i=1,\ldots,m$,
is fully contained in a subtree of a child $v_i$ of $v$, and
different subsets $V_1,\ldots,V_m$ are contained in different subtrees. 
Observe that each tuple of a relation of $S''$ has its elements in $V_0\cup V_i$
for some $i=1,\ldots,m$. Moreover,
the only tuples in such relations with all elements from $V_0$ are those that contain $v$.

Hence, the existence of $(S'',\sigma'')$ can be tested by
considering all partitions of $V(S')$ into disjoint subsets $V_0,V_1,\ldots,V_m$
such that $\sigma(\hbox{\rm dom}(\sigma))\subseteq V_0$, 
$|V_0\setminus\sigma(\hbox{\rm dom}(\sigma))|\le 1$,
every tuple in a relation of $S'$ has its elements in
$V_0\cup V_i$ for some $i=1,\ldots,m$, and two additional conditions
are satisfied.
To state those conditions let $i\in\{1,2,\ldots,m\}$ and let us
define $(S_i,\sigma_i)$ to be the $(d+1)$-labelled hollow $L$-structure
such that $S_i$ is the substructure of $S$ induced by $V_0\cup V_i$,
dom$(\sigma_i)=\hbox{\rm dom}(\sigma)\cup\{d\}$ if 
$V_0\setminus\sigma(\hbox{\rm dom}(\sigma))\ne\emptyset$ and
dom$(\sigma_i)=\hbox{\rm dom}(\sigma)$ otherwise,
$\sigma_i(j)=\sigma(j)$ for every $j\in\hbox{\rm dom}(\sigma)$
and $\sigma_i(d)$ is the unique element of 
$V_0\setminus\sigma(\hbox{\rm dom}(\sigma))$ if the latter set is not empty.
The two remaining conditions are that
there exist distinct children $v_1,\ldots,v_m$ of $v$ such that the second 
list of $v_i$ has a member isomorphic to $(S_i,\sigma_i)$ and
\zz{that} for each relation symbol $R$ of $L$
the tuples in  $R^{S'}$ containing $\sigma(d)$
are precisely the tuples listed in the first list for $R$ and $v$.

We now describe how to test the existence of children $v_1,\ldots,v_m$.
Let $W$ be the set of children of $v$ such that
\zz{for all $i\in\{1,2,\ldots,m\}$}: if $v$ has at most $m$ children
with their second list containing a $(d+1)$-labelled hollow $L$-structure
$(d+1)$-isomorphic to $(S_i,\sigma_i)$,
then $W$ contains all such children of $v$ (here, we use the lists of the third type).
If $v$ has more than $m$ such children,
then $W$ contains arbitrary $m$ of these children. Clearly, $|W|\le m^2\le d_0^2$.
In order to test the existence of such children $v_1,\ldots,v_m$ of $v$,
we form an auxiliary bipartite subgraph $B$: one part of $B$ is formed
by the numbers $1,\ldots,m$ and the other part by children of $v$ contained in $W$.
A child $w\in W$ is joined to a number $i$ if the second list of $w$ contains
a $(d+1)$-labelled hollow $L$-structure $(d+1)$-isomorphic
to $(S_i,\sigma_i)$.

If $B$ has a matching of size $m$,
then this matching determines the choice of children $v_1,\ldots,v_m$.
On the other hand, if such children exist, $B$ contains a matching of size $m$:
indeed, if $v_i\in W$, then $i$ is matched with $v_i$, and if
$v_i\not\in W$, then $v$ has at least $m$ children whose second
list contains a $(d+1)$-labelled hollow $L$-structure $(d+1)$-isomorphic
to $(S_i,\sigma_i)$, in which case $i$ may be matched with one of those
children that is not matched with any $i'<i$.

Since the order of $B$ is at most $m^2+m$ and the number of disjoint non-empty
partitions of $V(S')$ to $V_0,\ldots,V_m$ is bounded, testing the existence of
a $d$-labelled hollow $L$-structure $S''$ can be performed in constant time for $v$.

It remains to construct the global list
containing $L$-structures $S_0$ with at most $d_0$ elements that
are isomorphic to an induced substructure of $S$.
We proceed similarly as when determining the lists of inner elements of the forest $F$.
For every $L$-structure $S'$ with at most $d_0$ elements,
we compute the list of trees of $F$ that contain $S'$,
i.e., $S'$ is contained in the second list of the root of $F$.
Now, $S_0$ is an induced substructure of $S'$ if and only if there exist element-disjoint
$L$-structures $S_1',\ldots,S_m'$ such that $V(S_0)=V(S_1')\cup\cdots\cup V(S_m')$ and 
$S_1',\ldots,S_m'$
appear in $m$ mutually distinct trees of $F$. For each such partition of $S_0$ into $S_1',\ldots,S_m'$,
we can test whether $S_1',\ldots,S_m'$ appear in the list of roots of $m$ distinct trees of $F$
using the auxiliary bipartite graph described earlier. Since all structures involved
contain at most $d_0$ elements, this phase requires time linear in the number of trees of $F$.

We have shown that the data structure can be initialized in linear time. Let us now focus on updating
the structure and answering queries.
Consider a tuple $(v_1,\ldots,v_k)$ that is added to a relation $R^S$ or removed from a relation $R^S$.
Let $r$ be the root of the tree $T$ in $F$ that contains all the elements $v_1,\ldots,v_k$ and
assume that $v_1,\ldots,v_k$ appear in this order on a path from $r$.
By the definition, the only lists affected by the change are those associated with vertices
on the path $P_T(v_k)$. Recomputing each of these lists requires constant time (we proceed
in the same way as in the initialization phase except we do not have to run
 through the children
of the vertices on the path to determine which of them contain particular $k$-labelled hollow
$L$-substructure $S'$ in their lists).
Since the number of vertices on the path $P_T(v_k)$ is at most $d_0$,
updating the data structure requires constant time only.

It remains to describe how queries are answered.
Let $\varphi$ be a $\Sigma_1$-sentence with $d\le d_0$ variables.
We generate all possible $L$-structures $S_0$ with $|V(S_0)|\zz{\le} d$ and
check whether they satisfy the formula $\varphi$.
Let $\SS_0$ be the set of those that satisfy $\varphi$.
The set $\SS_0$ can be generated in time $O(|\varphi|)$ since $L$ and $d_0$ are fixed.

Observe that $S$ satisfies $\varphi$ if and only if
it has an induced substructure isomorphic to a structure in $\SS_0$ (here,
we use that $L$ has no function symbols).
This can be tested in constant time by inspecting the global list.
Providing the satisfying assignment can be done in constant time
if during the computation for each substructure we store a certificate why it was included
in the list (which requires constant time overhead only).
\end{proof}

We are now ready to describe the data structures.
We start with the one for graphs with bounded expansion.

\begin{theorem}
\label{thm-databe}
Let $L$ be a language with no function symbols, $d_0$ a fixed integer and $\GG$ a class of graphs
with bounded expansion. There exists a data structure representing an
$L$-structure $S$ \rt{guarded by a member of $\GG$} such that
\begin{itemize}
\item given a graph $G\in\GG$ \rt{and an $L$-structure $S$ guarded by $G$}, 
      the data structure is initialized in linear time,
\item \rt{if an $L$-structure $S'$ is obtained from $S$}
      by adding or removing a tuple from one of the relations, then 
      the data structure representing $S$ can be changed
      to the one representing $S'$ 
      in constant time provided that both $S$ and $S'$ are guarded by $G$, and
\item the data structure \rt{allows testing} in time bounded by $O(|\varphi|)$
      whether a given $\Sigma_1$-$L$-sentence $\varphi$ with at most 
      $d_0$ variables is satisfied by $S$,
      and if so, \rt{outputting} one of the satisfying assignments.
\end{itemize}
\end{theorem}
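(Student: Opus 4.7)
The plan is to reduce the theorem to Lemma~\ref{lm-databe} by means of low tree-depth colorings. Set $d := d_0$ and $k := 3(d+1)^2$, and use Lemma~\ref{alg:augment} to compute in linear time a $k$-th augmentation $\overline G$ of $G$ together with the directed graphs $D_0, D_1, \ldots, D_k$ from the definition of $k$-th augmentation; then apply Lemma~\ref{lem:bound} to produce, again in linear time, a proper coloring $c$ of $\overline G$ with some number $K$ of colors, where $K$ depends only on $\GG$ and $d_0$. By Theorem~\ref{lem:lowtdepth}, $c$ is a $\overline G$-compliant low tree-depth coloring of $G$ of order $d$.

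For every subset $C$ of color classes with $|C| \le d_0$, let $H_C$ denote the subgraph of $\overline G$ induced by the union of the color classes in $C$, and let $S_C$ denote the induced $L$-substructure of $S$ on $V(H_C)$. Using Theorem~\ref{alg:lowtdepth} I would compute, in linear time, a rooted forest $F_C$ of depth at most $|C|$ whose closure contains $H_C$ as a subgraph and is itself a subgraph of $\overline G$. Since the Gaifman graph of $S$ is a subgraph of $G \subseteq \overline G$, the Gaifman graph of $S_C$ is a subgraph of $H_C$, so $S_C$ is guarded by the closure of $F_C$ and Lemma~\ref{lm-databe} applies; it furnishes a data structure $\mathcal{D}_C$ supporting constant-time updates and $\Sigma_1$-queries for $S_C$. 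As the number of admissible subsets $C$ is a constant depending only on $K$ and $d_0$, the initialization of all of the $\mathcal{D}_C$ together takes linear time.

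Updates are then routed as follows: when a tuple $\tau$ is added to or removed from a relation, let $C_\tau$ be the set of colors occurring among the elements of $\tau$; only those $\mathcal{D}_C$ with $C_\tau \subseteq C$ and $|C| \le d_0$ are affected, and their number is bounded by a constant. Each such $\mathcal{D}_C$ is updated in constant time by Lemma~\ref{lm-databe}, giving the desired $O(1)$ overall update cost. For a $\Sigma_1$-query $\varphi$ with at most $d_0$ variables, the key observation is that any satisfying assignment involves at most $d_0$ elements of $S$ and hence at most $d_0$ color classes; conversely, since $\varphi$ is purely existential, any witness found in any $S_C$ is also a witness in $S$. Thus $S \models \varphi$ if and only if $S_C \models \varphi$ for some admissible $C$. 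Querying each $\mathcal{D}_C$ in turn via Lemma~\ref{lm-databe} takes $O(|\varphi|)$ time per subset, and there are only constantly many subsets, yielding a total query time of $O(|\varphi|)$; a witness is lifted from whichever $\mathcal{D}_C$ returns one.

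The main obstacle I anticipate is verifying the interface with Lemma~\ref{lm-databe}: one must check that $\overline G$-compliance of the coloring (as produced by Theorem~\ref{lem:lowtdepth}) really yields forests $F_C$ with respect to which the corresponding substructures $S_C$ are \emph{guarded}, and that the bookkeeping over color subsets preserves, at all times, exactly the restriction $S_C$ in each $\mathcal{D}_C$ (so that tuples shared across several $C$'s are updated uniformly). The remainder is routine aggregation of Lemma~\ref{lm-databe} over a bounded collection of color subsets.
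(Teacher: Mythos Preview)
Your proposal is correct and follows essentially the same approach as the paper: reduce to Lemma~\ref{lm-databe} by computing a $k$-th augmentation, a bounded proper coloring via Lemma~\ref{lem:bound}, and depth-certifying forests for each bounded color subset via Theorems~\ref{lem:lowtdepth} and~\ref{alg:lowtdepth}, then maintain one instance of the Lemma~\ref{lm-databe} structure per color subset and aggregate. The only cosmetic differences are that the paper ranges over color subsets of size exactly $d_0$ rather than at most $d_0$, and that you spell out the update routing and the equivalence $S\models\varphi\Leftrightarrow\exists C\,S_C\models\varphi$ in more detail than the paper does.
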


\begin{proof}
\rt{%
Let $k:=3(d_0+1)^2$, and let $\GG'$ be the class of all $k$-th augmentations
of members of $\GG$.
Then $\GG'$ has bounded expansion by Theorem~\ref{thm:augmexp}.
Let $K$ be as in Lemma~\ref{lem:bound} applied to $\GG'$.
Thus $K$ depends only on $\GG$ and $d_0$.
Given $G\in\GG$ we compute, in linear time using Lemma~\ref{alg:augment}, 
a $k$-th augmentation $G'$
of $G$ and directed graphs $D_1,D_2,\ldots,D_k$ as in the definition
of $k$-th augmentation.
Then we compute a $K$-coloring $c$ of $G'$ in linear time by Lemma~\ref{lem:bound}.
Let $\cal X$ be the set of all subsets of $\{1,2,\ldots,K\}$ of size $d_0$.
By Theorem~\ref{lem:lowtdepth} $c$ is a $G'$-compliant low tree-depth
coloring of $G$ of order $d_0$,
and by  Theorem~\ref{alg:lowtdepth} we can find in linear time,
for each $X\in\cal X$, a rooted forest $F_X$ such that the subgraph $H_X$
of $G$ induced by vertices $v$ with $c(v)\in X$ is a subgraph of
the closure of $F_X$. 
Now given $X\in\cal X$ and an $L$-structure $S$ guarded by $G$,
let $S_X$ denote the induced substructure  of $S$ induced by the set 
$V(H_X)\subseteq V(S)$.
Then $S_X$ is guarded by the closure of $F_X$.
Since $\varphi$ is a $\Sigma_1$-$L$-sentence, we have}

\claimno=0

{
\newclaim{clm1}{\rt{%
$S\models\varphi$ if and only if $S_X\models\varphi$ for some $X\in\cal X$.}}

}

\rt{%
Thus $S$ will be represented by the collection $\{S_X\}_{X\in\cal X}$ 
of induced substructures.
Updates will be done using Lemma~\ref{lm-databe}, and testing
whether $S\models\varphi$ will be done using (\refclaim{clm1}) and
Lemma~\ref{lm-databe}.}
\end{proof}


The following is a variation of the above theorem for nowhere dense graphs.

\begin{theorem}
\label{thm-databe-nowhere}
Let $L$ be a language with no function symbols, $d_0$ a fixed integer, 
$\varepsilon$ a positive real number and $\GG$ a class of
nowhere-dense graphs.
There exists a data structure representing an $L$-structure $S$ 
\rt{guarded by a member of $\GG$} such that
\begin{itemize}
\item given an \rt{$n$-vertex} graph $G\in\GG$ \rt{and an $L$-structure $S$ guarded by $G$}, 
the data structure is initialized in time $O(n^{1+\varepsilon})$,
\item \rt{if an $L$-structure $S'$ is obtained from $S$}
      by adding or removing a tuple from one of the relations, then 
      the data structure representing $S$ can be changed
      to the one representing $S'$ in time $O(n^{\varepsilon})$ 
      provided that both $S$ and $S'$ are guarded by $G$, and
\item the data structure \rt{allows testing} in time bounded by 
      $O(|\varphi|)$
      whether a given $\Sigma_1$-$L$-sentence $\varphi$ with at most $d_0$ variables is satisfied by $S$,
      and if so, \rt{outputting} one of the satisfying assignments.
\end{itemize}
\end{theorem}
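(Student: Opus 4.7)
The plan is to mimic the proof of Theorem~\ref{thm-databe}, but to use the nowhere-dense analogues of each tool and to choose the auxiliary parameter finely enough that the polynomial blowup from having many color classes is absorbed into the required $n^{\varepsilon}$ slack. Given $\varepsilon>0$, set $\varepsilon':=\varepsilon/d_0$ and $k:=3(d_0+1)^2$. Using Lemma~\ref{alg:ndaugment}, I would compute in time $O(n^{1+\varepsilon'})$ a $k$-th augmentation $G'$ of $G$ together with the directed graphs $D_1,\ldots,D_k$. By Theorem~\ref{thm:augmexp}(ii), the class of $k$-th augmentations of members of $\GG$ is again nowhere-dense, so $\nabla_0(G')=O(n^{\varepsilon'})$, hence $G'$ is $O(n^{\varepsilon'})$-degenerate and admits a proper coloring $c$ with $K=O(n^{\varepsilon'})$ colors; such a coloring and a suitable orientation are computable greedily in time $O(|V(G')|+|E(G')|)=O(n^{1+\varepsilon'})$. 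By Theorem~\ref{lem:lowtdepth}, $c$ is a $G'$-compliant low tree-depth coloring of $G$ of order $d_0$.

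For every subset $X\subseteq\{1,\ldots,K\}$ of size $d_0$ I would invoke Theorem~\ref{alg:lowtdepth} to produce a rooted forest $F_X$ of depth at most $d_0$ whose closure contains the subgraph $H_X$ of $G$ induced by $\{v:c(v)\in X\}$ (and whose closure is a subgraph of $G'$), and then instantiate the data structure of Lemma~\ref{lm-databe} on $F_X$ and $S_X:=S[V(H_X)]$. Each vertex $v$ of $G$ lies in exactly $\binom{K-1}{d_0-1}=O(K^{d_0-1})$ of the forests $F_X$, so the total initialization cost is $O\!\bigl(n\cdot K^{d_0-1}\bigr)=O(n^{1+\varepsilon'(d_0-1)})=O(n^{1+\varepsilon})$, as required.

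Adding or removing a tuple $(v_1,\ldots,v_t)$ to a relation only affects the structures $S_X$ with $\{c(v_1),\ldots,c(v_t)\}\subseteq X$; there are at most $\binom{K-1}{d_0-1}=O(n^{\varepsilon'(d_0-1)})=O(n^\varepsilon)$ such $X$, and each update to an individual $S_X$ takes constant time by Lemma~\ref{lm-databe}. To answer queries in time $O(|\varphi|)$ I would maintain a global auxiliary structure indexed by isomorphism types of $L$-structures of size at most $d_0$: for each type $S_0$, store a counter recording the number of $S_X$'s that currently contain $S_0$ as an induced substructure, together with a pointer to one witnessing $S_X$ and, through it, a witnessing assignment. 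Because the global list stored by each $S_X$ (Lemma~\ref{lm-databe}) has constant length, each update to an $S_X$ changes $O(1)$ of these counters, so the aggregated maintenance cost per tuple update is still $O(n^\varepsilon)$. To answer a $\Sigma_1$-sentence $\varphi$ with at most $d_0$ variables, enumerate in time $O(|\varphi|)$ all isomorphism types $S_0$ of size at most $d_0$ satisfying $\varphi$, and report satisfaction (with assignment) as soon as one has positive counter; correctness follows because the low tree-depth coloring guarantees that any $\leq d_0$ elements of $S$ lie in some common $S_X$.

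The main obstacle is the query step: naively running the query algorithm of Lemma~\ref{lm-databe} on every $S_X$ would cost $\Omega(n^{\varepsilon' d_0})\cdot|\varphi|$, which is too slow. The fix is the global aggregate structure above, but its correct maintenance requires that each constant-time update to $S_X$ also reports the (constantly many) types that entered or left the global list of $S_X$; this is a minor strengthening of the bookkeeping already performed inside Lemma~\ref{lm-databe} and can be obtained either by inspecting the proof of that lemma or by computing a before/after diff of the constant-size global list of the affected $S_X$.
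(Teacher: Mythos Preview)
Your proposal is correct and follows essentially the same route as the paper: set $K=O(n^{\varepsilon/d_0})$ via the nowhere-dense augmentation, build one copy of the Lemma~\ref{lm-databe} data structure per $d_0$-subset of colors, and achieve the $O(|\varphi|)$ query time by maintaining a global list of the (boundedly many) isomorphism types of $L$-structures on at most $d_0$ elements that are realized as induced substructures of $S$. The only cosmetic difference is that the paper simply recomputes this global list from scratch after each update (which also costs $O(n^{\varepsilon})$, since there are boundedly many types and $O(n^{\varepsilon})$ substructures $S_X$ to query), whereas you maintain it incrementally via counters; both work.
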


\begin{proof}
Let $\varepsilon>0$.
\rt{The proof makes use of  the data structure from  Theorem~\ref{thm-databe},
which we will refer to as the {\em old data structure}.
The parameters of the latter are now slightly different.
The class $\GG'$ is nowhere dense by Theorem~\ref{thm:augmexp}.
Thus $K$ is no longer a constant;
instead, we may select $K$ to satisfy $K=O(n^{\epsilon/d_0})$,
where $n=|V(G)|$.
The computation of $G'$ takes time $O(n^{1+\varepsilon})$ by 
Lemma~\ref{alg:ndaugment}.
The computation of $c$ takes time $O(|V(G')|+|E(G')|)=O(n^{1+\varepsilon})$,
because $\GG'$ is nowhere dense.
Since $|{\cal X}|=O(n^{\varepsilon})$, the old data structure allows updates
in time $O(n^{\varepsilon})$ and testing $S\models\varphi$ for
$\Sigma_1$-$L$-sentences $\varphi$ in time
$O(|\varphi|n^{\varepsilon})$.
During initialization and after every update we use the old data structure
to compute or recompute the set $\cal S$ of all isomorphism classes
of $L$-structures $A$ with $|V(A)|\le d_0$ such that $A$
is isomorphic to an induced substructure of $S$.
This can be done in time $O(n^{\varepsilon})$, because the size of
$\cal S$ is bounded.
Now $S\models\varphi$ if and only if $A\models\varphi$ for some $A\in\cal S$.
The set $\cal S$ will form the {\em new data structure}, which can be
used to answer queries of the form $S\models\varphi$ in time $O(|\varphi|)$.
}
\end{proof}


\section{Dynamic data structure for first order properties}
\label{sec-dynamic}

In this section, we present our dynamic data structure for testing FO properties.
The main result of this section reads as follows:

\begin{theorem}
\label{thm-dynamic}
Let $\GG$ be a class of graphs with bounded expansion, $L$ a language and
$\varphi$ an $L$-sentence. There exists a data structure that,
\rt{given an $n$-vertex graph $G\in\GG$ and an $L$-structure $A$ guarded 
by $G$, is initialized} in time $O(n)$ and
supports the following operations:
\begin{itemize}
\item adding a tuple to a relation of $A$ in constant time provided $A$ stays guarded by $G$,
\item removing a tuple from a relation of $A$ in constant time, and
\item \rt{answering} in constant time whether $A\models\varphi$.
\end{itemize}
\end{theorem}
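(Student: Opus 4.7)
The plan is to combine the quantifier elimination afforded by Theorem~\ref{thm-bounded+} with a layered version of the dynamic data structure of Lemma~\ref{lm-databe}, exploiting the fact that the guarding graph $G$ remains fixed throughout the lifetime of the data structure. At initialization I would invoke the algorithm $\mathcal{A}$ of Theorem~\ref{thm-bounded+} on $G$ and $A$ to produce, in $O(n)$ time, a graph $\overline G$ in a class $\overline \GG$ of bounded expansion, an $\overline L$-structure $\overline A$ guarded by $\overline G$, and a quantifier-free $\overline L$-sentence $\overline\varphi$ with $A \models \varphi$ iff $\overline A \models \overline\varphi$. Since $\varphi$ is a sentence and $\overline\varphi$ is quantifier-free, $\overline\varphi$ contains no free variables; because the formalism of this paper has no constant symbols, this forces $\overline\varphi$ to contain no variables at all, so no terms appear in it. Thus $\overline\varphi$ is a boolean combination of $\top$, $\bot$, and nullary relation symbols of $\overline L$, and answering a query reduces to reading $O(1)$ truth values.

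It remains to maintain the interpretations of the relations of $\overline A$ (in particular, of the nullary symbols appearing in $\overline\varphi$) in $O(1)$ time per tuple update. Since Theorem~\ref{thm-bounded+} is proved by iterating a single quantifier elimination step, the construction of $\overline A$ factors as a chain $A = A^{(0)} \leadsto A^{(1)} \leadsto \cdots \leadsto A^{(q)} = \overline A$, where $q$ is the quantifier depth of $\varphi$ (a constant) and each $A^{(i)}$ is an expansion of $A^{(i-1)}$ over a language $L^{(i)} \supseteq L^{(i-1)}$, extended by a forest-parent function, unary coloring relations $C_j, C_{f,j}$, and the relations $U^{(i)}_0, U^{(i)}_1, \ldots, U^{(i)}_{K^{(i)}}$ of arity at most one produced by Lemma~\ref{lm-phi-T}. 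Crucially, the forests $F^{(i)}$, their low tree-depth colorings, and the color-partitioned pieces on which Lemma~\ref{lm-elim} operates all depend solely on the augmentations of $G$ and hence remain static throughout updates to $A$; the coloring relations $C_j, C_{f,j}$ are therefore fixed at initialization.

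To process an update, I would propagate it through the layers. At each layer $i$, the change in $A^{(i-1)}$ affects $O(1)$ color-slices (their number depends only on $L$ and on the palette size $K^{(i)}$), and within an affected slice the interpretation of each $U^{(i)}_0$ can change only at vertices of $F^{(i)}$ that are ancestors of an element of the updated tuple. Since each $F^{(i)}$ has bounded depth, only $O(1)$ such vertices exist per slice. At each vertex of $F^{(i)}$ I would store the Lemma~\ref{lm-databe}-style lists of the labelled hollow substructures of bounded size realized in its subtree, and at each non-leaf vertex the aggregate child-type counts that drive the recomputation of $U^{(i)}_0$ detailed in the proof of Lemma~\ref{lm-phi-T}. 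These lists have bounded size and support the recomputation of $U^{(i)}_0$ and of the derived relations $U^{(i)}_1, \ldots, U^{(i)}_{K^{(i)}}$ in $O(1)$ per affected vertex. Each such change at layer $i$ is itself a tuple update at layer $i+1$, and the recursion terminates after $q$ levels, yielding $O(1)$ total work.

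The main obstacle will be verifying locality of the auxiliary relations across all layers: I need to confirm that an update in $A^{(i-1)}$ triggers only $O(1)$ tuple updates in $A^{(i)}$, so that the cascade through the $q$ layers remains within a constant. This follows from three bounds that are all constants in terms of $\GG$, $L$, and $\varphi$: the bounded depth of each $F^{(i)}$, the bounded number of color-slices relevant at each layer, and the bounded number of labelled hollow substructures of a given size guaranteed by Proposition~\ref{prop-fin-template}. Combining these with the fact that $\overline\varphi$ is evaluated simply by looking up $O(1)$ nullary flags yields the desired $O(n)$ initialization, $O(1)$ update, and $O(1)$ query bounds.
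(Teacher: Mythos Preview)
Your proposal is correct and follows essentially the same approach as the paper. The paper packages the argument into a dynamized variant of Lemma~\ref{lm-phi-T} (Lemma~\ref{lm-phi-T-dynamic}) and then a dynamized variant of Theorem~\ref{thm-bounded+} (Theorem~\ref{thm-dynamic+}), from which Theorem~\ref{thm-dynamic} is immediate; you instead unpack the layered update propagation directly, but the content is the same: the forests and colorings depend only on $G$ and stay fixed, a tuple update at layer $i-1$ touches $O(1)$ vertices on root-paths in boundedly many color-slices, and the machinery of Lemma~\ref{lm-databe} is used at those vertices to recompute the unary predicates $U_0,\ldots,U_K$ in constant time.
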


Note that in Theorem~\ref{thm-dynamic}, we do not allow to change function values of functions from $L$
to simplify our exposition; this does not present a loss of generality as one can model functions as binary relations.
We first establish a dynamized version of Lemma~\ref{lm-phi-T}.

\begin{lemma}
\label{lm-phi-T-dynamic}
Let $d\ge 0$ be an integer, $L$ a language,
$\varphi(x_0,\ldots,x_n)$ a simple quantifier-free $L$-formula that
is a conjunction of atomic formulas and their negations, and
$T$ a $\varphi$-template.
There exist a language $\overline{L}$ that extends $L$ and
a \rt{(not necessarily simple)}
quantifier-free $\overline{L}$-formula $\overline{\varphi}_T(x_1,\ldots,x_n)$ 
such that the following holds:
\begin{itemize}
\item $\overline{L}$ is obtained from $L$ by adding a function symbol $p$
      and finitely many relation symbols $U_0,\ldots,U_K$ of arity 
      \zz{at most} one,
\item for every rooted forest $F$ of depth at most $d$ and every $L$-structure $S$
      guarded by the closure of $F$,
      there exists an $\overline{L}$-structure $\overline{S}$
      such that $\overline{S}$ is an expansion of $S$ and
       for every $v_1,\ldots,v_n\in V(S)$,
      \begin{center}
      $\overline{S}\models\overline{\varphi}_T(v_1,\ldots,v_n)$
      if and only if
      $S\models\varphi(v_0,v_1,\ldots,v_n)$ for some $v_0\in V(S)$ such that
      \rt{$v_0,v_1,\ldots,v_n$ are compatible with $T,F,S$},
      \end{center}
      where $p^{\overline{S}}$ is the $F$-parent function and
      the relations $U^{\overline{S}}_0,\ldots,U^{\overline{S}}_K$
      can be computed (by listing the singletons they contain) in 
      linear time given $F$ and $S$, and
\item adding or removing a tuple to \rt{or from} a relation of $S$ results
      in adding and\rt{/or} removing a constant number of singletons \rt{to or} from unary relations
      among $U^{\overline{S}}_0,\ldots,U^{\overline{S}}_k$, and
      the changes to all relations $U^{\overline{S}}_0,\ldots,U^{\overline{S}}_k$
      can be computed in constant time, provided $S$ stays guarded by the closure of $F$.

\end{itemize}
\end{lemma}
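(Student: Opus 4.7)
My plan is to retain the entire construction from the proof of Lemma~\ref{lm-phi-T} without change: the language $\overline L$, the quantifier-free formula $\overline\varphi_T(x_1,\ldots,x_n)$, the interpretation of the function symbol $p^{\overline S}$, and the definitions of the unary relations $U_0^{\overline S},\ldots,U_K^{\overline S}$ are all as defined there. The first two bullets of the statement then follow immediately from Lemma~\ref{lm-phi-T}; only the third bullet (dynamic updatability) is new. I focus on the principal case in which condition $(\refclaim{stat0})$ of the proof of Lemma~\ref{lm-phi-T} holds; the complementary case is handled analogously, with $U_0^{\overline S}$ becoming a predicate on the roots of $F$ and each $U_i^{\overline S}$ for $i\ge 1$ a nullary predicate maintained through a single global counter.

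Recall that $U_0^{\overline S}(w)$, for a vertex $w$ of $F$ at depth $d_v+1$, asserts the existence of a vertex $v_0$ of depth $d_{x_0}$ in the subtree of $w$ satisfying (a) the existence of a $(v_0)$-admissible embedding of the template $T_0$ in $F$ for $S$, and (b) the simultaneous satisfaction of each clause of $\varphi'$ that contains an $X_0$-term under the substitution $\sigma_{v_0}$ determined by the template. Condition (a) depends only on the interpretations of the function symbols of $L$, which by hypothesis are not changed during updates; hence only (b) can flip when a tuple $\tau=(w_1,\ldots,w_m)$ is added to or removed from a relation $R^S$. This can happen only through the flipping, for some $v_0$, of an atomic subformula $R(t_1,\ldots,t_m)$ appearing in such a clause, and that flip requires $\sigma_{v_0}(t_j)=w_j$ for every~$j$.

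The key structural observation is that for each such atomic subformula, the ancestor of any affected $v_0$ at depth $d_v+1$ is uniquely determined by $\tau$ and the matching of its terms. Indeed, claim $(\refclaim{stat2})$ from the proof of Lemma~\ref{lm-phi-T} states that every $X_0$-term is either $x_0$ or of the form $f(x_0)$; combined with the depth constraints placed by the template $T$ on the vertices $\alpha_T(t_j)$, and with the fact that the elements of $\tau$ lie on a common root-to-leaf path of $F$ (by the guarding hypothesis), a short case analysis shows that the ancestor at depth $d_v+1$ of every possibly-affected $v_0$ is a single vertex of $F$, namely either the element of $\tau$ constraining the $X_0$-term, its unique ancestor at depth $d_v+1$, or the unique ancestor of the vertex of $F$ obtained from $\tau$ by the appropriate iterate of $p^F$. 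Summed over the constantly many atomic subformulas and matchings, this gives a constant-size set $W$ of vertices $w$ whose membership in $U_0^{\overline S}$ can flip. To execute the update in constant time and expose these flips, I would maintain auxiliary data alongside $\overline S$: for each $v_0$ of depth $d_{x_0}$ satisfying (a), a constant-length bitmask recording the truth values under $\sigma_{v_0}$ of all atomic subformulas in clauses with an $X_0$-term; and, for each vertex $u$ of $F$ and each of the constantly many profiles $P$ of those bitmasks, a counter $c(u,P)$ giving the number of $v_0\in\mathrm{subtree}(u)$ of depth $d_{x_0}$ with (a) and profile $P$, so that $U_0^{\overline S}(w)\Leftrightarrow \sum_{P\text{ satisfying (b)}}c(w,P)>0$. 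Processing an update then consists in looking up $c(w_j,P)$ for the constrained vertex $w_j$ of the affected atomic subformula, shifting counts along the constant-length root-to-$w_j$ path, and recomputing, for each $w\in W$, whether the sum passed zero; each such flip of $U_0^{\overline S}$ triggers at most one increment or decrement of a child-count counter at the parent, whose passage through each threshold $i\in\{1,\ldots,K\}$ adds or removes at most one singleton from $U_i^{\overline S}$. Counters at vertices strictly below $w_j$ are maintained by lazy propagation, with a pending profile-swap stored at $w_j$ that is applied only when a subsequent update's $w_j'$ traverses that subtree, keeping the worst-case cost per update constant.

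The main obstacle is the case analysis of the third paragraph: one must verify, across every possible shape of the $X_0$-term in every atomic subformula of every clause, that the ancestor at depth $d_v+1$ of every possibly-affected $v_0$ is uniquely determined by $\tau$. This relies delicately on $(\refclaim{stat2})$, on the depth information carried by the template, and on the guarding of $S$ by the closure of $F$; once it is in hand, the auxiliary bookkeeping described above is routine.
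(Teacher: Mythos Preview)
Your overall strategy coincides with the paper's: reuse the construction of Lemma~\ref{lm-phi-T} verbatim, observe that condition~(a) (the admissible embedding of $T_0$) is immutable under relational updates, and argue that condition~(b) can flip only for $v_0$'s whose ancestor at depth $d_v+1$ is the unique vertex on the root-path through $\tau$; hence at most one $w$ per template can enter or leave $U_0^{\overline S}$, and the $U_i$'s are then updated via child-counters. That part of your argument is correct and matches the paper almost verbatim.

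Where you diverge is in \emph{how} to decide, for the identified $w$, whether $U_0^{\overline S}(w)$ still holds after the update. The paper does not build any new machinery here: it simply invokes the data structure of Lemma~\ref{lm-databe}. The point is that the existence of a suitable $v_0$ in the subtree of $w$ is exactly the existence of an induced $L$-substructure of bounded size (the path from the root to $w$ together with a small piece below witnessing the $X_0$-clauses), and Lemma~\ref{lm-databe} is already set up to answer such queries in $O(1)$ time with $O(1)$-time updates. This makes the whole proof two paragraphs long.

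Your alternative---per-$v_0$ bitmasks, profile counters $c(u,P)$, and lazy propagation of profile-swaps---can be made to work, but it is essentially a specialised reimplementation of Lemma~\ref{lm-databe}, and your description skips the parts that need the most care. In particular: (i) in the subtree case (all $X_0$-terms are $f(x_0)$ with $\alpha_T(f(x_0))$ a proper ancestor of $\alpha_T(x_0)$), the swap permutes profiles uniformly across \emph{all} $v_0$ in the subtree of $w^*$ satisfying~(a), which is why lazy propagation applies at all---you should say this, since it is the crux; (ii) pushing pending permutations down in a tree of unbounded fan-out cannot touch all children, so you need the ``accumulate along the root-to-$u$ path'' variant rather than the segment-tree variant; (iii) the bitmasks themselves must be kept lazy (via a pending XOR mask per vertex) or the subtree swap is not $O(1)$; and (iv) the interaction between a subtree swap and a subsequent single-$v_0$ update requires translating profiles through the composed pending permutations along the path. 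None of these is fatal, but as written your ``lazy propagation'' sentence hides all of them. Citing Lemma~\ref{lm-databe}, as the paper does, avoids the issue entirely.
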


\begin{proof}
\rt{We construct $\overline L, \overline\varphi_T$ and $\overline S$
as in the proof of Lemma~\ref{lm-phi-T}.}
We need to describe how the relations $U^{\overline{S}}_0,\ldots,U^{\overline{S}}_k$
can be updated in constant time after adding/removing a tuple to/from a relation of $S$.
Let us consider in more detail the main case analyzed in the proof of Lemma~\ref{lm-phi-T};
we leave to the reader the case mentioned at the end of the proof of Lemma~\ref{lm-phi-T} as
the arguments are completely analogous.
Recall (see the proof of Lemma~\ref{lm-phi-T} for notation) that
$U_0(w)$ is the unary relation containing elements $w$ of $F$ at depth $d_v+1$ such that
the subtree of $w$ in $F$ contains an element $v_0$ at depth $d_{x_0}$ (in $F$) with the following properties:
\begin{itemize}
\item there is a $(v_0)$-admissible embedding of the template $T_0$ in $F$ for $S$, and
\item all clauses appearing in the conjunction $\varphi'$ with at least one term from $X_0$ are true with $x_0=v_0$ and
      the terms $t\in X'_0$, say $\alpha_T(t)=q^k(\alpha_T(x_0))$, 
      replaced with $\rt{(}p^{\overline{S}}\rt{)^k}(v_0)$.
\end{itemize}
Since none of the functions of $S$ changes, the first condition cannot change
when adding/removing a tuple to/from a relation of $S$.
The second condition can change only when a tuple containing a term from $X_0$
is added/removed to/from a relation.
But this can result only in a single element (the one at depth $d_v+1$ on the path
in $T$ containing all the elements of the altered tuple) to be added to or removed from $U_0$.
Based on the tuple we add or remove, we can identify this vertex.
The existence of $v_0$ (which must be at depth determined by the template $T_0$)
is tested using the data structure introduced in the proof of Lemma~\ref{lm-databe}:
the values of all terms from $X_0$ with $x_0=v_0$ are in the subtree of $w$ and
those in $X'_0$ are on the path from $w$ to the root. The existence of $v_0$ is equivalent
to the existence of an induced subtree comprised of the path from the root to $w$ and
a subtree of $w$ witnessing that the clauses listed in the second condition are satisfied.
The data structure introduced in the proof of Lemma~\ref{lm-databe} allows testing the existence
of one of these ``witnessing'' subtrees in constant time (assuming the formula $\varphi$ is fixed).
So, we can update the relation $U_0$ in constant time.

Once the relation $U_0$ is updated, the relations $U_1,\ldots,U_k$
can be updated in constant time as well:
we keep a counter at every vertex at depth $d_v$
determining the number of children of that vertex in $U_0$.
\end{proof}

\rt{Next we prove} a dynamized version of Theorem~\ref{thm-bounded+}
 (we state the theorem in the variant with no free variables for simplicity).

\begin{theorem}
\label{thm-dynamic+}
Let $\GG$ be a class of graphs with bounded expansion, $L$ a language,
$\varphi$ an $L$-sentence, \rt{and let $\overline L$, $\overline\GG$,
$\overline\varphi$ and $\cal A$ be as in Theorem~\ref{thm-bounded+}.
Let $A$ and $B$ be $L$-structures guarded by a graph $G\in\GG$, 
let $B$ be obtained from $A$ by adding or deleting a tuple $\tau$ from 
the relation $R^A$ of $A$, and
let $\overline A,\overline G$ and $\overline B,\overline G$ 
be the output of the algorithm $\cal A$
when given $A,G$ and $B,G$, respectively, as input.
Then $\overline B$ can be computed from the knowledge of $\overline A$ and
$\tau$ in constant time.}
\end{theorem}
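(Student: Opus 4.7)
The plan is to revisit the proof of Theorem~\ref{thm-bounded+} and verify that each construction step admits constant-time dynamic updates. The algorithm $\cal A$ from that theorem is a bounded composition of two operations: (i) simplification via Lemma~\ref{lm-simple}, and (ii) quantifier elimination via Lemma~\ref{lm-elim}. Crucially, both operations preserve the guarding graph up to augmentations that depend only on $G$ (and the fixed formula $\varphi$), not on the relational content of $A$. Since $G$ does not change between $A$ and $B$, the successive augmentations, $K$-colorings, depth-certifying forests $F_\alpha$, and the indexing set $\Lambda$ of coloring patterns are all fixed; we may assume they were computed once during the initial invocation of $\cal A$ and stored alongside $\overline A$.

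It then suffices to propagate the single-tuple change $\tau$ through the two elimination operations in constant time. The simplification of Lemma~\ref{lm-simple} only introduces new function symbols whose interpretations are determined by compositions of existing function symbols in $A$; since an update to $A$ affects a relation $R^A$ and not any function, the corresponding update to the simplified structure is the same tuple added to or removed from the corresponding relation. This is trivially constant time. For the quantifier elimination step of Lemma~\ref{lm-elim}, recall that for each of the constantly many $\alpha \in \Lambda$, one constructs a restricted substructure $A_\alpha$ and then applies Lemma~\ref{lm-phi-T} to each $\varphi'$-template to obtain $\overline A_\alpha$; the final output is a symbolic union over $\alpha$ and over templates. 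When $\tau$ is added to or removed from $R^A$, for each $\alpha$ the structure $A_\alpha$ either undergoes the same single-tuple change (namely when every element of $\tau$ has a color in $\alpha(X)$) or is unaffected. In either case we invoke the already established dynamic Lemma~\ref{lm-phi-T-dynamic} in place of Lemma~\ref{lm-phi-T} to update $\overline A_\alpha$ in constant time. Since $|\Lambda|$ and the number of templates are both constants (Proposition~\ref{prop-fin-template}), the total time for one quantifier elimination step is constant.

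The theorem then follows by iterating this constant-time update through the bounded number of quantifier eliminations needed to reduce $\varphi$ to a quantifier-free formula, each iteration composing a constant-time update with the previous one. The main technical content—where Lemma~\ref{lm-phi-T-dynamic} does the real work—is the maintenance of the singleton unary relations $U_0^{\overline S}, \ldots, U_K^{\overline S}$ introduced by Lemma~\ref{lm-phi-T}: an update to a relation of $A$ may alter membership of at most one element in $U_0$ (the unique ancestor, at the prescribed depth of $F_\alpha$, of an element on the path containing all elements of $\tau$), and whether that element should remain in $U_0$ can be rechecked in constant time using the data structure from Lemma~\ref{lm-databe} applied to the subtree rooted at that ancestor. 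Once $U_0$ is maintained, the remaining relations $U_1,\ldots,U_K$ are derived from per-vertex child counters and updated in constant time as well. Composing all these constant-time updates yields the desired computation of $\overline B$ from $\overline A$ and $\tau$ in constant time.
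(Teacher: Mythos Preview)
Your proposal is correct and follows essentially the same approach as the paper's proof: revisit the inductive construction of Theorem~\ref{thm-bounded+}, observe that all graph-side data (augmentations, colorings, forests $F_\alpha$, the set $\Lambda$) depend only on $G$ and hence are static, note that Lemma~\ref{lm-simple} only adds functions and so is unaffected by relational updates, and replace Lemma~\ref{lm-phi-T} by the dynamic Lemma~\ref{lm-phi-T-dynamic} in each quantifier-elimination step so that a single tuple change propagates as a bounded cascade of changes through the bounded number of iterations. The paper's proof is terser but makes the same points; your only slight imprecision is the phrase ``each iteration composing a constant-time update with the previous one,'' since one change at level $i$ may spawn several (but constantly many) changes at level $i+1$, so the cascade branches rather than composes linearly---still constant overall since the depth is bounded by the number of quantifiers.
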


\begin{proof}
\rt{The proof follows from the proof of Theorem~\ref{thm-bounded+},
using Lemma~\ref{lm-elim} with the proviso that in the proof of
Lemma~\ref{lm-elim}  we use Lemma~\ref{lm-phi-T-dynamic} instead of
Lemma~\ref{lm-phi-T}.}
An important fact is that $A$ and $B$ have the same interpretations of
functions.
We observe that every change in $S$ results
in a constant number of changes in $\overline{S}$ and these changes 
can be identified in constant time.
Hence, in the inductive proof of Theorem~\ref{thm-bounded+}, a single change in $A$
results in constantly many changes to the structure obtained in the first inductive step,
which result in constantly many changes to the structure obtained in the second inductive step (each change
in the structure obtained in the first inductive step yields only constantly 
many changes), and so on.
Since the time to update the final $\overline L$-structure $\overline A$ 
is constant for each of constantly many
choices that propagate through the induction from a single change of $A$, 
the overall update time is constant.
\end{proof} 

\rt{Theorem~\ref{thm-dynamic} follows immediately from Theorem~\ref{thm-dynamic+}.}


\begin{thebibliography}{99}

\bibitem{courcelle}
B.~Courcelle:
{\em The monadic second-order logic of graph 
I.~Recognizable sets of finite graphs},
Inform.~and Comput.~85 (1990), 12--75.

\bibitem{dawar}
A. Dawar, M. Grohe, S. Kreutzer:
{\em Locally excluding a minor},
in: Proc. LICS'07, IEEE Computer Society Press, 270--279.

\bibitem{dak}
A. Dawar, S. Kreutzer:
{\em Parameterized Complexity of First-Order Logic},
Electronic Colloquium on Computational Complexity, TR09-131 (2009).

\bibitem{bib-w1}
R.~G.~Downey, M.~R.~Fellows:
Fixed-parameter tractability and completeness II: On completeness of W[1],
Theoret.~Comput.~Sci.~141 (1995), 109--131.

\bibitem{bib-downey+fellows}
R.~G.~Downey, M.~R.~Fellows:
Parameterized complexity,
Springer, 1999.


\bibitem{wgsurvey}
Z.~Dvo\v{r}\'ak, D.~Kr\'al':
{\em Algorithms for classes of graphs with bounded expansion},
in: Proc. WG'09, LNCS vol.~5911, Springer, 2009, 17--32.

\bibitem{oursoda}
Z.~Dvo\v{r}\'ak, D.~Kr\'al', R.~Thomas:
{\em Coloring triangle-free graphs on surfaces},
in: Proc. SODA'09, ACM\&SIAM, 2009, 120--129.

\bibitem{DvoKraThoFOCS}
Z.~Dvo\v{r}\'ak, D.~Kr\'al', R.~Thomas:
{\em Deciding first-order properties for sparse graphs},
in: Proc. FOCS'10, IEEE, 2010, 133--142.

\bibitem{ourcol}
Z.~Dvo\v{r}\'ak, D.~Kr\'al', R.~Thomas:
{\em Three-coloring triangle-free graphs on surfaces VI. A linear-time algorithm},
in preparation.

\bibitem{bib-eppstein95}
D.~Eppstein:
{\em Subgraph isomorphism in planar graphs and related problems},
in: Proc.~SODA'95, ACM\&SIAM, 632--640.

\bibitem{bib-eppstein99}
D.~Eppstein:
{\em Subgraph isomorphism in planar graphs and related problems},
J.~Graph Algorithms Appl.~3 (1999), 1--27.

\bibitem{bib-eppstein00}
D.~Eppstein:
{\em Diameter and treewidth in minor-closed graph families},
Algorithmica 27 (2000), 275--291.

\bibitem{bib-flum+grohe}
J.~Flum, M.~Grohe:
Parameterized complexity theory,
Birkh\"auser, 2006.

\bibitem{fg}
M. Frick, M. Grohe:
{\em Deciding first-order properties of locally tree-decomposable structures},
J. ACM 48 (2001), 1184--1206.

\bibitem{gaif}
H. Gaifman:
{\em On local and non-local properties},
in: Proc. Herbrands Symp. Logic Coloq., North-Holland, 1982.

\bibitem{gjs}
M. Garey, D. Johnson, L. Stockmeyer:
{\em Some simplified NP-complete graph problems},
Theoret. Comput. Sci. 1 (1976) 237--267. 

\bibitem{grokremeth} M.~Grohe and S.~Kreutzer,
Methods for Algorithmic Meta Theorems,
In Martin Grohe, Johann Makowsky (Eds),
Model Theoretic Methods in Finite Combinatorics,
AMS Contemporary Mathematics Series 558, 
American Mathematical Society, 2011.

\bibitem{kreutzer-survey}
S.~Kreutzer:
{\em Algorithmic meta-theorems},
to appear in a workshop volume for a workshop held in Durham 2006
as part of the Newton institute special programme on Logic and Algorithms.
An extended abstract appeared in:
Proc. IWPEC'08, LNCS vol.~5018, Springer, 2008, 10--12.

\bibitem{bib-nes0}
J. Ne{\v s}et{\v r}il, P. Ossona de Mendez:
{\em Linear time low tree-width partitions and algorithmic consequences},
in: Proc. STOC'06, 391--400.

\bibitem{bib-nes1}
J. Ne{\v s}et{\v r}il, P. Ossona de Mendez:
{\em Grad and classes with bounded expansion I. Decompositions.},
Eur. J. Comb. 29 (2008), 760--776.

\bibitem{bib-nes2}
J. Ne{\v s}et{\v r}il, P. Ossona de Mendez:
{\em Grad and classes with bounded expansion II. Algorithmic aspects.},
Eur. J. Comb. 29 (2008), 777--791.

\bibitem{bib-nes3}
J. Ne{\v s}et{\v r}il, P. Ossona de Mendez:
{\em Grad and classes with bounded expansion III. Restricted graph homomorphism dualities.},
Eur. J. Comb. 29 (2008), 1012--1024.

\bibitem{bib-nowhere}
J.~Ne{\v s}et{\v r}il, P.~Ossona de Mendez:
{\em On nowhere dense graphs},
Eur. J. Comb. 32 (2011), 600--617.

\bibitem{osmenwood}
J. Ne{\v s}et{\v r}il, P. Ossona de Mendez and D.~Wood:
{\em Characterisations and Examples of Graph Classes with Bounded Expansion},
preprint (arXiv:0902.3265v2).

\bibitem{nessurvey}
J. Ne{\v s}et{\v r}il, P. Ossona de Mendez:
{\em Structural properties of sparse graphs},
in: M. Gr{\"o}tschel, G.~O.~H.~Katona (eds.):
Building Bridges Between Mathematics and Computer Science,
Bolyai Society Mathematical Studies vol.~19, Springer, 2008.

\bibitem{bib-niedermeier}
R.~Niedermeier:
Invitation to fixed-parameter algorithms,
Oxford University Press, 2006.


\bibitem{GraphMinors13}
N.~Roberson, P.~D.~Seymour:
{\em Graph minors. XIII: the disjoint paths problem},
J. Combin. Theory Ser. B 63 (1995), 65--110.

\bibitem{Seelinear} D.~Seese:
Linear time computable problems and first-order descriptions,
{\it Mathematical Structures in Computer Science}, {\bf 5} (1996), 505--526.

\bibitem{woodcl}
D.~Wood:
{\em On the maximum number of cliques in a graph},
Graphs Combin. 23 (2007), 337--352.
\end{thebibliography}
\end{document}